\tikzstyle normalnode=[circle, draw, fill=black, inner sep=0mm, minimum width=1.5mm]
\tikzstyle smallnode=[circle, draw, fill=black, inner sep=0mm, minimum width=0.9mm]
\tikzstyle directed=[postaction={decorate,decoration={markings, mark=at position .65 with {\arrow{angle 60}}}}]
\tikzstyle reverse directed=[postaction={decorate,decoration={markings, mark=at position .65 with {\arrowreversed{angle 60};}}}]
\Crefname{figure}{Figure}{Figures}
\crefname{section}{section}{sections}
\newtheorem{theorem}{Theorem}[section]
\newtheorem{conjecture}[theorem]{Conjecture}
\newtheorem{lemma}[theorem]{Lemma}
\newtheorem{proposition}[theorem]{Proposition}
\theoremstyle{remark}
\newtheorem{remark}[theorem]{Remark}
\newtheorem{claim}[theorem]{Claim}
\newenvironment{proofofclaim}{\begin{proof}}{\end{proof}}
\newenvironment{proofsketch}{\begin{proof}[Sketch of proof]}{\end{proof}}
\newcommand{\card}[1]{|{#1}|}
\newcommand{\Nn}{\mathbb{N}}
\newcommand{\Cc}{\mathcal{C}}
\newcommand{\Dc}{\mathcal{D}}
\newcommand{\Fc}{\mathcal{F}}
\newcommand{\Hc}{\mathcal{H}}
\renewcommand{\Mc}{\mathcal{M}}
\newcommand{\Oc}{\mathcal{O}}
\newcommand{\Pc}{\mathcal{P}}
\newcommand{\Rc}{\mathcal{R}}
\newcommand{\Sc}{\mathcal{S}}
\newcommand{\Tc}{\mathcal{T}}
\newcommand{\Perm}{\mathfrak{S}}
\newcommand*\floor[1]{\left\lfloor #1 \right\rfloor}
\newcommand{\eqdef}{\coloneqq}
\newcommand{\setst}[2]{\textstyle \left\{#1 \ | \ #2 \right\}}
\newcommand{\fpt}{\ensuremath{\textrm{FPT}}\xspace}
\newcommand{\aw}{\ensuremath{\textrm{AW}[*]}\xspace}
\DeclareMathOperator{\tww}{tww}
\renewcommand{\le}{\leqslant}
\renewcommand{\ge}{\geqslant}
\renewcommand{\emptyset}{\varnothing}
\newcommand{\qlt}{\prec}
\newcommand{\qle}{\preceq}
\newcommand{\from}{\leftarrow}
\newcommand{\ot}{ot}
\title[First Order Logic and Twin-Width in Tournaments]
  {First Order Logic and Twin-Width \\ in Tournaments and Dense Oriented Graphs}
\author{Colin Geniet}
\author{Stéphan Thomassé}
\address{Institute for Basic Science (IBS), Discrete Mathematics Group, 55 Expo-ro, Yuseong-gu, Daejeon, South Korea 34126.}
\email{research@colingeniet.com}
\address{Laboratoire de l'Informatique du Parallélisme, ENS de Lyon, 46 allée d’Italie, 69364 Lyon CEDEX 07, France}
\email{stephan.thomasse@ens-lyon.fr}
\thanks{%
  Both authors were supported by the ANR projects TWIN-WIDTH (ANR-21-CE48-0014-01) and DIGRAPHS (ANR-19-CE48-0013-01).
  The first author was supported by the Institute for Basic Science (IBS-R029-C1).
}
\subjclass[2020]{Primary 05C20; Secondary 05C85, 03C13, 05C30}
\keywords{Tournaments, twin-width, first-order logic, model checking, NIP, small classes}
\begin{document}
\begin{abstract}
We characterise the classes of tournaments with tractable first-order model checking.
For every hereditary class of tournaments~$\mathcal T$, first-order model checking
is either fixed parameter tractable or $\textrm{AW}[*]$-hard.
This dichotomy coincides with the fact that~$\mathcal T$ has either bounded or unbounded twin-width,
and that the growth of~$\mathcal T$ is either at most exponential or at least factorial.
From the model-theoretic point of view, we show that NIP classes of tournaments coincide with bounded twin-width.
Twin-width is also characterised by three infinite families of obstructions:
$\mathcal T$ has bounded twin-width if and only if it excludes at least one tournament from each family.
This generalises results of Bonnet et al.\ on ordered graphs.

The key for these results is a polynomial time algorithm that takes as input a tournament~$T$
and computes a linear order~$<$ on~$V(T)$ such that the twin-width of the birelation~$(T,<)$
is at most some function of the twin-width of~$T$.
Since approximating twin-width can be done in polynomial time for an ordered structure~$(T,<)$,
this provides a polynomial time approximation of twin-width for tournaments.

Our results extend to oriented graphs with stable sets of bounded size,
which may also be augmented by arbitrary binary relations.
\end{abstract}
\maketitle

\section{Introduction}
\subsection{Parameterised problems in tournaments}
Tournaments can represent the outcome of a ranking procedure on a set of candidates, which in general need not be a total order.
A well-known example is the Condorcet voting paradox, where three referees
whose preference lists are~$(A,B,C)$, $(B,C,A)$, and~$(C,A,B)$
lead to a cycle~$A \from B \from C \from A$ in the `preferred over' relation.
This interpretation leads to classical algorithmic problems when trying to choose winners.
For instance the \textsc{Dominating Set} (DS) problem on tournaments asks for a subset~$D$
that is collectively preferred to any other candidate,
in the sense that for any~$y \not\in D$, there is~$x \in D$ such that~$x$ is preferred to~$y$.
The \textsc{Feedback Vertex Set} (FVS) problem attempts to build a preference order---i.e.\ an acyclic graph---%
over the candidates by ignoring a subset of candidates.

One can consider the parameterised version of these problems, asking for a solution of size~$k$---%
they are denoted by $k$-DS, $k$-FVS.
A problem parameterised by~$k$ is \emph{fixed parameter tractable} (\fpt)
if it admits an algorithm running in time~$f(k) \cdot n^{O(1)}$ where~$n$ is the size of the input.
For instance $k$-FVS is well-known to be \fpt for tournaments (see Kumar and Lokshtanov~\cite{Kumar16}),
whereas $k$-DS is probably not \fpt.
However general tournaments may not be representative of usual instances,
for example, majority voting tournaments involving~$2r+1$ referees form a very restricted class.
A cornerstone paper by Alon et al.~\cite{alon2006dominating}, based on Vapnik-Chervonenkis dimension,
shows that $k$-DS is trivially \fpt on $(2r+1)$-majority tournaments because the size of a minimum dominating set is at most some function~$f(r)$.
This exemplifies how difficult problems can become easy on restricted classes,
and how complexity parameters (here bounded VC-dimension) can help to define classes of tournaments.

To put these questions in a much broader perspective,
observe that the previous problems can be expressed in first-order logic (FO).
The existence of a $k$-DS is described by the formula
\[ \exists x_1,x_2,\dots,x_k. \forall y. \ (y \to x_1) \lor \dots \lor (y \to x_k) \]
(assuming~$x \to x$ for all~$x$ for simplicity).
That $k$-FVS is also expressible in first-order logic is worth mentioning,
as this is only true in tournaments, and not in general graphs.
It is based on the simple remark that a tournament is acyclic if and only if
it is transitive, i.e.\ it has no directed 3-cycle---the latter condition is easily expressed in FO.
This makes $k$-DS and $k$-FVS very specific instances of the \textsc{FO Model Checking} (or FOMC) problem,
which given as input a tournament~$T$ and a first-order formula~$\phi$, asks if~$T \models \phi$.
FO model checking is \aw-hard on the class of all graphs~\cite{downey1996queries}, and thus conjectured not to be \fpt.
Using back-and-forth encodings expressible in first-order logic, one can show that the problem is just as hard in general tournaments.
Thus, one cannot hope for substantial algorithmic results on first-order model checking for arbitrary tournaments.
Instead, we study the classes of tournaments in which the problem is simple,
meaning the classes of tournaments that admit an \fpt algorithm for FO model checking parameterised by the size of the given formula.

\subsection{Main results}
We prove a dichotomy: in any class of tournaments~$\Tc$ (closed under taking subtournaments), FOMC is either \fpt or \aw-hard.
More precisely, we show that the key of this dichotomy is \emph{twin-width},
a complexity parameter introduced by Bonnet et al.~\cite{twin-width1}, defined on arbitrary binary structures.
If~$\Tc$ has bounded twin-width, then FOMC in~$\Tc$ is \fpt,
whereas it is \aw-hard when~$\Tc$ has unbounded twin-width.
This dichotomy also coincides with a model theoretic characterisation:
the class~$\Tc$ has bounded twin-width if and only if it is (monadically) NIP,
which roughly speaking means that arbitrary graphs cannot be described
by use of tournaments in~$\Tc$, and a fixed first-order formula.
In the vocabulary of~\cite{twin-width8}, this proves that tournaments are \emph{delineated}:
every subclass is NIP if and only if it has bounded twin-width,
answering a question posed in \cite{twin-width8}.
The equivalence between NIP and \fpt FO model checking also
confirms the \emph{nowhere FO dense} conjecture of Gajarsk\'y et al.~\cite{Gaj2020} in the case of tournaments.

The dichotomy for FO model checking in tournaments also coincides
with a gap in the growth function~$g_{\Tc}(n)$ of the class~$\Tc$,
defined as the number of tournaments of~$\Tc$ with~$n$ vertices counted up to isomorphism.
We show that~$\Tc$ has bounded twin-width if and only if its growth is at most~$c^n$ for some constant~$c$.
To the contrary, when twin-width is not bounded, the growth is at least $(\floor{n/2}-1)!$.
This exponential/factorial gap can be seen as a generalization of
the Marcus-Tardos theorem on permutations avoiding a fixed pattern~\cite{MarcusT04}.
It may also be compared to results of Boudabbous and Pouzet~\cite{boudabbous2010morphology}
showing that hereditary classes of tournaments have growth either at most polynomial or at least exponential---%
the growth being polynomial if and only if the class does not contain
arbitrarily large tournaments with no acyclic modules.

The following theorem summarises our results.
\begin{theorem}
  \label{thm:main-thm}
  Let~$\Tc$ be a hereditary class of tournaments.
  Assuming\footnotemark{} that $\fpt \neq \aw$, the following are equivalent:
  \footnotetext{The assumption is only used for conditions~\labelcref{item:main-thm-fpt,item:main-thm-aw-hard}.}
  \begin{enumerate}[noitemsep]
    \item $\Tc$ has bounded twin-width,
    \item \label{item:main-thm-fpt} FO model checking in~$\Tc$ is \fpt,
    \item \label{item:main-thm-aw-hard} FO model checking in~$\Tc$ is not \aw-hard,
    \item $\Tc$ does not FO interpret the class of all graphs,
    \item $\Tc$ is monadically NIP, i.e.\ does not FO transduce all graphs,
    \item the growth of $\Tc$ is at most~$c^n$ for some constant~$c$,
    \item the growth of $\Tc$ is less than~$(\floor{\frac{n}{2}}-1)!$.
  \end{enumerate}
\end{theorem}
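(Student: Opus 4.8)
The plan is to arrange the seven conditions into a short cycle of implications, with the single genuinely new ingredient being the polynomial-time ordering algorithm highlighted in the abstract, which lets me transfer the corresponding dichotomy for \emph{ordered} binary structures (Bonnet et al.) to the unordered setting of tournaments. Write $\mathrm{tww}$ for twin-width; the key algorithm takes a tournament $T$ and outputs, in polynomial time, a linear order $<_T$ on $V(T)$ with $\mathrm{tww}(T,<_T)\le f(\mathrm{tww}(T))$ for a fixed $f$. I would start from two simple observations: $\mathrm{tww}(T)\le\mathrm{tww}(T,<_T)$, because a contraction sequence of $(T,<_T)$ is a fortiori one of $T$ (forgetting $<_T$ only deletes red edges); and the twin-width of a hereditary class equals the supremum of the twin-widths of its members, since induced substructures do not increase twin-width. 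Together with the algorithm, these show that $\mathcal T$ has bounded twin-width if and only if the hereditary closure $\mathcal D$ of $\{(T,<_T):T\in\mathcal T\}$ has.

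Next I would treat the implications out of~(1). For $(1)\Rightarrow(2)$: given $T\in\mathcal T$, compute $<_T$, then run the polynomial-time algorithm of Bonnet et al. that, given an ordered binary structure, either produces a contraction sequence of width bounded in terms of $\mathrm{tww}(\mathcal T)$ or certifies large twin-width; in our case it must succeed, and by the monotonicity above its output is also a bounded-width contraction sequence of $T$, to which the FPT algorithm for FO model checking on classes of bounded twin-width (given a contraction sequence) applies. This simultaneously yields the polynomial-time approximation of twin-width for tournaments promised in the abstract. The implications $(1)\Rightarrow(5)$ and $(1)\Rightarrow(6)$ are the general facts that bounded twin-width implies monadic NIP and implies smallness (growth at most $c^n$). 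The remaining easy steps are: $(2)\Rightarrow(3)$, immediate under $\mathrm{FPT}\neq\mathrm{AW}[*]$; $(5)\Rightarrow(4)$, since every interpretation is a transduction; $(6)\Rightarrow(7)$, because $c^n<(\lfloor n/2\rfloor-1)!$ for all large $n$; and $(3)\Rightarrow(4)$, as the contrapositive of the observation that if $\mathcal T$ (efficiently) FO-interprets all graphs then FO model checking on $\mathcal T$ inherits the $\mathrm{AW}[*]$-hardness of FO model checking on all graphs.

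To close the cycle I would prove $(4)\Rightarrow(1)$ and $(7)\Rightarrow(1)$ at once, contrapositively. Assume $\mathrm{tww}(\mathcal T)$ is unbounded; then $\mathcal D$ has unbounded twin-width, so the ordered dichotomy of Bonnet et al. applies to $\mathcal D$: it FO-interprets all graphs and its growth exceeds $(\lfloor n/2\rfloor-1)!$ for infinitely many $n$. The point I really want to extract, though, is the structural one: $\mathcal D$, and hence $\mathcal T$, must contain arbitrarily large members of each of the three families of obstruction tournaments described later in the paper, these arising from the large grid-like matrix minors that unbounded twin-width forces (Marcus--Tardos together with the matrix machinery of Bonnet et al.). It then remains to check that each such family, once present inside $\mathcal T$, already (i) admits an efficient FO-interpretation of the class of all graphs --- giving $(4)\Rightarrow(1)$ and the hardness reduction for $(3)$ --- and (ii) provides, for infinitely many $n$, at least $(\lfloor n/2\rfloor-1)!$ pairwise non-isomorphic tournaments on $n$ vertices --- giving $(7)\Rightarrow(1)$. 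Both are direct constructions on the explicit obstructions.

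The main obstacle I anticipate is precisely this final transfer from ordered structures back to tournaments: $\{(T,<_T):T\in\mathcal T\}$ is not hereditary, and $<_T$ is not in general FO-definable from $T$, so one cannot naively pull an interpretation of all graphs on $\mathcal D$ back to $\mathcal T$. Routing through the explicit obstruction families is what makes the transfer legitimate, since the interpretation of all graphs and the factorial count are then established on concrete order-free tournaments. Consequently the real technical work --- and the one place where the ordering algorithm and the matrix-minor arguments have to be genuinely combined --- is the structural statement that a hereditary class of tournaments of unbounded twin-width must contain arbitrarily large members of all three obstruction families; everything else is either bookkeeping or an appeal to the ordered-structure theory.
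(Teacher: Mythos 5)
Your overall architecture is the same as the paper's: use the BST ordering (computable in polynomial time, with $\tww(T,<_T)\le f(\tww(T))$) plus the ordered-structure approximation and model-checking algorithms to get the implications out of bounded twin-width, quote the known facts that bounded twin-width implies monadic NIP and at most $c^n$ growth, and close the loop contrapositively by extracting, from any hereditary class of unbounded twin-width, explicit obstruction tournaments that efficiently interpret all graphs (hence \aw{}-hardness and failure of NIP/interpretation-freeness) and realise at least $(\floor{n/2}-1)!$ non-isomorphic tournaments. You also correctly identify why one cannot simply pull the ordered-graph dichotomy back along $T\mapsto(T,<_T)$, and that the real work is the structural extraction theorem; this matches the paper, which proves exactly that extraction via chain quasi-orders, rank divisions and Ramsey arguments.

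Two corrections, though. First, your key structural claim is overstated: a hereditary class of tournaments with unbounded twin-width need \emph{not} contain all three families $\Fc_=,\Fc_\le,\Fc_\ge$; it is only guaranteed to contain \emph{one} of them. Indeed the three classes are pairwise incomparable (e.g.\ $\Tc=\Fc_=$ has unbounded twin-width but contains neither $\Fc_\le$ nor $\Fc_\ge$), so the statement you propose to prove is false as written; fortunately the weaker ``one of the three'' version is what the paper proves and is all your argument uses, since your points (i) and (ii) are verified family by family. Second, your step from condition (3) to condition (4) is loose: a bare FO interpretation of all graphs does not by itself yield an \fpt{} reduction, hence no \aw{}-hardness; one needs the interpretation to be \emph{efficient}, i.e.\ preimages computable in polynomial time. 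Your argument survives because the hardness you actually need for closing the cycle is obtained through the explicit obstruction families, which do efficiently interpret all graphs (as in the paper), but the implication ``(3)$\Rightarrow$(4)'' should either be dropped as redundant or rephrased so that the contrapositive goes through unbounded twin-width and the obstructions rather than through condition (4) alone.
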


These equivalences are completed by three minimal classes of obstructions,
characterising twin-width in terms of excluded substructures.
These excluded tournaments are encodings of arbitrary permutations.
\begin{theorem}
  \label{thm:obstructions}
  There are three obstruction classes~$\Fc_=,\Fc_\le,\Fc_\ge$ such that
  a hereditary class~$\Tc$ of tournaments has unbounded twin-width if and only if
  one of~$\Fc_=,\Fc_\le,\Fc_\ge$ is contained in~$\Tc$.
\end{theorem}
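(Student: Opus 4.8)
\emph{Setup and overview.} The plan is to reduce to the case of ordered binary structures, where twin-width is already well understood, and then read the resulting obstructions back as (unordered) tournaments. Each $\Fc_\bullet$ will be the hereditary closure of a family $(T^\bullet_\sigma)$ indexed by all permutations $\sigma \in \Perm_n$ over all $n$, where $T^\bullet_\sigma$ encodes $\sigma$ on two transitive ``index'' tournaments of size $n$, with the arcs between them governed by one of three threshold-type rules; the subscripts $=$, $\le$, $\ge$ record which rule, or equivalently how the linear order used to build the encoding interacts with the two part orders (agreement, or one of the two ways it can fail). That each of the three has unbounded twin-width while containing neither of the other two is what makes the list correct and irredundant.

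\emph{The easy direction.} Suppose $\Fc_\bullet \subseteq \Tc$. Since any twin-width bound for $\Tc$ restricts to one for $\Fc_\bullet$, it suffices to show that $\Fc_\bullet$ has unbounded twin-width. For this I would exhibit a fixed first-order interpretation that recovers, from $T^\bullet_\sigma$, the $n \times n$ permutation matrix of $\sigma$ together with its grid structure. The class of all such grid permutation matrices has unbounded twin-width by the grid (Marcus--Tardos-type) theorem for matrices of Bonnet et al., and bounded twin-width is preserved under first-order interpretations \cite{twin-width1}; hence $\Fc_\bullet$, and therefore $\Tc$, has unbounded twin-width.

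\emph{The hard direction.} Conversely, assume $\Tc$ has unbounded twin-width. For each $T \in \Tc$ let $<_T$ be the linear order produced by the algorithm of this paper, so $\tww(T,<_T) \le f(\tww(T))$ for a fixed $f$. As $T$ is a reduct of $(T,<_T)$ we have $\tww(T) \le \tww(T,<_T)$, so the hereditary closure $\Dc$ of $\{(T,<_T) : T \in \Tc\}$ is a class of ordered tournaments of unbounded twin-width. Encoding an ordered tournament as a matrix over the three-letter alphabet $\{\text{forward arc, backward arc, diagonal}\}$ indexed by $<_T \times <_T$, the matrix grid theorem yields, for every $k$, a member of $\Dc$ containing a $k$-mixed minor. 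A Ramsey-type cleaning then extracts arbitrarily large \emph{canonical} configurations of fixed shape, drawn from a finite list --- this is exactly the argument used for ordered graphs by Bonnet et al., transported to the tournament alphabet. Forgetting the order $<_T$, the underlying tournaments of these canonical configurations fall into exactly three isomorphism-closed families, which are $\Fc_=$, $\Fc_\le$, and $\Fc_\ge$; since $\Tc$ is hereditary and contains all these subtournaments, it contains one of the three. Minimality (irredundancy) then follows by comparing the three encodings directly.

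\emph{The main obstacle.} The delicate step is the cleaning-and-classification: converting ``unbounded twin-width'' into ``contains one of three explicit permutation-encoding families''. This needs (i) the quantitative matrix grid theorem to produce large mixed minors, (ii) a careful Ramsey argument to homogenise a mixed minor into a canonical permutation encoding, and (iii) a case analysis showing that the canonical ordered forms collapse, after forgetting the order, to precisely three tournament families and not more. One must also ensure compatibility between the algorithmic orders $<_T$ and this extraction, i.e.\ that the order implicitly reconstructed inside a canonical configuration is indeed one of the three relations to the part orders defining $\Fc_=, \Fc_\le, \Fc_\ge$; this interplay between the algorithmic linear order and the combinatorics of the ordered-graph obstructions is where the argument is most intricate.
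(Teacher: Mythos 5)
There is a genuine gap, and it sits exactly at the step you flag as ``the main obstacle'' but then wave through. In your hard direction you pass to the ordered class $\Dc$, apply the grid/mixed-minor machinery of ordered binary structures, extract canonical ordered configurations, and then claim that ``forgetting the order $<_T$'' leaves tournaments falling into exactly three families. Nothing in your argument supports that last step: the canonical obstructions produced by the ordered-graph theory are obstructions of the \emph{ordered} structure, and the permutation they encode may live in the interaction between the external order and the arc relation rather than inside the tournament itself. After deleting $<_T$ there is no reason the extracted vertex set induces anything like $\Fc_=(\sigma)$, $\Fc_\le(\sigma)$ or $\Fc_\ge(\sigma)$; indeed the order cannot in general be FO-recovered from the tournament (the iterated lexicographic products of the $3$-cycle show that no fixed transduction defines an order, even at twin-width~$1$), so one cannot argue that the order is ``implicitly reconstructed inside a canonical configuration.'' The paper's entire technical contribution is the mechanism that fills this hole: the order is taken to be a BST order, and \cref{lem:bst-partition-extraction} shows that any family of intervals of a BST order can be refined so as to be non-overlapping for a \emph{chain quasi-order}, a structure definable from a chain inside the tournament itself. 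This is what converts the rank divisions of $A_{(T,<_S)}$ into \emph{chain order representations} of permutations (\cref{clm:rich-chain-order,clm:chain-permutation1}), after which two further rounds of the product Ramsey lemma (\cref{lem:perm-ramsey}, via \cref{clm:chain-permutation2,clm:chain-permutation3}) turn the chains and the leftover unconstrained arcs into genuine subtournaments $\Fc_R(\sigma)$ with large grids; the case analysis there is also finer than ``three shapes'' (six matrix classes $\Mc_s$, reversals, and the edges between $A$ and the ordering chain all have to be homogenised). Without an argument of this kind, your ``forgetting the order'' step fails.

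A secondary remark: you invoke ``$\tww(T,<_T)\le f(\tww(T))$'' for the algorithmic order as a known fact. In the paper this is \cref{lem:BST-twinwidth}, which is \emph{deduced from} \cref{thm:forbidden-tournaments-intro,thm:forbidden-tournaments-bad}, i.e.\ from the very obstruction theorem you are proving, so using it here would be circular. Fortunately your argument only needs the trivial inequality $\tww(T)\le\tww(T,<_T)$, so this is a presentational slip rather than a logical one; the substantive missing ingredient remains the BST/chain-quasi-order extraction described above. Your easy direction (interpret permutation encodings from $\Fc_\bullet$, use preservation of bounded twin-width under interpretations) does match the paper.
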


Finally, we show that there is a fixed parameter tractable algorithm
that approximates twin-width of tournaments up to some function.
\begin{restatable}{theorem}{apxalgo}
  \label{thm:approx-algo}
  There are functions~$f,g : \Nn \to \Nn$ and an algorithm that given a tournament~$T$ with twin-width~$k$,
  produces a witness that the twin-width of~$T$ is at most~$f(k)$ in \fpt time~$g(k) \cdot \card{T}^{O(1)}$.
\end{restatable}
This algorithm is crucial to obtain the \fpt FO model checking algorithm in classes with bounded twin-width.

The former three theorems generalise to oriented graphs with independence number bounded by some fixed constant.
Furthermore, the tournaments (or oriented graphs with bounded independence)
can be augmented by any fixed number of arbitrary binary relations:
the theorems still hold for the resulting class of binary relational structures.
In particular, this work generalises the analogous results of~\cite{twin-width4}
on ordered graphs and ordered binary structures: we replace linear orders with tournaments.
In these generalisations, however, the classes of obstructions for \cref{thm:obstructions}
are more numerous, and we do not give precise descriptions.

\subsection{Overview of the proof}\label{sec:proof-overview}
A fundamental idea regarding twin-width is that upper bounds on twin-width
can be witnessed by orders on vertices that exclude grid-like structures in the adjacency matrix.
This idea appears in the founding works of Guillemot and Marx~\cite{Guillemot14} and Bonnet et al.~\cite{twin-width1},
and the relation between twin-width and orders has been deeply explored in~\cite{twin-width4}.
However it is difficult to provide witnesses for lower bounds on twin-width using this approach:
one needs to somehow prove that \emph{all} orders contain grids.
To this purpose, we want to construct an order~$<$ in any given tournament~$T$,
such that when~$T$ has small twin-width, $<$ is a witness of this fact.
That is, if~$T$ has twin-width~$k$, then the bi-relation~$(T,<)$ should have twin-width at most~$f(k)$ for some function~$f$.

A tentative approach to obtain such an order is to describe it in FO logic---%
that is, to give an FO transduction that produces a total ordering on any given tournament.
Indeed, FO transductions preserve twin-width up to some function~\cite[Theorem~39]{twin-width1}.
Thus, if there is a universal transduction~$\Phi$ computing some order~$<$ on any tournament~$T$,
then we obtain~$\tww(T,<) \le f(\tww(T))$ as desired.
With such a transduction, and some additional requirements such as~$<$ being efficiently computable,
it would be straightforward to obtain our results from the known case of ordered graphs~\cite{twin-width4}.
Unfortunately, this approach fails: to transduce a total order on
an iterated lexicographic product of the 3-cycle with itself,
one needs a first-order formula with size increasing in the number of iterations~\cite{bojanczyk2022noorder}.
This class of counter-examples also has twin-width~1,
meaning that it is not even possible to transduce an order
on the class of tournaments of twin-width~$k$ for a fixed~$k$.

Instead, our approach is the following:
we design a candidate total order~$<$ on~$T$, computable in polynomial time.
If the bi-relation~$(T,<)$ has small twin-width, we are done.
On the other hand, if~$(T,<)$ has large twin-width, then its adjacency matrix w.r.t.~$<$
must contain a large high-rank grid by~\cite{twin-width4}.
We then extract a subtournament~$T' \subset T$ in which the adjacency matrix w.r.t.~$<$
still has a substantial (but logarithmically smaller) high-rank grid,
and on which~$<$ is described by an FO transduction.
This is enough to witness that~$T$ has large twin-width.
Using Ramsey arguments, we are also able to extract from~$T'$
some obstruction from one of the classes~$\Fc_=,\Fc_\le,\Fc_\ge$.

The construction of the order is simple:
we consider a binary search tree (BST) on the vertex set, i.e.\ a tree in which
the left, resp.\ right, branch of a node~$x$ consists only of in-, resp.\ out-neighbours of~$x$.
The order~$<$ is the left-to-right order on nodes of the tree.
The extraction of smaller high-rank grids corresponds to the choice of a branch~$B$,
and the restriction of~$<$ to the relevant subtournament is a lexicographic ordering according to~$B$.

To summarise, the crucial property of BST orders is the following.
\begin{restatable}{lemma}{bsttww}
  \label{lem:BST-twinwidth}
  There is a function~$f$ such that for any tournament~$T$ and any BST order~$<$,
  \[ \tww(T,<) \le f(\tww(T)). \]
\end{restatable}
\Cref{lem:BST-twinwidth} implies~\cref{thm:approx-algo}:
to approximate the twin-width of~$T$, it suffices to compute any BST order---this takes polynomial time---and then apply the approximation algorithm for ordered structures~\cite[Theorem~7]{twin-width4}---this is fixed parameter tractable.
This last algorithm produces either a contraction sequence
(which is valid for~$(T,<)$ and a fortiori for~$T$)
or a witness that~$(T,<)$ has large twin-width,
which in turn implies that~$T$ has large twin-width by \cref{lem:BST-twinwidth}.

Our main technical result is about extracting the three classes of obstructions to twin-width~$\Fc_=,\Fc_\le,\Fc_\ge$, consisting of encodings of arbitrary permutations into tournaments.
See \cref{sec:obstructions} for the definition.
\begin{theorem}
  \label{thm:forbidden-tournaments-intro}
  Let~$\Tc$ be a hereditary class of tournaments such that for any~$k$,
  there is a tournament~$T \in \Tc$ and a BST ordering~$<$ of~$T$ with $\tww(T,<) > k$.
  Then~$\Tc$ contains one of the classes~$\Fc_=,\Fc_\le,\Fc_\ge$ as a subclass.
\end{theorem}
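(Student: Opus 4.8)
The plan is the following. Suppose $T\in\Tc$ carries a BST order $<$ with $\tww(T,<)$ as large as we like; we must find arbitrarily large subtournaments of $T$ that encode permutations in the sense of \cref{sec:obstructions}. The first move is entirely about ordered structures: since $\tww(T,<)$ is large, the results on twin-width of matrices from~\cite{twin-width4} yield, in the adjacency matrix of $(T,<)$ with rows and columns both ordered by $<$, a large high-rank grid --- a partition of $V(T)$ into consecutive intervals $I_1<\dots<I_k$ in which every block $I_a\times I_b$ has many pairwise distinct rows and many pairwise distinct columns. What remains is to transfer this flat, linearly ordered obstruction onto the binary search tree underlying $<$, and then to read a permutation out of it.

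The second and central step extracts from this grid a single branch of the tree that still carries a (logarithmically smaller) piece of it. Recall the shape of a BST order on a tournament: the choice of a root $r_0$ forces its left and right subtrees to be exactly $N^-(r_0)$ and $N^+(r_0)$, one recurses in each side, and $<$ is the induced left-to-right order. A branch $B=(r_0,r_1,\dots)$ is a sequence of recursive roots; at step $i$ it keeps one side and discards the other as a hanging subtree $S_i$. The feature we exploit is that the adjacency between $r_i$ and everything chosen after step $i$ is constant, determined only by whether $B$ turns left or right at step $i$, and so is the left-to-right order of the subtrees $S_i$; whereas the bipartite tournament between two distinct hanging subtrees is completely unconstrained. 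I would trace the grid down the tree --- at each node, its rank either concentrates on one side, into which we recurse while recording the other side as a hanging subtree, or is genuinely split between the two sides, which can only help --- always recursing into the side carrying the larger part. Routing the tree-oblivious grid down to one branch costs a logarithmic factor in the rank, but no more. The outcome is a branch $B$ together with a grid of unbounded --- though logarithmically smaller --- rank whose rows and columns lie in distinct hanging subtrees of $B$, and whose ambient order is exactly the lexicographic order dictated by $B$.

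For the third step we clean up and extract a permutation. Restricting to a suitable chunk of each of the two hanging subtrees that contain the rows, resp.\ the columns, of a surviving block, and to the backbone of branch-roots between them, the induced ordered tournament is governed only by the arbitrary bipartite tournament between those two chunks together with the fixed turns of $B$ at the two relevant steps. Ramsey/pigeonhole arguments normalise these turns and the finitely many relative positions that can occur --- extracting along the way the transitive pieces the encoding requires, at a further logarithmic cost --- and the combination of the normalised turns with the relative position produces exactly three cases, which is where $\Fc_=$, $\Fc_\le$, $\Fc_\ge$ come from. In each case the surviving object is a plain ordered encoding of a bipartite digraph between its sides, and its unbounded rank yields, by the permutation-extraction arguments behind Marcus--Tardos and the grid-rank machinery~\cite{MarcusT04,twin-width4}, arbitrarily large permutation submatrices; passing to a sub-grid one may take them \emph{universal}, containing every permutation of a prescribed size $n$ as a pattern. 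The encodings of all of $\Perm_n$ then occur as subtournaments of $T$; letting $\tww(T,<)$ run over unbounded values in $\Tc$, this holds for every $n$, and by pigeonhole always in the same one of the three cases, so that one of $\Fc_=,\Fc_\le,\Fc_\ge$ is a subclass of $\Tc$.

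The hard part is the second step: passing from the linear combinatorics of a high-rank grid in the adjacency matrix to the hierarchical binary search tree, choosing a branch so that a still-large grid ends up spread across distinct hanging subtrees while only a logarithmic amount of rank is given up, and arranging simultaneously that what survives carries precisely the lexicographic order induced by that branch, so that the object extracted at the end genuinely is an encoding in the sense of \cref{sec:obstructions}. Pinning the answer down to exactly three families --- matching the Ramsey cases of the third step to those definitions, and checking that nothing more is needed --- is the other delicate point.
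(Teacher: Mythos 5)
Your high-level trajectory (grid theorem of~\cite{twin-width4}, descent to a branch of the BST, Ramsey clean-up, three families) matches the paper's, but the central extraction step as you describe it does not work. You restrict to a \emph{single} surviving cell of the grid, with its rows inside one hanging subtree and its columns inside another, and then claim that the unbounded diversity (``rank'') of that one bipartite digraph yields arbitrarily large permutation submatrices by Marcus--Tardos. That is false: a single $k$-diverse cell can be a half-graph (a triangular $0,1$-matrix), which is arbitrarily diverse yet contains no large permutation matrix as a submatrix. The permutation-extraction machinery of~\cite{twin-width4} needs the \emph{entire} $k\times k$ division with every cell diverse, and the permutation it produces is spread across many parts. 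Worse, the paper has to prove a strengthening of that extraction (its Theorem~\ref{thm:rich-division-extraction}: at most one row per row-part and one column per column-part) precisely so that the extracted vertices land in pairwise distinct equivalence classes of a chain quasi-order. This interacts with the second problem in your sketch: the lexicographic order ``dictated by $B$'' does not order vertices \emph{inside} a single hanging subtree at all --- within one subtree the restriction of $<_S$ is again an arbitrary BST order of a smaller tournament, so zooming into one cell leaves you exactly where you started, with no tournament-definable order on the rows or columns and hence no way to realize the encoding hereditarily.

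The paper's route avoids both issues: it first separates row-intervals from column-intervals, applies the branch-extraction lemma (\ref{lem:bst-partition-extraction}) \emph{twice}, producing two chain quasi-orders $\qle^A,\qle^B$ that (thanks to the one-row-per-part extraction) \emph{totally} order the extracted rows and columns; the orders are then carried by actual chains of the tournament, which survive in induced subtournaments. A further step you do not anticipate is then unavoidable: the product Ramsey lemma (\ref{lem:perm-ramsey}) together with \ref{lem:tournament-order} forces the internal edges of the extracted sets to align with one of the two orders, and when they align with the \emph{permuted} order rather than the chain order one lands in a genuinely different configuration (a permutation between a set and the very chain ordering it), whose reduction to $\Fc_=,\Fc_\le,\Fc_\ge$ is the most delicate part of the paper's proof (the doubling trick in Claim~\ref{clm:chain-permutation3}). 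Your ``turns of the branch produce exactly three cases'' heuristic does not account for this case, nor for the reduction of the six matrix classes $\Mc_=,\Mc_{\neq},\Mc_{\le_R},\Mc_{\ge_R},\Mc_{\le_C},\Mc_{\ge_C}$ to the three tournament families by reversal and transposition-plus-complementation.
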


We also prove that the classes~$\Fc_=,\Fc_\le,\Fc_\ge$ are complex
in all the senses considered by \cref{thm:main-thm}, that is
\begin{restatable}{theorem}{obstrhard}
  \label{thm:forbidden-tournaments-bad}
  For each~$R \in \{=,\le,\ge\}$, the class~$\Fc_R$
  \begin{enumerate}[noitemsep]
    \item \label{item:unbounded-tww} has unbounded twin-width,
    \item \label{item:factorial} contains at least~$(\floor{\frac{n}{2}} - 1)!$ tournaments
      on~$n$ vertices counted up to isomorphism,
    \item \label{item:not-small} contains at least~$(\floor{\frac{n}{2}} - 1)!  \cdot n!$ tournaments on vertex
      set~$\{1,\dots,n\}$ counted up to equality,
    \item \label{item:independent} efficiently interprets the class of all graphs,
    \item \label{item:aw-hard} and has an \aw-hard FO model checking problem.
  \end{enumerate}
\end{restatable}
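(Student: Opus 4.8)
The plan is to reduce all five items to one structural property of the encoding, and then invoke standard transfer results. Write $\eta_R(\sigma)$ for the $R$-encoding of a permutation $\sigma \in \Perm_m$; by the definition in \cref{sec:obstructions}, $\Fc_R$ contains $\eta_R(\sigma)$ for every $m$ and every $\sigma \in \Perm_m$, with a number of vertices of $\eta_R(\sigma)$ (depending only on $m$) serving as scaffolding. The substantial step is to produce, for each $R \in \{=,\le,\ge\}$, a fixed first-order interpretation $\Gamma_R$---a domain formula isolating the $m$ \emph{permutation vertices} of $\eta_R(\sigma)$, together with two binary formulas---such that $\Gamma_R(\eta_R(\sigma))$ is isomorphic to $\sigma$ viewed as a relational structure (a set carrying two linear orders). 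The scaffolding is precisely what makes this possible: it induces in-/out-neighbourhood patterns from which a single formula recovers the two orders carried by the permutation vertices. This is the decoding counterpart of the analysis behind \cref{thm:forbidden-tournaments-intro}, and I expect it to need no idea beyond the ones already used there. Two by-products should be recorded at the same time: $\eta_R$ is injective (distinct permutations give non-isomorphic tournaments, since $\Gamma_R$ separates them), and each $\eta_R(\sigma)$ is \emph{rigid}---an automorphism preserves the first-order-definable roles of the scaffolding, hence fixes every permutation vertex, since $\sigma$ has no non-trivial automorphism as a doubly ordered set.

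Items \labelcref{item:factorial,item:not-small} are then counting. Taking $m = \floor{n/2} - 1$, the tournaments $\eta_R(\sigma)$ for $\sigma \in \Perm_m$ lie in $\Fc_R$ and can be arranged to have exactly $n$ vertices (by how the scaffolding is sized in \cref{sec:obstructions}); they are pairwise non-isomorphic by injectivity of $\eta_R$, so this already gives at least $m! = (\floor{n/2}-1)!$ tournaments on $n$ vertices up to isomorphism, which is \labelcref{item:factorial}. Each of these $m!$ isomorphism types is rigid, hence produces exactly $n!$ distinct tournaments on the labelled vertex set $\{1,\dots,n\}$, and distinct types produce disjoint sets of labelled tournaments; thus $\Fc_R$ has at least $(\floor{n/2}-1)! \cdot n!$ tournaments on $\{1,\dots,n\}$ up to equality, which is \labelcref{item:not-small}. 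In particular this count exceeds $c^n n!$ for every fixed $c$ and all large $n$, so $\Fc_R$ is not small.

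Items \labelcref{item:independent,item:unbounded-tww,item:aw-hard} follow by composing $\Gamma_R$ with known facts. Since $\eta_R$ is polynomial-time computable and its image meets every isomorphism type of interest, $\Gamma_R$ witnesses that $\Fc_R$ \emph{efficiently interprets} the class of all permutations; composing with the well-known efficient interpretation of all graphs in the class of all permutations (a graph on $[k]$ is recovered by a fixed formula from a suitable permutation on $\Theta(k)$ points, in the spirit of the ordered-matrix encodings of \cite{twin-width4}) gives \labelcref{item:independent}, that $\Fc_R$ efficiently interprets all graphs. Then \labelcref{item:unbounded-tww} follows because FO interpretations are FO transductions, hence preserve twin-width up to a function \cite[Theorem~39]{twin-width1}, whereas the class of all permutations has unbounded twin-width (a standard fact, ultimately resting on Marcus--Tardos~\cite{MarcusT04}; see~\cite{twin-width1,twin-width4}); a class interpreting all permutations cannot have bounded twin-width. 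Finally, for \labelcref{item:aw-hard}: FO model checking on all graphs is \aw{}-hard \cite{downey1996queries}, and from $(G,\phi)$ one computes in polynomial time the tournament $\eta_R(\sigma_G) \in \Fc_R$ encoding a permutation $\sigma_G$ that codes $G$, together with the pullback $\phi'$ of $\phi$ along the composed interpretation; then $G \models \phi$ iff $\eta_R(\sigma_G) \models \phi'$ and $\card{\phi'}$ depends only on $\card{\phi}$, an \fpt{}-reduction showing that FO model checking in $\Fc_R$ is \aw{}-hard.

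The only real obstacle is the first step, and it is self-contained: one must set up, uniformly over $R \in \{=,\le,\ge\}$, a single first-order formula package $\Gamma_R$ that decodes every encoded permutation, and verify rigidity of the encodings. Once \cref{sec:obstructions} pins down the three encoding schemes this is a finite verification of the same flavour as the constructions there, and everything downstream is the bookkeeping and citations above; no combinatorics beyond \cref{sec:obstructions} and the toolkit behind \cref{thm:forbidden-tournaments-intro} is needed.
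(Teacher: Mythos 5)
Your architecture is exactly the paper's: build a decoding interpretation from $\Fc_R$ to bi-orders, compose it with the interpretation of all graphs in $\Oc_\Perm$ (\cref{lem:perm-inter-graph}) to get efficient interpretation of graphs and, via \cref{lem:interpretation-reduction,thm:FO-aw-hard}, \aw{}-hardness; use injectivity and rigidity of the encoding for the labelled and unlabelled counts; and deduce unbounded twin-width from non-smallness (\cref{thm:small}) or from \cref{thm:tww-transduction}. All of that downstream bookkeeping is correct and coincides with the paper's proof.

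The problem is the one step you defer as a ``finite verification'': the existence of a \emph{fixed} interpretation $\Gamma_R$ with $\Gamma_R(\eta_R(\sigma)) \cong \Oc_\sigma$ for \emph{every} permutation $\sigma$. That is the only substantive content of the theorem, and it is not routine in the uniform form you assert. To define the two orders one must first define the bipartition $X \cup Y$ of $\Fc_R(\sigma)$ by a first-order formula working for all $n$ and all $\sigma$, and no such formula is apparent; the paper does not prove your statement but a deliberately weaker one (\cref{lem:forb-interpret-perm}): it first gives only a \emph{transduction}, using a unary colour to guess $X$, and then removes the nondeterminism by replacing $\sigma$ with a one-element extension $\sigma'$ engineered so that an anchor vertex becomes FO-identifiable (for $R = {=}$, appending the fixed point $n+1$ makes $y_{n+1}$ the unique vertex of out-degree $1$, whose out-neighbour $x_{n+1}$ satisfies $X = N^-(x_{n+1}) \cup \{x_{n+1}\} \setminus \{y_{n+1}\}$; analogous devices are used for $\le$ and $\ge$). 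Decoding only these padded tournaments $\Fc_R(\sigma')$ is enough for all five items, but some such device is genuinely needed: without it your claims of injectivity and rigidity, and hence the counts in items \labelcref{item:factorial} and \labelcref{item:not-small}, are unsupported as well. A minor additional slip: with $m = \floor{n/2}-1$ the encodings have $2m$ (or, after padding, $2m+2$) vertices rather than exactly $n$, so the counting has to be stated for the vertex counts the encoding actually produces (the paper proves the bound on $2n+2$ vertices) or adjusted by parity/heredity; this is fixable bookkeeping, unlike the missing decoding construction.
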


\Cref{thm:forbidden-tournaments-intro,thm:forbidden-tournaments-bad} together imply \cref{thm:obstructions}.
By applying them to the class of tournaments with twin-width at most~$k$, they also imply \cref{lem:BST-twinwidth}:
this class cannot contain any of~$\Fc_=,\Fc_\le,\Fc_\ge$,
hence its tournaments must still have bounded twin-width when paired with BST orders.
Finally, \cref{thm:obstructions,thm:forbidden-tournaments-bad} directly imply that
if~$\Tc$ is a hereditary class with unbounded twin-width,
then~$\Tc$ satisfies none of the conditions of \cref{thm:main-thm}.
The remaining implications of \cref{thm:main-thm}---%
that is, when~$\Tc$ has bounded twin-width, all other conditions hold---%
follow from known results on twin-width.
By \cite[Theorem~1]{twin-width1}, FO model checking has an \fpt algorithm
when a witness of bounded twin-width is given.
Combined with \cref{thm:approx-algo}, we obtain an \fpt algorithm
for classes of tournaments with bounded twin-width.
By \cite[Theorem~39]{twin-width1}, a class of structures with bounded twin-width cannot transduce all graphs.
Finally, by \cite[Corollary~7.3]{tww-perm}, a class of structures with bounded twin-width
contains at most~$c^n$ structures on~$n$ vertices up to isomorphism, for some constant~$c$ depending on the class.

\subsection{Context and related works}
An important question is how twin-width compares with other classical complexity measures for tournaments.
Bounded twin-width implies bounded Vapnik-Chervonenkis dimension:
indeed unbounded VC-dimension implies all possible bipartite tournaments appearing as subgraphs,
which is against single-exponential growth, and thus twin-width is unbounded.
The notion of \emph{cutwidth} was introduced
by Chudnovsky, Fradkin and Seymour~\cite{Chud2012} to study tournament immersions.
The vertex ordering used to certify that cutwidth is bounded
can be shown to exclude grids, meaning that it is a witness of bounded twin-width.
Another parameter, closely related to subdivisions in tournaments,
is \emph{pathwidth}, studied by Fradkin and Seymour~\cite{FRADKIN2013374}.
Bounded pathwidth of tournaments implies bounded cliquewidth,
which in turn also implies bounded twin-width, see~\cite{twin-width1}.
Thus, we have the following chain of inclusions of complexity parameters
(if a parameter is bounded, all the ones listed after are also bounded):
cutwidth, pathwidth, cliquewidth, twin-width, and VC-dimension.
For more on the subject, we recommend reading the work of Fomin and Pilipczuk~\cite{FOMIN201978,pilipczuk2013tournaments}.

Our results are a direct generalisation of those of~\cite{twin-width4}, from ordered graphs to tournaments.
This generalisation is non-trivial: as explained in \cref{sec:proof-overview},
one cannot FO transduce orders from arbitrary tournaments---%
which would be the obvious way to reduce the problem from tournaments to ordered graphs.
On the other hand, there is a natural way to form a tournament~$T$ from an ordered graph~$(G,<)$,
namely $x \to y$ is a directed edge in~$T$ if and only if~$x<y$ and~$xy$ is an edge of~$G$, or~$y<x$ and~$xy$ is not an edge of~$G$.
This gives a way to create tournaments with bounded twin-width:
starting with any graph~$G$ with bounded twin-width,
consider a total order~$<$ on its vertices that is a witness of its twin-width,
and interpret~$T$ from~$(G,<)$ as before.
This paper can be seen as the reverse operation,
that is a way of producing an order~$<$ from a tournament~$T$.

The binary search tree we use to order tournaments
correspond to the \textsc{KwikSort} algorithm of Ailon, Charikar and Newman
for approximating the minimum feedback arc set~\cite{newman2008FAS}.
The only difference is that their result requires the BST to be randomly chosen,
whereas arbitrary BST provide approximations of twin-width.

The generalization to oriented graphs with bounded independence number
contains in particular partially ordered sets with bounded width (i.e.\ antichains of bounded size).
These classes in fact always have bounded twin-width, see Balab\'{a}n and Hlinen\'{y}~\cite{balaban2021posets},
and before that were already known to be tractable with respect to FO model checking,
see Gajarsk\'{y} et al.~\cite{gajarsky2015posets}.

Finally let us mention another notable result on tournaments of bounded twin-width:
the polynomial time isomorphism test of Grohe and Neuen~\cite{grohe2024isomorphism}.
Their algorithm is remarkable in that it leverages twin-width
without needing or manipulating a contraction sequence witness of twin-width.
The techniques involved are very different from ours.

\subsection{Organisation of the paper}
\Cref{sec:prelim} summarises definitions and notations used in this paper.
In \cref{sec:obstructions} we define the classes~$\Fc_=,\Fc_\le,\Fc_\ge$ of
obstructions to twin-width in tournaments, and prove \cref{thm:forbidden-tournaments-bad}.
\Cref{sec:bst} introduces binary search trees, and proves a crucial lemma,
which from a partition into intervals of a BST order, extracts some FO definable ordered substructure.
\Cref{sec:approx} combines this key lemma with results of~\cite{twin-width4} to prove~\cref{lem:BST-twinwidth}:
from a witness that the tournament~$T$ augmented by a BST order has large twin-width,
the key lemma extracts a witness that~$T$ itself has large twin-width.
\Cref{sec:extraction} refines this argument into a proof of~\cref{thm:forbidden-tournaments-intro} using Ramsey results.
Finally, \cref{sec:generalisation} explains how our results generalise
to oriented graphs with bounded independence number, and to binary relational structures.

\section{Preliminaries}\label{sec:prelim}
This section summarises the notions and notations used in this work.
For $n \in \Nn$, we denote by~$[n]$ the interval of integers~$\{1,\dots,n\}$.
All the graphs, matrices, and relational structures considered in this work are finite.

\subsection{Graphs, Oriented graphs, Tournaments}
A \emph{directed} graph $D$ consists of a vertex set~$V(G)$,
and a set~$E(G)$ of \emph{directed edges} (also called \emph{arcs}), which are \emph{ordered} pairs of vertices.
A directed edge~$(u,v)$ is simply denoted by~$uv$, or sometimes~$u \to v$ to insist on the orientation.
In this work, we do not consider loops, i.e.\ edges of the form~$v \to v$.
The \emph{underlying graph} of~$D$ is the undirected graph with the same vertices,
and an edge~$uv$ whenever either~$uv$ or~$vu$ is an edge of~$D$.
A \emph{digon} in a directed graph is a pair of vertices~$u,v$ with both edges~$u \to v$ and~$v \to u$.
An \emph{oriented} graph is a directed graph without digons.

A \emph{tournament} is an oriented graph whose underlying graph is complete,
or equivalently it is a choice of orientation~$u \to v$ or~$v \to u$ (but not both) for every pair of vertices.
A tournament is \emph{transitive} if it contains no directed cycle.
A transitive tournament represents some order~$<$ on vertices: there is an edge~$u \to v$ if and only if~$u<v$.
In a directed graph~$D$, a subset~$X \subset V(D)$ that induces a transitive tournament is also called a \emph{chain}.

A (directed) graph~$H$ is an \emph{induced subgraph} of~$G$ if~$V(H) \subset V(G)$,
and the edges of~$H$ are exactly the edges of~$G$ that are between vertices of~$V(H)$.
If~$X = V(H)$, we also say that~$H$ is the subgraph of~$G$ induced by~$X$, denoted~$H = G[X]$.
A \emph{class} of graphs is a family of graphs closed under isomorphism.
A class is \emph{hereditary} if it is also closed under taking induced subgraphs.

A subset~$S$ of a (directed) graph is called \emph{independent} or \emph{stable}
if no edge joins two vertices of~$S$.
The independence number~$\alpha(G)$ is the maximum size of an independent set in~$G$.
For a vertex~$x$ in a directed graph~$D$, the \emph{out-neighbourhood} is $N^+(x) = \setst{y}{x \to y}$,
and the \emph{in-neighbourhood} is $N^-(x) = \setst{y}{y \to x}$.
Their union is the \emph{complete neighbourhood} $N(x) = N^+(x) \cup N^-(x)$.

\subsection{Relational structures}
Relational structures are a natural generalisation of graphs.
A relational \emph{signature} is a finite set~$\Sigma$ of \emph{relation symbols}~$R$,
each with an associated arity~$r \in \Nn$.
A relational structure~$S$ over~$\Sigma$, or simply $\Sigma$-structure,
consists of a domain~$A$, and for each relation symbol~$R \in \Sigma$ of arity~$r$,
an interpretation of~$R$ as~$R^S \subseteq A^r$.
One may think of~$A$ as a set of vertices, and each~$R^S$ as a set of hyperedges.
The notions of induced substructure, hereditary classes, etc.\ generalise to relational structures in the obvious way.

We will only consider \emph{binary} relational structures,
i.e.\ over signatures where all symbols have arity~2.
Directed graphs are binary structures with a single relation for edges.

An \emph{ordered structure}~$S$ is a structure over a relation~$\Sigma$ with a special symbol,
typically denoted~`$<$', whose interpretation~$<^S$ is a strict total order on the domain of~$S$.

\subsection{Orders, Quasi-orders}
A \emph{quasi-order}~$\qle$ is a reflexive and transitive binary relation.
Any quasi-order induces an equivalence relation~$x \sim y$ defined as~$x \qle y \land y \qle x$,
which is trivial if and only if~$\qle$ is an order.
One can also describe a quasi-order by giving this equivalence relation,
and the order on the equivalence classes.
The strict component of the quasi-order is defined by $x \qlt y$ if and only if~$x \qle y$ and~$y \not\qle x$.
The quasi-order is \emph{total} if for all~$x,y$, either~$x \qle y$ or~$y \qle x$.

An \emph{interval} of a quasi-order~$\qle$ is a set of the form~$\setst{z}{x \qle z \qle y}$
for some~$x,y$, called \emph{endpoints} of the interval.
Remark that an interval is always a union of equivalence classes of~$\sim$.

\subsection{Matrices}
As is common in the context of permutations and patterns,
we use the convention that matrix coordinate are oriented as in the euclidean plane.
That is, coordinate~$(i,j)$ means column~$i$ and row~$j$, with the bottom-most row being the smallest.

A matrix is seen as a map~$M: C \times R \to \Gamma$,
where~$C,R$ are \emph{ordered sets} of columns and rows of the matrix, and~$\Gamma$ and its alphabet.
The order of rows and columns is very important in the context of twin-width.
Usually, the alphabet will simply be~$\{0,1\}$.
The value~$M(x,y)$ is called the \emph{entry} at position~$(x,y)$,
and an $a$-entry means an entry whose value is~$a$.
A submatrix of~$M$ is the restriction of~$M$ to some subsets of rows and columns.
The notion is comparable to that of induced substructure, thinking of rows and columns as vertices,
and of the content of the matrix as a relation.
Accordingly, a class of matrices is called \emph{hereditary} if it is closed under submatrices.

Given a (directed) graph~$G$ and a total order~$<$ on~$V(G)$,
the adjacency matrix~$A_{(G,<)}$ is a 0,1-matrix with rows and columns~$V(G)$,
with a~`1' at position~$(u,v)$ if and only if~$uv$ is an edge of~$G$.
Note that we do not use the common alternative definition for oriented graphs which places~`$-1$' at position~$(u,v)$ when~$u \from v$ instead is an edge.
Adjacency matrices generalise easily to binary relational structures over any signature~$\Sigma$,
the alphabet being~$\{0,1\}^\Sigma$ in this case.

A \emph{division}~$\Dc$ of a matrix consist of partitions~$\Rc,\Cc$
of its rows and its columns respectively into \emph{intervals}.
It is a $k$-division if the partitions have~$k$ parts each.
A \emph{cell} of the division is the submatrix induced by~$X \times Y$ for some~$X \in \Rc, Y \in \Cc$.
A $k$-\emph{grid} in a 0,1-matrix is a division in which every cell contains a `1'-entry.

\subsection{Permutations}\label{sec:permutations}
We denote by~$\Perm_n$ the group of permutations on~$n$ elements.
The \emph{permutation matrix}~$M_\sigma$ of~$\sigma \in \Perm_n$ is the sparse $n \times n$ matrix
with a~`1' at position~$(i,j)$ if and only if~$j = \sigma(i)$.

A~permutation~$\tau$ is a \emph{pattern} of~$\sigma$ if $M_\tau$ is a submatrix of~$M_\sigma$.
Any pattern~$\tau$ of~$\sigma$ is obtained by restricting~$M_\sigma$ to columns in~$X$ and rows in~$\sigma(X)$ for some subset~$X$ of the domain of~$\sigma$.
We then say that~$\tau$ is the pattern of~$\sigma$ \emph{induced} by~$X$.

We say that~$\sigma$ contains a $k$-grid if $M_\sigma$ contains a $k$-grid.
When this is the case, any permutation in~$\Perm_k$ is a pattern of~$\sigma$.
For example, the permutation~$\sigma$ on $k^2$ elements
defined by~$\sigma(ki + j + 1) = kj + i + 1$ for any~$0 \le i,j < k$ contains a $k$-grid.
Grids in permutations are deeply related to pattern-closed classes of permutations,
which are precursors of the work on twin-width,
see Marcus and Tardos~\cite{MarcusT04} and Guillemot and Marx~\cite{Guillemot14}.

\begin{lemma}\label{lem:perm-grid-sym}
  If the permutation~$\sigma \in \Perm_n$ contains a $k$-grid, then so do its inverse~$\sigma^{-1}$, and the reverse $i \mapsto n - \sigma(i) + 1$ and $i \mapsto \sigma(n-i+1)$.
\end{lemma}
\begin{proof}
  These operations correspond to transposing or mirroring the permutation matrix respectively, which clearly preserves the presence of $k$-grids.
\end{proof}

\begin{lemma}\label{lem:perm-grid-minus-one}
  Consider a permutation $\sigma \in \Perm_n$ with a $(k+1)$-grid, and $x \in [n]$ any element of its domain.
  Then the pattern of~$\sigma$ induced by~$[n] \setminus \{x\}$ contains a $k$-grid.
\end{lemma}
\begin{proof}
  In the matrix~$M_\sigma$, the pattern induced by~$[n] \setminus \{x\}$ corresponds to removing column~$x$ and row~$\sigma(x)$.
  Removing a single column and single row of the matrix affects only one interval of columns and one interval of rows in the $(k+1)$-grid, leaving a $k$-grid.
\end{proof}

\begin{lemma}\label{lem:perm-grid-remove-chain}
  Consider a permutation $\sigma \in \Perm_n$ with a $2k$-grid,
  and $X \subset [n]$ such that the pattern of~$\sigma$ induced by~$X$ is either increasing or decreasing.
  Then the pattern of~$\sigma$ induced by~$[n] \setminus X$ contains a $k$-grid.
\end{lemma}
\begin{proof}
  Call $R_1 < \dots < R_{2k}$ and $C_1 < \dots < C_{2k}$ the row intervals and column intervals in the $2k$-grid of~$\sigma$.
  Merge them two-by-two, as $R'_i = R_{2i-1} \cup R_i$ and $C'_i = C_{2i-1} \cup C_i$.
  Thus each cell $R'_i \times C'_j$ contains a 2-grid.

  Say that a 1-entry in~$M_\sigma$ \emph{comes from~$X$} if it is at position $(i,\sigma(i))$ for some~$i \in X$.
  Since the pattern induced by~$X$ is increasing or decreasing, it cannot contain a 2-grid.
  Thus each cell $R'_i \times C'_j$ contains a 1-entry that does not come from~$X$.
  Therefore $R'_1,\dots,R'_k$ and $C'_1,\dots,C'_k$ give a $k$-grid after removing the 1-entries from~$X$, hence a $k$-grid in the matrix of the pattern induced by $[n] \setminus X$ as desired.
\end{proof}

A permutation can be represented as a \emph{bi-order}, i.e.\ the superposition of two total orders.
Precisely, for~$\sigma \in \Perm_n$, the structure~$\Oc_\sigma$ has domain~$[n]$,
and has for relations the natural order~$<$, and the permuted order~$<_\sigma$
defined as~$i <_\sigma j$ if and only if~$\sigma(i) < \sigma(j)$.
Any bi-order is isomorphic to some~$\Oc_\sigma$.
Remark that~$\tau$ is a pattern of~$\sigma$ if and only if
$\Oc_\tau$ is isomorphic to an induced substructure of~$\Oc_\sigma$.
We write~$\Oc_\Perm$ for the class of all finite bi-orders.

For any numbers~$x,y$ let the \emph{order type} $\ot(x,y)$ be~1 if~$x<y$, $-1$ if~$x > y$, and~0 if~$x = y$.
We also write~$\ot_\sigma(x,y)$ for~$\ot(\sigma(x),\sigma(y))$,
the order type with regards to the permuted order~$<_\sigma$.
In a bi-order~$\Oc_\sigma = ([n],<,<_\sigma)$,
consider a colouring of pairs~$\lambda : [n]^2 \to \Gamma$ for some finite alphabet~$\Gamma$.
We say that this colouring \emph{depends only on the orders~$<,<_\sigma$}
if there is a $\eta : \{-1,0,1\}^2 \to \Gamma$ such that~$\lambda$ factors as
\[ \lambda(x,y) = \eta(\ot(x,y), \ot_\sigma(x,y)) \]
For example, seeing the edges of a graph on~$[n]$ as a colouring,
the permutation graph with an edge~$xy$ if only if $(x < y \land x >_\sigma y)$ or $(x > y \land x <_\sigma y)$
is one which only depends on~$<$ and~$<_\sigma$.

The following result of Ramsey theory, stated as such in~\cite[Lemma~18]{twin-width4},
will be used extensively to extract the minimal classes of obstructions in tournaments with large twin-width.
It is a corollary of the \emph{product Ramsey theorem},
see~\cite[Proposition~3.3]{bodirsky2015ramsey} or~\cite[page~97]{graham1991ramsey}.
\begin{lemma}
  \label{lem:perm-ramsey}
  For any permutation~$\sigma$ and any finite alphabet~$\Gamma$, there is a permutation~$\tau$ satisfying the following.
  Write the biorder~$\Oc_\tau$ as~$([N],<,<_\tau)$. For any colouring of pairs~$\lambda : [N]^2 \to \Gamma$,
  there exists~$X \subset N$ such that the substructure~$\Oc_\tau[X]$ is isomorphic to~$\Oc_\sigma$,
  and the colouring~$\lambda$ restricted to~$X$ only depends on the orders~$<,<_\tau$.
\end{lemma}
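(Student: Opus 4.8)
The statement to prove is \cref{lem:perm-ramsey}: given a permutation~$\sigma \in \Perm_n$ and a finite alphabet~$\Gamma$, there is a permutation~$\tau$ on~$[N]$ such that every coloring $\lambda : [N]^2 \to \Gamma$ admits a subset~$X$ of size~$n$ where $\Oc_\tau[X] \cong \Oc_\sigma$ and $\lambda|_{X^2}$ depends only on the orders~$<,<_\tau$. The plan is to reduce this to the product Ramsey theorem, applied to two linear orders simultaneously. The key conceptual point is that a bi-order is essentially a pair of colorings of a single linearly ordered ground set: the permuted order~$<_\tau$ restricted to a subset is determined by how the elements of that subset interleave, which is itself the kind of ``ordered'' data the product Ramsey theorem controls.

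Concretely, I would proceed as follows. First, choose a ``rich'' base permutation: by the remark in \cref{sec:permutations}, there is a permutation $\rho$ on $k^2$ elements (for suitable~$k$) containing a $k$-grid, hence containing \emph{every} pattern in~$\Perm_k$; taking $k$ large enough (depending on $n$ and $|\Gamma|$, to be fixed by the Ramsey bound below) we get a permutation that is pattern-universal for~$\Perm_n$ even after we pass to a large subset. Present~$\Oc_\rho$ as $([M],<,<_\rho)$. Now given any $\lambda : [M]^2 \to \Gamma$, I want to find a monochromatic-in-the-refined-sense subset. The trick is to encode $\lambda$ together with the permuted order into a single coloring to which product Ramsey applies: view $[M]$ as embedded in the grid $[m]\times[m]$ via $i \mapsto (i, \rho(i))$ (so that $<$ is the first coordinate and $<_\rho$ is the second), and define a coloring of pairs of grid points that records both $\lambda$ and the pair of order-types of the coordinates. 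Apply the product Ramsey theorem (\cite[Proposition~3.3]{bodirsky2015ramsey} / \cite[page~97]{graham1991ramsey}) to obtain a product subset $A \times B \subseteq [m]\times[m]$ on which this combined coloring is ``canonical'', i.e. depends only on the coordinate order-types. The intersection of $A\times B$ with the graph of $\rho$ may be small, so $m$ must be chosen large enough (a tower-type or Ramsey-type function of~$n,|\Gamma|$) that this intersection still has size~$\ge n$ and still realizes~$\Oc_\sigma$ as a pattern — this is where the pattern-universality of the base permutation is used, choosing the $n$ selected points to induce exactly~$\sigma$.

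After this extraction, on the selected set~$X$ of size~$n$: the pair $(\ot(x,y), \ot_\rho(x,y))$ for $x,y \in X$ is exactly $(\ot(x,y),\ot_\tau(x,y))$ for the induced bi-order, and canonicity of the combined coloring says precisely that $\lambda(x,y)$ is a function of that pair — which is the required factorization $\lambda(x,y) = \eta(\ot(x,y),\ot_\tau(x,y))$. The final permutation~$\tau$ is~$\rho$ (or whatever permutation realizes $\Oc_\rho$), with $N = M$. The main obstacle is purely bookkeeping: setting up the encoding of the bi-order into grid coordinates so that the product Ramsey theorem applies cleanly, and quantifying how large the base grid must be so that, after the canonical product subset is extracted, the restriction to the graph of the permutation still contains a copy of~$\sigma$ — one must track that passing to a large product sub-grid of a grid-containing permutation still leaves a grid-containing (hence $\Perm_n$-universal) permutation. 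Since the statement is quoted verbatim from \cite[Lemma~17]{twin-width4}, I would in practice cite that source and the product Ramsey theorem, and only sketch the encoding; but the self-contained argument is the one above.
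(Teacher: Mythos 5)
The paper itself does not prove \cref{lem:perm-ramsey}: it quotes it from \cite[Lemma~17]{twin-width4} and attributes it to the product Ramsey theorem, so your fallback of simply citing those sources is exactly what the paper does. The self-contained argument you sketch, however, has a genuine gap at its central step. You take a pattern-universal base permutation $\rho$ on $[M]$, identify its domain with the graph $\setst{(i,\rho(i))}{i \in [M]}$ inside the grid $[M]\times[M]$, and apply the product Ramsey theorem to the ambient grid, hoping that the canonical product set $A\times B$ it returns meets the graph in at least $n$ points realising $\sigma$. Two things fail. First, $\lambda$ is defined only on pairs of domain elements, i.e.\ on pairs of points of the graph, so the ``combined colouring'' of arbitrary pairs of grid points is not defined off the graph. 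Second, and more seriously, the product Ramsey theorem controls only the \emph{sizes} of $A$ and $B$, not their positions: $A\times B$ meets the graph of $\rho$ in $A\cap\rho^{-1}(B)$, which can be empty, and no choice of the ambient size $m$ ``large enough'' repairs this, since the quantifier you would need (``every large $A,B$ has large intersection with the graph'') is simply false. Even when the intersection is large, the induced sub-permutation need not inherit universality: restricting the grid permutation $\rho(mi+j+1)=mj+i+1$ on $[m^2]$ to the product set $\{1,\dots,m\}\times[m^2]$ yields an increasing (identity) pattern of length $m$, which contains no $2$-grid; so the property you propose to ``track''---that large product restrictions of a grid-containing permutation still contain grids---does not hold.

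The standard proof eliminates the mismatch between the domain and the ambient product rather than trying to control it: take $\tau$ itself to be the grid permutation on $N=m^2$ elements, so that the domain of $\Oc_\tau$ is in bijection with the \emph{full} grid $[m]\times[m]$, with $<$ and $<_\tau$ the two lexicographic orders (by first, resp.\ second, coordinate). A colouring $\lambda$ of pairs of domain elements then induces a colouring of $2\times2$ subgrids over a bounded alphabet, recording the $\Gamma$-colours of the four ordered ``diagonal'' pairs (loops are handled by an additional pigeonhole or by also recording the colour of a distinguished corner). The product Ramsey theorem with $k_1=k_2=2$ and $n_1=n_2=n$ yields $A,B\subseteq[m]$ of size $n$ on which all $2\times2$ subgrids receive the same tuple, which is precisely the statement that $\lambda$ on points of $A\times B$ in general position depends only on the pair of order types. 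Finally take $X=\setst{(a_i,b_{\sigma(i)})}{i\in[n]}$, where $a_1<\dots<a_n$ and $b_1<\dots<b_n$ enumerate $A$ and $B$: then $\Oc_\tau[X]\cong\Oc_\sigma$ and $\lambda$ on $X$ is canonical. In particular your universal base permutation is unnecessary---the freedom to realise any $\sigma$ comes from the fact that $A\times B$ is an entire $n\times n$ grid of domain elements, not from universality of a smaller pattern surviving the extraction.
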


When the colouring~$\lambda$ represents the edges of a tournament on~$[n]$,
that is $\lambda(x,y) = 1$ if and only if~$x \to y$ is an edge,
the `depends only on~$<,<_\sigma$' conditions becomes particularly useful.
\begin{lemma}
  \label{lem:tournament-order}
  Let~$T$ be a tournament and~$<_1,<_2$ two orders on~$V(T)$
  such that the direction of edges of~$T$ only depends on~$<_1,<_2$.
  Then~$T$ is transitive, and the direction of edges coincides
  with one of~$<_1$, $<_2$, the reverse of~$<_1$, or the reverse of~$<_2$.
\end{lemma}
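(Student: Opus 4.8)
The plan is to exploit the fact that the direction of every edge is determined by the function $\eta : \{-1,0,1\}^2 \to \{0,1\}$ via $\lambda(x,y) = \eta(\ot_{<_1}(x,y), \ot_{<_2}(x,y))$, where $\lambda(x,y)=1$ iff $x \to y$. Since $T$ is a tournament, for every pair of distinct vertices $x,y$ exactly one of $x\to y$, $y\to x$ holds, so $\eta(\epsilon_1,\epsilon_2) + \eta(-\epsilon_1,-\epsilon_2) = 1$ for all $(\epsilon_1,\epsilon_2) \in \{-1,0,1\}^2$ with $(\epsilon_1,\epsilon_2) \neq (0,0)$; in particular $\eta$ is never $0$ or $1$ on both of an antipodal pair. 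Note the pair $(0,0)$ never arises for distinct vertices since $<_1$ is a total order (so $\ot_{<_1}(x,y)=0$ forces $x=y$). The relevant inputs are thus $(1,1),(1,-1),(1,0),(0,1)$ together with their antipodes, and $\eta$ is a choice of one value from each antipodal pair; there are $2^4 = 16$ such choices to analyze.

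First I would rule out the choices of $\eta$ that produce a directed 3-cycle. The key observation is that if $x <_1 y <_1 z$ and the relative $<_2$-order of $x,y,z$ is chosen adversarially, then requiring $x\to y$, $y\to z$, and $z\to x$ (or the reverse cycle) imposes constraints on $\eta$; by trying each of the finitely many order types of triples $(x,y,z)$ under $<_2$, one checks that a directed 3-cycle appears unless $\eta$ essentially ignores one of the two coordinates. Concretely: the only $\eta$ that avoid all 3-cycles are $\eta(\epsilon_1,\epsilon_2)$ depending only on $\epsilon_1$ (giving the edge direction equal to $<_1$ or its reverse) or only on $\epsilon_2$ (giving $<_2$ or its reverse) --- here I would isolate the subtlety that $\epsilon_1 \neq 0$ always but $\epsilon_2$ can be $0$, so the value $\eta(1,0)$ must agree with the dominant coordinate's verdict, forcing consistency. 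Once $T$ has no directed 3-cycle it is transitive (a standard fact about tournaments, already noted in the preliminaries), and its edge direction is a total order, which by the above must be one of $<_1$, $<_2$, or a reverse.

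The main obstacle is simply organizing the finite case check cleanly rather than by brute enumeration of all $16$ candidate $\eta$'s and all $\binom{3}{1}$-ish triple order types. I would streamline it as follows: pick any two vertices $a <_1 b$; their edge direction is fixed, say $a \to b$ (the reverse case is symmetric by swapping the roles, i.e.\ replacing $\eta$ by its antipodal complement which corresponds to reversing all edges). Then I would argue that $\eta(1, \epsilon_2)$ must equal $1$ for $\epsilon_2 \in \{1,-1,0\}$: if, say, $\eta(1,1)=0$ while $\eta(1,-1)=1$, one can find vertices $x <_1 y <_1 z$ with $x <_2 y$, $y >_2 z$, $x >_2 z$ (a valid $<_2$-pattern), giving $y \to x$, $y \to z$, $z \to x$ --- not yet a cycle, so I would instead test the pattern realizing $x <_2 z <_2 y$, producing $x \to z$, ..., and push until a 3-cycle forces $\eta(1,1)=\eta(1,-1)$; similarly $\eta(1,0)$ is pinned. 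Hence $\eta$ depends only on $\ot_{<_1}$ and the direction is exactly $<_1$. The symmetric argument starting from the hypothesis that the edge between some $<_1$-comparable pair disagrees with $<_1$ yields dependence on $<_2$ instead. I expect the write-up to be short once the right pair of "witness patterns" for the 3-cycle is identified; finding those patterns is the only real content.
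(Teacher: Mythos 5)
There is a genuine gap: your central step---``ruling out'' the antisymmetric $\eta$'s that produce a directed 3-cycle---is both circular and factually wrong. Circular, because nothing in the hypotheses forbids a 3-cycle: transitivity is exactly the conclusion to be proved, so discovering that some admissible $\eta$ yields a 3-cycle would refute the lemma, not eliminate that $\eta$; an $\eta$ can only be discarded if it contradicts the one constraint you actually have, namely the tournament condition (antisymmetry). And factually wrong, because the dichotomy you rely on (``push until a 3-cycle forces $\eta(1,1)=\eta(1,-1)$'') fails: if $\eta(1,1)=0$ and $\eta(1,-1)=1$, antisymmetry gives $\eta(-1,-1)=1$ and $\eta(-1,1)=0$, so $x\to y$ if and only if $y<_2 x$; the edges follow the reverse of $<_2$, the tournament is transitive, and no choice of witness triples will ever produce a 3-cycle. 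In that branch the correct conclusion is not $\eta(1,1)=\eta(1,-1)$ but ``the direction coincides with $<_2$ or its reverse'', which is one of the allowed outcomes of the lemma; your normalization ``fix $a<_1 b$ with $a\to b$ and conclude $\eta(1,\cdot)\equiv 1$'' cannot work for the same reason, since one agreeing pair does not force global agreement with $<_1$.

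The antisymmetry constraint you wrote down already does all the work, with no 3-cycle analysis. Since $<_1$ and $<_2$ are both total orders, only the inputs $(\pm1,\pm1)$ are realized by distinct vertices (your inputs $(1,0)$ and $(0,1)$, hence the count of $16$, never occur), so an antisymmetric $\eta$ is determined by the two bits $\eta(1,1)$ and $\eta(1,-1)$. If they are equal, then together with $\eta(-1,-1)=1-\eta(1,1)$ and $\eta(-1,1)=1-\eta(1,-1)$ the value of $\eta$ depends only on the first coordinate, and the edges follow $<_1$ or its reverse; if they differ, then $\eta(-1,1)=1-\eta(1,-1)=\eta(1,1)$, so $\eta$ depends only on the second coordinate, and the edges follow $<_2$ or its reverse. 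In every case $T$ is the tournament of a total order, hence transitive. This short case analysis is the paper's proof; your write-up should replace the 3-cycle hunt with it (even a proof by contradiction starting from a 3-cycle would still have to derive this one-coordinate dependence from antisymmetry first, at which point the 3-cycle adds nothing).
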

\begin{proof}
  Let~$\eta : \{-1,0,1\}^2 \to \{0,1\}$ be a function such that such that~$\lambda$ factors as
  $\lambda(x,y) = \eta(\ot(x,y), \ot_\sigma(x,y))$.
  Since~$T$ is a tournament, it is then simple to verify that~$\eta(-1,-1) = 1 - \eta(1,1)$
  and~$\eta(-1,1) = 1 - \eta(1,-1)$.
  It follows that the value of~$\eta$ only actually depends on one of the two coordinates.
  It is then simple to verify that~$T$ is transitive, with either the same order
  or the reverse of the order corresponding to this coordinate.
\end{proof}

\subsection{Twin-width}
We recall the definition of twin-width for the sake of completeness,
however we will not use it directly, relying on higher level results instead.

In a binary relational structure~$S$, two disjoint subsets~$X,Y$ of the domain are \emph{homogeneous}
if for all relations~$R$ of~$S$, if there are~$x \in X,y \in Y$ with~$x R y$,
then for all~$x' \in X, y' \in Y$, $x' R y'$, and symmetrically.
Thus, the relations between~$X$ and~$Y$ do not distinguish any vertex.
For example, in a graph, $X,Y$ are homogeneous if and only if
the bipartite graph between~$X$ and~$Y$ is either empty or complete;
and in a tournament, $X$ and~$Y$ are homogeneous if and only if
all edges are oriented from~$X$ to~$Y$, or all from~$Y$ to~$X$.

A \emph{contraction sequence} for a structure~$S$ is a sequence
$\Pc_n,\dots,\Pc_1$ of partitions of the domain of~$S$,
where~$\Pc_n$ is the partition into singletons, $\Pc_1$ is the partition in a single part,
and~$\Pc_i$ is obtained by merging two parts of~$\Pc_i$.
For a given part~$X \in \Pc_i$, the \emph{error degree} of~$X$ in~$(S,\Pc_i)$
is the number of parts~$Y \in \Pc_i \setminus \{X\}$ that are not homogeneous to~$X$.
The \emph{width} of the contraction sequence is the maximum error degree of~$X$ in~$(S,\Pc_i)$
over all choices of~$i \in [n]$ and~$X \in \Pc_i$.
Finally, the \emph{twin-width} of~$S$, denoted~$\tww(S)$, is the minimum width of a contraction sequence for~$S$.

Twin-width is characterised by grid-like structures in adjacency matrices.
Recall that a division of a matrix is a partition of rows and columns into intervals.
Then, a \emph{rank-$k$ division} is a $k$-division in which every cell has rank at least~$k$.
Bonnet et al.\ proved
\begin{theorem}[{\cite[Theorem~7]{twin-width4}}]
  \label{thm:grid-theorem}
  There are functions~$f,g$ such that for any graph (or binary structure)~$G$ and any order~$<$ on~$V(G)$,
  \begin{itemize}
    \item if $\tww(G,<) \ge f(k)$, then the matrix~$A_{(G,<)}$ has a rank-$k$ division, and
    \item if the matrix~$A_{(G,<)}$ has a rank-$g(k)$ division, then $\tww(G,<) \ge k$.
  \end{itemize}
  Furthermore there is an \fpt algorithm which given~$G$, $<$, and~$k$,
  finds either a rank-$k$ division in~$A_{(G,<)}$ or a contraction sequence of width~$f(k)$ for~$(G,<)$.
\end{theorem}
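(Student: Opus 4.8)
The plan is to establish the two inequalities separately and then read the algorithm off the proof of the second one. Throughout I identify $V(G)$ with $[n]$ along the order $<$, so that $A_{(G,<)}$ is an $n\times n$ matrix whose rows and columns carry this order, and I interpose the weaker notion of a \emph{mixed minor}: a division every cell of which is \emph{mixed}, meaning it has at least two distinct rows and at least two distinct columns. A rank-$k$ division with $k\ge 2$ is in particular a mixed minor whose division has $k$ parts on each side, so the two notions of a ``rich interval division'' coincide up to a function on the order; the only genuinely lossy passage is the upgrade from ``mixed'' to ``$k$-diverse'', and this is where a Ramsey-type blow-up enters.

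\emph{First inequality} ($A_{(G,<)}$ has a rank-$g(k)$ division $\Rightarrow \tww(G,<)\ge k$). I would prove the contrapositive, following \cite{twin-width4}: a contraction sequence $\Pc_n,\dots,\Pc_1$ of width $d$ certifies that $A_{(G,<)}$ has no mixed minor of order $c(d+1)$ for an absolute constant $c$, and then $g(k)=c(k+1)$ works, so in fact $g$ may be taken linear. The heart is that a low-width contraction sequence cannot ``cross'' a fine interval division of the matrix: given a putative mixed minor with row-intervals $R_1<\dots<R_m$ and column-intervals $C_1<\dots<C_m$, I walk along $\Pc_n,\dots,\Pc_1$ and stop at the first partition whose trace on this interval structure has become too coarse; just before that step every part is confined, on each side, to a short union of consecutive intervals, and at that step one reads off $d+1$ pairwise non-homogeneous parts meeting a common part, using the two distinct rows (or columns) that each mixed cell straddling it is forced to contain, a pigeonhole count over the $m^2$ cells converting ``$m$ large'' into ``$d$ large''. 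Making ``too coarse'' and the extraction of the non-homogeneous witnesses precise is routine.

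\emph{Second inequality and the algorithm} ($\tww(G,<)\ge f(k) \Rightarrow$ rank-$k$ division, together with the FPT procedure). This is the substantial part and rests on the Marcus--Tardos theorem. Step one: if $A_{(G,<)}$ has no rank-$k$ division, then it has no mixed minor of order $T(k)$, where $T$ is a fast-growing function; this is the Ramsey step, since each cell of a huge mixed minor contains one of finitely many ``corner'' $3\times 3$ mixed patterns, so colouring cells by the pattern that occurs and passing to a large monochromatic sub-division, one then promotes, within one pattern class, ``mixed'' to ``$k$-diverse''. Step two: from the absence of a mixed minor of order $T$ I build a narrow contraction sequence along the canonical interval divisions $\Dc_0,\Dc_1,\dots$ of geometrically increasing granularity, where $\Dc_0$ is the division into singletons and $\Dc_{\ell+1}$ glues consecutive intervals of $\Dc_\ell$ in pairs; read as partitions, with each coarsening refined into a sequence of binary merges, these form a contraction sequence whose error degree at level $\ell$ is the number of cells in a row or column of $\Dc_\ell\times\Dc_\ell$ that fail to be homogeneous. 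Recording the non-homogeneous cells at every scale in an auxiliary binary matrix and noting that a row or column carrying too many of them would, across scales, assemble into a mixed minor of order exceeding $T$, the Marcus--Tardos theorem bounds this number by a constant depending only on $T$, so the constructed sequence has width $f(k)=f(T(k))$. Step three: the construction is effective---run it, and if the Marcus--Tardos threshold is ever about to be crossed, the accumulated non-homogeneous cells form a mixed minor of order greater than $T$, from which Step one's Ramsey argument carves a rank-$k$ division; otherwise it returns a width-$f(k)$ contraction sequence. As all bounds are computable and the Ramsey and Marcus--Tardos steps are constructive, this is an algorithm FPT in $k$.

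\emph{Main obstacle.} The hard part is Step two of the second inequality: turning the density bound supplied by Marcus--Tardos into a genuine contraction sequence of bounded error degree. The naive interval-halving sequence does not quite suffice---one must carry along with the current interval division a record of the errors created so far and argue that these do not proliferate, which is exactly the ``sequence of divisions'' bookkeeping of \cite{twin-width4}. A secondary delicate point is the promotion in Step one: one must take the monochromatic sub-division large enough that a single corner type already forces genuine rank in every cell.
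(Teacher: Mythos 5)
First, note that the paper does not prove this statement at all: it is imported verbatim as \cite[Theorem~2]{twin-width4}, so the only meaningful comparison is with the proof in that reference, which works with diversity (grid rank) directly rather than through mixed minors.

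Your proposal has a genuine gap, and it is the premise announced in your opening paragraph: that rank divisions and mixed minors ``coincide up to a function on the order'', with only a Ramsey blow-up needed to pass from mixed to $k$-diverse. This is false, and both halves of your argument break on the same counterexample. Order $[2n]$ naturally and let $G$ be the complete bipartite graph between odd and even positions; the matrix $A_{(G,<)}$ is the checkerboard $m_{i,j}=(i+j \bmod 2)$. Every cell of the division into consecutive pairs has two distinct rows and two distinct columns, so this matrix has $t$-mixed minors for all $t$; yet it has only two distinct rows in total, hence no cell of any division is ever $3$-diverse (grid rank at most $2$), and $\tww(G,<)\le 1$ since the odd positions are pairwise twins, as are the even ones. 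Consequently: (i) in your ``first inequality'' the claim that a width-$d$ contraction sequence certifies the absence of mixed minors of order $c(d+1)$ is simply wrong --- bounded twin-width excludes large \emph{rank} divisions, not large mixed minors, and mixedness of a cell is too weak to force any red edges; the correct argument must use the $k$ distinct rows/columns of each cell. (ii) In Step one of your ``second inequality'', no Ramsey argument can promote a huge mixed minor to a rank-$k$ division: in the checkerboard every cell already contains the \emph{same} corner pattern, and passing to any monochromatic sub-division never produces a $3$-diverse cell. So the chain ``no rank-$k$ division $\Rightarrow$ no large mixed minor $\Rightarrow$ bounded twin-width'' collapses at its first link, and what your Step two actually yields is (a version of) the one-directional mixed-minor theorem of twin-width~I, which is strictly weaker than the statement needed here; the $k$-diverse cells are precisely what the present paper later consumes (\cref{thm:rich-division-extraction} and the permutation extractions), so the weakening is not harmless.

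The proof in \cite{twin-width4} confronts exactly this difficulty by never leaving the world of diversity: Marcus--Tardos is applied to the auxiliary $0,1$-matrix marking the $k$-diverse cells of a division (a $k$-grid of marked cells coarsens to a rank-$k$ division, because diversity is monotone under passing to a supermatrix), and the construction of the bounded-width contraction sequence must in addition handle cells that are non-constant but of low diversity, by refining parts according to the at most $k$ row/column classes inside such cells. Your dyadic-halving bookkeeping, with ``mixed'' as the error notion, does not account for either point. Your first inequality, by contrast, is salvageable once restated for diverse cells instead of mixed ones.
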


\subsection{First-order logic}
We assume general familiarity with first-order logic on graphs and relational structures.
Recall from the introduction that we are primarily interesting in the \textsc{First-Order Model Checking}
algorithmic problem, which asks given as input a structure~$S$ and a first-order formula~$\phi$,
if~$\phi$ is satisfied by~$S$, which is denoted by~$S \models \phi$.
We consider the complexity of this problem parameterised by the size~$\card{\phi}$.
In general, this problem is \aw-complete~\cite[Theorem~1]{downey1996queries}.
In fact, the proof shows more precisely that
\begin{theorem}[{\cite[Proposition~2]{downey1996queries}}]
  \label{thm:FO-aw-hard}
  \textsc{First-Order Model Checking} on the class of all graphs is \aw-complete.
\end{theorem}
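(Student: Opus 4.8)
The plan is to establish both directions of the claimed $\aw{}$-completeness: membership of \textsc{First-Order Model Checking} (\textsc{FOMC}) in $\mathrm{AW}[*] = \bigcup_{t \ge 1}\mathrm{AW}[t]$, and $\mathrm{AW}[t]$-hardness for every~$t$. Recall that $\mathrm{AW}[t]$ is defined via \emph{alternating weighted satisfiability} for Boolean circuits of weft at most~$t$: an instance consists of a circuit~$C$, a partition of its input gates into blocks $X_1,\dots,X_\ell$, and an alternation pattern $\exists X_1\,\forall X_2\,\exists X_3\cdots$, and one asks whether there is a weight-$k$ assignment to~$X_1$ such that for all weight-$k$ assignments to~$X_2$, there is one to~$X_3$, and so on, making~$C$ accept; the parameter is~$k$ together with the number of blocks~$\ell$. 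I would prove membership by translating prenex FO sentences directly into such instances, and hardness by a reduction in the converse direction via graph gadgets.

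\textbf{Membership.} Given $(G,\phi)$, I would first put $\phi$ in prenex form $Q_1 x_1 \cdots Q_m x_m\,\psi$ with $\psi$ quantifier-free, changing $|\phi|$ only by a constant factor. I then build an alternating weighted satisfiability instance with $m$ blocks of Boolean variables, the $i$-th block containing a variable $b_{i,v}$ for each $v \in V(G)$, with alternation pattern inherited from $Q_1,\dots,Q_m$ and weight bound $k=1$ in each block, so that exactly one vertex is selected per FO quantifier. The circuit enforces weight exactly one in each block and then evaluates $\psi$ under the selection, replacing each atom $x_i = x_j$ by $\bigvee_{v\in V(G)} (b_{i,v}\wedge b_{j,v})$, each atom $R(x_i,x_j)$ by $\bigvee_{(u,v)\in R^G}(b_{i,u}\wedge b_{j,v})$, and copying the connectives of $\psi$ verbatim. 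The alternations of the weighted-SAT instance carry the quantifier alternation, so the weft is bounded by a function of $|\phi|$ alone; hence \textsc{FOMC} lies in $\mathrm{AW}[t]$ for $t=t(|\phi|)$, and thus in $\mathrm{AW}[*]$.

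\textbf{Hardness.} Fix~$t$. Starting from an alternating weighted weft-$t$ circuit satisfiability instance, I would first normalize~$C$ to a layered monotone form with at most~$t$ layers of unbounded-fan-in gates, separated by bounded-depth Boolean logic. I encode the $n$ input gates as vertices $u_1,\dots,u_n$ of a graph~$H$, and the bounded number of layers together with the bounded-size sub-circuits between consecutive large-gate layers by gadget vertices whose adjacencies record the wiring; rigid attached gadgets give each gate type a colour that FO can address, and the FO quantifiers are relativized to the ``input'' vertices. The sentence~$\Phi$ has $\ell$ blocks of quantifiers matching the alternation pattern, the $i$-th block having $k$ variables that pick the inputs set true at level~$i$; its matrix expresses that the picked assignment satisfies~$C$, which is possible with a formula of size bounded in terms of $t$, $\ell$ and~$k$ because each large OR- (resp.\ AND-) gate costs only one additional $\exists$ (resp.\ $\forall$) witness and there are only~$t$ layers of them. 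Then $H \models \Phi$ iff the instance is positive, with the parameter of~$\Phi$ depending only on $t$, $\ell$, $k$; this gives $\mathrm{AW}[t]$-hardness for every~$t$, hence $\aw{}$-hardness. Together with membership, \textsc{FOMC} is $\aw{}$-complete on all graphs, the restriction from arbitrary relational structures to graphs being the usual incidence-encoding argument.

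\textbf{Main obstacle.} The crux is the hardness construction: a \emph{fixed-size} FO formula cannot evaluate an arbitrary unbounded-fan-in gate, so the reduction must exploit the bounded weft to confine the large gates to~$t$ layers and spend a single quantifier per layer, while keeping the formula size a function of the parameter alone, never of~$n$. All the care is in the gadget encoding of the wiring, the relativization of quantifiers to input vertices, and matching the alternation pattern exactly; the membership direction and the reduction to graphs are routine by comparison.
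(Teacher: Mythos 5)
The paper does not actually prove this statement: it is imported verbatim from Downey, Fellows et al.~\cite{downey1996queries}, so there is no internal argument to compare yours against. What you have written is a reconstruction of the standard proof from the parameterized complexity literature, and in outline it is the right one (it is essentially the argument of the cited paper). Two substantive remarks. First, in the membership direction your concluding inference is off as stated: since $\mathrm{AW}[*]=\bigcup_t \mathrm{AW}[t]$ and each slice $\mathrm{AW}[t]$ is defined by \emph{constant} weft (and constant depth), you cannot conclude membership from a weft bound $t(|\phi|)$ that grows with the parameter. Fortunately your own construction gives what is needed: each atom becomes a single large OR, the weight-exactly-one checks are single large gates, and the connectives of $\psi$ are small gates, so the weft is an absolute constant; what does grow with $|\phi|$ is the \emph{depth} of the small-gate part, which you should flatten at parameter-bounded cost (e.g.\ rewriting $\psi$ in DNF over the atom gates) before invoking the definition of the slices. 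Second, the hardness direction is only a skeleton: the $t$-normalization of weft-$t$ circuits, the fact that a single quantified variable can dispose of an entire layer of large gates (this needs the normalized circuit to be tree-like, not merely layered), the rigidity of the colour gadgets once everything is squeezed into an undirected graph so that a fixed FO formula can address gate types, and the bookkeeping showing $|\Phi|$ depends only on $(t,\ell,k)$ are precisely where the cited proof does its work, and you assert them rather than establish them. You do correctly recall that the alternation number $\ell$ is part of the parameter, which is essential for the formula size to be parameter-bounded. So: right approach, with one technical misstep in membership that is repairable inside your own construction, and a hardness half that remains a sketch of the known argument rather than a proof.
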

On classes of structures with bounded twin-width, FO model checking is \fpt
as long as a witness of twin-width is given.
\begin{theorem}[{\cite[Theorem~1]{twin-width1}}]
  \label{thm:FO-fpt}
  There is an algorithm that, given a binary structure~$S$ on~$n$ vertices,
  a contraction sequence of width~$k$ for~$S$, and an FO formula~$\phi$,
  decides if~$S \models \phi$ in time~$f(k,\phi) \cdot n$.
\end{theorem}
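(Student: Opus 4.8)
The plan is to prove this by induction on the number of quantifiers of~$\phi$, eliminating one quantifier per step while using the contraction sequence to keep the bookkeeping of size~$f(k,\card{\phi})$. First I would reformulate the given contraction sequence~$\Pc_n,\dots,\Pc_1$ as a sequence of \emph{trigraphs}~$S = S_n, S_{n-1}, \dots, S_1$: the vertices of~$S_i$ are the parts of~$\Pc_i$; a pair of parts gets a \emph{black} label (one component per relation symbol, recording the orientation for arcs) when they are homogeneous with all pairs present, the empty label when homogeneous with none, and a \emph{red} label otherwise. The width hypothesis says exactly that every part has at most~$k$ red neighbours in every~$S_i$. Together with the rooted \emph{contraction tree} whose internal node at level~$i$ is a part of~$\Pc_i$ and whose leaves are the singletons of~$\Pc_n$, this encodes all of~$S$ in a form that is ``sparse at every level''.

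The core of the argument is to show that a bounded amount of information, local with respect to the red labels, suffices to evaluate formulas of bounded quantifier rank. Fix~$q = \card{\phi}$. I would attach to each part~$X$ of the current~$\Pc_i$ a \emph{local type}: the rank-$q$ first-order type of the substructure of~$S_i$ induced by the ball of some radius~$r = r(q)$ around~$X$ in the red graph, with each part weighted by its number of original vertices \emph{capped} at a threshold~$t(q)$. Since the red graph has maximum degree~$\le k$, this ball has at most~$k^r+1$ vertices, so there are only~$f(k,q)$ possible local types. I would then set up a dynamic program that processes the sequence from~$\Pc_1$ down to~$\Pc_n$ --- i.e.\ repeatedly \emph{splits} one part into two --- maintaining the local type of every current part: each split alters the red neighbourhood of only~$O(k)$ parts, so the types can be recomputed in~$f(k,q)$ time, giving total running time~$f(k,q) \cdot n$. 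The existential quantifier is removed by a Feferman--Vaught-style composition: a candidate witness is a leaf of the contraction tree, specified by its root-to-leaf branch, and ``sweeping'' that branch through the dynamic program lets one eliminate~$\exists x$ at the price of enlarging the type alphabet by a function of~$k$ and~$q$; reading off~$S \models \phi$ at level~$\Pc_1$ finishes the recursion.

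The step I expect to be the real obstacle is the \emph{composition lemma} underlying both the update rule and the quantifier elimination: one must show that the rank-$q$ type of~$S$ itself is a function of the capped-weight local types present in any single trigraph~$S_i$, and that this function transforms predictably under one split. This is a quantitative Feferman--Vaught theorem relative to the contraction sequence, and the subtlety is that a part represents a blow-up of its local picture which the logic can distinguish only up to the threshold~$t(q)$ --- so the capped sizes must be exactly the right invariant, and the homogeneity of non-red pairs must be exploited to argue that what lies outside a part's red ball is invisible to rank-$q$ quantification. Note that Gaifman locality applied to~$S$ directly is useless here (the Gaifman graph of a tournament is complete); the whole point is that it becomes usable once transported to the sparse red graph of each~$S_i$, which is precisely what the composition lemma licenses. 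Once this lemma is established, the trigraph reformulation, the counting bound on local types, and the linear-time bookkeeping along the sequence are all routine.
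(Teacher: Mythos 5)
First, note that the paper does not prove this statement at all: it is quoted verbatim from Twin-width I (Bonnet, Kim, Thomassé, Watrigant), so the comparison is with that proof. Your overall architecture --- pass to the trigraphs $S_n,\dots,S_1$, exploit that the red (non-homogeneous) graph has degree at most~$k$, transport locality from the useless Gaifman graph of~$S$ to the sparse red graph, and run a dynamic program along the sequence with a state space of size~$f(k,q)$ per part --- is indeed the skeleton of the original argument. But the invariant you propose to maintain is wrong, and the step you flag as ``the real obstacle'' is not merely hard: as formulated, it is false. You define the local type of a part~$X$ as the rank-$q$ type of the ball around~$X$ \emph{in the quotient trigraph}~$S_i$, with parts weighted by their sizes capped at~$t(q)$. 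This treats each part as a blow-up of a single vertex, i.e.\ as a set of mutually twin vertices attached uniformly to the rest according to the black/red/empty labels. Parts of a contraction sequence are nothing of the sort: the induced substructure of~$S$ inside a part is arbitrary and is simply invisible in~$S_i$, and a red edge between two parts only records that the bipartite structure between them is \emph{not} homogeneous --- it says nothing about what that structure is. Consequently the capped-weight trigraph~$S_i$ does not determine~$S$ even up to rank-$q$ equivalence (take $i=1$: it is a single weighted vertex), so no composition lemma can recover the rank-$q$ type of~$S$ from ``the capped-weight local types present in any single trigraph~$S_i$''.

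The correct invariant has to refer to the original structure, not the quotient: in Twin-width I one stores, for each part~$u$ of~$\Pc_i$, a bounded-size summary (a pruned ``morphism tree'') of which local prenex formulas of depth~$\le q$ are satisfied by tuples of \emph{actual vertices of~$S$} lying in the parts within bounded red-distance of~$u$, and this summary is computed \emph{inductively along contractions}, starting from the singleton partition~$\Pc_n$ where it is trivial and updating it when two parts merge, using homogeneity of non-red pairs to argue that only the red ball matters. Your dynamic program runs in the opposite direction, from~$\Pc_1$ to~$\Pc_n$ by splits; since each~$S_i$ is explicitly computable from the given sequence this is not by itself fatal, but it deprives you of the only available composition mechanism (merging two summaries of real induced substructures into one), and with the quotient-based types it cannot be repaired --- splitting a part would require conjuring information about its internal structure that your state simply does not contain. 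The same defect undermines the quantifier-elimination step: a witness for~$\exists x$ is a concrete vertex of~$S$, and ``sweeping a root-to-leaf branch'' only makes sense if the per-part state already encodes satisfaction data about concrete vertices, which is exactly what the capped-weight trigraph type discards. So the proposal identifies the right locality phenomenon but builds it on an invariant that cannot work; the heart of the cited proof is precisely the design of a part-indexed summary of the \emph{original} structure that is both of bounded size and composable under contractions.
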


\emph{Interpretations} and \emph{transductions} are ways to describe
transformations of structures using logical formulæ.
For two relational signatures~$\Sigma,\Delta$, an \emph{FO interpretation}~$\Phi$ from~$\Sigma$ to~$\Delta$
consists of, for each relation~$R \in \Delta$ of arity~$r$,
an FO formula~$\phi_R(x_1,\dots,x_r)$ over the language~$\Sigma$,
and one last formula~$\phi_{dom}(x)$ again over~$\Sigma$.
Given a $\Sigma$-structure~$S$, the result~$\Phi(S)$ of the interpretation is the $\Delta$-structure obtained by
\begin{itemize}[noitemsep]
  \item choosing the same domain as~$S$,
  \item interpreting each relation~$R \in \Delta$ as the set of tuples $(v_1,\dots,v_r)$ such that~$S \models \phi_R(v_1,\dots,v_r)$,
  \item and finally taking the substructure induced by~$\setst{v}{S \models \phi_{dom}(v)}$.
\end{itemize}
For instance, the square of a graph~$G$ is defined as the graph with vertices~$V(G)$,
and in which~$x,y$ are adjacent if and only if their distance in~$G$ is at most~2.
This is a first-order interpretation in which the edge relation is interpreted by the formula
\[ \phi(x,y) \ = \ x \neq y \land \big(E(x,y) \lor \exists z. \ E(x,z) \land E(z,y) \big) \]
where~$E(x,y)$ denotes adjacency.
The domain formula is trivial since we do not wish to delete vertices in this interpretation.
It is well-known that the composition of two FO interpretations can itself be written as an FO interpretation.

\emph{Transductions} generalise interpretation with a non-deterministic colouring step.
Consider relational signatures~$\Sigma$ and~$\Sigma^+$,
where~$\Sigma^+$ consists of~$\Sigma$ augmented by~$r$ new \emph{unary} relations~$C_1,\dots,C_r$.
If~$S$ is a structure over the signature~$\Sigma$,
we denote by~$S^+$ the set of structures over~$\Sigma^+$ whose domain
and interpretation of~$\Sigma$ coincide with that of~$S$.
Thus, a~$T \in S^+$ is uniquely described by the interpretation of $C_1,\dots,C_r$ within~$V(S)$.
A FO transduction~$\Phi : \Sigma \to \Delta$ is described by
the choice of~$\Sigma^+$ augmenting~$\Sigma$ with unary relations,
together with an FO interpretation $\Phi_I$ from~$\Sigma^+$ from~$\Delta$.
The result of the transduction~$\Phi$ on a $\Sigma$-structure~$S$ is then the set of structures
\[ \Phi(S) = \setst{\Phi_I(T)}{T \in S^+} \]
Additional operations such as duplication of vertices are often allowed in transductions,
but these will not be needed in this work.
Like interpretations, transductions can be composed.

Given classes~$\Cc,\Dc$ of relational structures,
we say that~$\Cc$ \emph{interprets} (resp.\ \emph{transduces})~$\Dc$
if there is an FO interpretation (resp.\ transduction)~$\Phi$ such that $\Phi(\Cc) \supseteq \Dc$.
The interpretation~$\Phi$ is allowed to also output structures that are not in~$\Dc$.
We furthermore say that~$\Cc$ \emph{efficiently interprets}~$\Dc$ if there is also
an algorithm that given as input~$D \in \Dc$, finds in polynomial time some~$C \in \Cc$ such that~$\Phi(C) = D$.
It is straightforward to show that this additional condition gives an \fpt reduction for model checking.
\begin{lemma}
  \label{lem:interpretation-reduction}
  If~$\Cc$ efficiently interprets~$\Dc$, then there is an \fpt reduction
  from \textsc{FO Model Checking} on~$\Dc$ to \textsc{FO Model Checking} on~$\Cc$.
\end{lemma}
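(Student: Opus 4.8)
The plan is to use the standard \emph{syntactic pullback} of first-order formulas along an interpretation. Fix the interpretation $\Phi$ witnessing that $\Cc$ efficiently interprets $\Dc$, given by formulas $\phi_R(x_1,\dots,x_r)$ for each relation $R$ of the target signature $\Delta$, together with a domain formula $\phi_{dom}(x)$, all over the source signature $\Sigma$. The first step is to define, by induction on the structure of a $\Delta$-formula $\psi$, a $\Sigma$-formula $\psi^\Phi$ such that for every $\Sigma$-structure $S$ one has $\Phi(S) \models \psi$ if and only if $S \models \psi^\Phi$. Concretely: an atom $R(x_{i_1},\dots,x_{i_r})$ is replaced by $\phi_R(x_{i_1},\dots,x_{i_r})$, with bound variables renamed to avoid capture; equality atoms and Boolean connectives are left untouched; and quantifiers are relativised to the domain formula, i.e.\ $\exists x.\,\theta$ becomes $\exists x.\,\phi_{dom}(x) \land \theta^\Phi$ and $\forall x.\,\theta$ becomes $\forall x.\,\phi_{dom}(x) \to \theta^\Phi$. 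The correctness of this translation is a routine induction, using precisely the definition of $\Phi(S)$ (same domain restricted by $\phi_{dom}$, each relation $R$ interpreted by $\phi_R$).

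Since $\Phi$ is fixed, each of the finitely many formulas $\phi_R$ and $\phi_{dom}$ has constant size, so $\card{\psi^\Phi} \le c_\Phi \cdot \card{\psi}$ for a constant $c_\Phi$ depending only on $\Phi$, and $\psi^\Phi$ is computable from $\psi$ in linear time. The reduction itself now reads: given an instance $(D,\psi)$ of \textsc{FO Model Checking} on $\Dc$, run the polynomial-time algorithm provided by \emph{efficiently interprets} to produce some $C \in \Cc$ with $\Phi(C) = D$, and output the instance $(C, \psi^\Phi)$. This map is computable in polynomial time in $\card{D} + \card{\psi}$, it satisfies $D \models \psi$ iff $C \models \psi^\Phi$ by the translation lemma, and the new parameter $\card{\psi^\Phi}$ is bounded by a computable function of the old parameter $\card{\psi}$; hence it is an \fpt{} reduction.

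I do not expect any real obstacle here. The only point that genuinely requires the hypothesis is the production of the preimage $C$: an abstract interpretation would only tell us that such a $C$ exists, giving a reduction that is not effective, whereas the \emph{efficiently} qualifier supplies it in polynomial time. The small technical care needed is the usual bound-variable renaming in the substitution step, and noting that the definition of interpretation used in this paper permits vertex deletion, which is exactly why the relativisation of quantifiers to $\phi_{dom}$ must be included.
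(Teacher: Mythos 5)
Your proof is correct and is exactly the routine argument the paper has in mind: the paper states this lemma without proof, remarking only that it is ``straightforward'', and the intended argument is precisely your syntactic pullback (substituting $\phi_R$ for atoms, relativising quantifiers to $\phi_{dom}$) combined with the polynomial-time computation of a preimage $C$ with $\Phi(C)=D$ supplied by the \emph{efficiently} hypothesis. Your observations that the formula translation depends only on the fixed interpretation $\Phi$ (so the new parameter is bounded by a function of the old one) and that the relativisation step is needed because interpretations may delete vertices are exactly the right points of care.
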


Recall that~$\Oc_\Perm$ denotes the class of bi-orders, which are encodings of permutations.
The following is a folklore result, see e.g.~\cite[Lemma~12]{twin-width4} for a very similar claim.
\begin{lemma}
  \label{lem:perm-inter-graph}
  The class~$\Oc_\Perm$ of bi-orders efficiently interprets the class of all graphs.
\end{lemma}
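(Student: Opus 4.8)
The plan is to encode an arbitrary graph $G$ on vertex set $[n]$ as a single bi-order on a domain of size roughly $2n$ (or $n + \binom{n}{2}$, depending on the encoding chosen), in such a way that a fixed first-order interpretation recovers $G$. The fundamental difficulty is that a bi-order carries only ``positional'' information: the structure $\Oc_\sigma = ([m],<,<_\sigma)$ assigns to each pair of elements merely a pair of order types in $\{-1,1\}^2$, so a priori a bi-order on $[m]$ only distinguishes $O(m^2)$ bits — enough in principle to carry the $\binom n2$ bits of $G$, but the encoding must be arranged so that the bits are individually addressable by a formula of bounded quantifier rank.

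First I would fix the encoding. Take the domain to be $A = V \cup P$ where $V = \{v_1,\dots,v_n\}$ is a set of ``vertex markers'' and $P$ is a set of $\binom n2$ ``pair markers'', one marker $p_{ij}$ for each unordered pair $i<j$. In the first order $<$, place all of $V$ below all of $P$, with $v_1 < \dots < v_n$, and with the pair markers sorted lexicographically by $(i,j)$. In the permuted order $<_\sigma$, I would choose the positions of the pair markers so that $p_{ij}$ lies (in $<_\sigma$) between $v_i$ and $v_{i+1}$ exactly when $ij$ is an edge of $G$, and lies somewhere outside the block $[v_i, v_{i+1})$ otherwise (a concrete choice: sort the pair markers in $<_\sigma$ in such a way that the $<_\sigma$-rank of $p_{ij}$ relative to the $v$'s encodes $i$, $j$, and the adjacency bit — there is enough room since the $v$'s split the line into $n+1$ gaps and we may refine within a gap). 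The point is that ``$v_i$ and $v_j$ are adjacent in $G$'' becomes a first-order statement about the relative $<$- and $<_\sigma$-positions of the three elements $v_i$, $v_j$, $p_{ij}$. I would then write $\phi_{dom}(x)$ to select the vertex markers (those $x$ that are $<$-below every pair marker — and ``being a pair marker'' is itself first-order definable, e.g.\ as ``not $<$-minimal among the non-vertex part'' plus a positional condition, or more simply by adding nothing and just using the $<$-cut between $V$ and $P$), and write $\phi_E(x,y)$ asserting the existence of a $p = p_{ij}$ witnessing adjacency, with the positional constraints spelled out using only $<$ and $<_\sigma$ and a bounded number of quantifiers.

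The main obstacle will be pinning down the positional arithmetic so that the interpretation formula genuinely has \emph{bounded} size, independent of $n$: one must make sure that ``$p$ is the pair marker associated to $v_i$ and $v_j$'' is expressible without counting, i.e.\ purely from the order types $\ot(v_i, p), \ot(v_j, p), \ot_\sigma(v_i,p), \ot_\sigma(v_j,p)$ and perhaps the $<$-intervals delimited by consecutive $v$'s. A clean way to do this is to arrange, in $<$, that $p_{ij}$ sits immediately after $v_i$ in a small block and before $v_{i+1}$, so that ``the vertex marker immediately $<$-preceding $p$'' identifies $i$; the second index $j$ and the adjacency bit are then read off from the $<_\sigma$-position of $p$, for which I would reserve, inside the $<_\sigma$-gap $[v_i,v_{i+1})$, a sub-block whose internal order lists $j$ and the bit. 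Getting a consistent such layout (a permutation $\sigma$ on $A$ realising all the prescribed relative orders simultaneously) is a finite consistency check; since each constraint only concerns the position of one pair marker relative to the $v$-skeleton, there is no conflict and $\sigma$ exists. Finally, polynomial-time computability of $\sigma$ from $G$ is immediate from the explicit construction, giving the ``efficiently interprets'' conclusion, and the reduction for model checking then follows from Lemma~\ref{lem:interpretation-reduction}. For the write-up I would likely just cite~\cite[Lemma~10]{twin-width4} for the bulk of the bookkeeping and present only the encoding and the formula $\phi_E$.
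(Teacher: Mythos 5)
The paper gives no proof of this lemma at all: it is treated as folklore, with a pointer to \cite[Lemma~10]{twin-width4}. Your proposal instead attempts an explicit encoding, and the encoding as described has a genuine gap in the addressing of the \emph{second} endpoint. In your layout, every pair marker is $<$-above every vertex marker, so the order $<$ carries no information linking a marker $p$ to either of its endpoints; the only link you create is that $p_{ij}$ sits in the $<_\sigma$-gap $(v_i,v_{i+1})$ when $ij$ is an edge. But all markers $p_{ij'}$ for edges at $v_i$ sit in that same gap, and inside a gap the position of $p$ relative to any vertex marker $v_j$ is determined solely by whether $j\le i$ or $j\ge i+1$. Hence the atomic type of the triple $(v_i,v_j,p)$ in $(<,<_\sigma)$ is the same for $p=p_{ij}$ and $p=p_{ij'}$ whenever $j,j'$ lie on the same side of $i$, so any formula $\exists p\,\psi(x,y,p)$ built from these comparisons declares $v_iv_j$ adjacent as soon as $v_i$ has \emph{some} neighbour on that side; it cannot recover $G$. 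Your proposed fix --- reserving a sub-block inside the gap ``whose internal order lists $j$ and the bit'' --- is exactly the step that fails: reading an index off a position within a block is counting, which a first-order formula of bounded size cannot do.

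The missing idea in the folklore construction is to use the \emph{two} orders to address the \emph{two} endpoints separately, and to introduce markers only for actual edges rather than for all pairs: the marker of an edge $\{i,j\}$ is placed in the block immediately following $v_i$ in the first order and in the block immediately following $v_j$ in the second order (with the set of vertex markers made definable, e.g.\ by sentinels or by a recognisable local pattern, so that ``next vertex marker'' is expressible). Then $\phi_E(x,y)$ simply asserts the existence of a marker lying in the $<$-block of $x$ and in the $<_\sigma$-block of $y$, or symmetrically, which is a fixed formula independent of $n$; polynomial-time computability of the bi-order from $G$ is immediate. With that modification (or by genuinely deferring to \cite[Lemma~10]{twin-width4}, as the paper does) the argument goes through; as written, your $\phi_E$ does not exist.
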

Thus, using \cref{lem:interpretation-reduction,thm:FO-aw-hard},
\textsc{FO Model Checking} on~$\Oc_\Perm$ is \aw-complete.

FO transductions also preserve twin-width, up to some function.
\begin{theorem}[{\cite[Theorem~39]{twin-width1}}]
  \label{thm:tww-transduction}
  If~$\Sc$ is a class of binary structures with bounded twin-width and~$\Phi$ is an FO transduction defined on~$\Sc$,
  then~$\Phi(\Sc)$ also has bounded twin-width.
\end{theorem}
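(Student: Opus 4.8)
The plan is to decompose the transduction~$\Phi$ into elementary steps and track how twin-width grows through each of them, proving the per-structure bound $\tww(\Phi_I(S^+)) \le f(\tww(S),\Phi)$ for every $S \in \Sc$ and every monadic expansion~$S^+$ of~$S$; since $\tww$ is bounded on~$\Sc$, this gives the class statement. Recall that~$\Phi$ is the composition of three operations: a non-deterministic expansion of~$S$ by finitely many unary relations $C_1,\dots,C_r$, a pure interpretation~$\Phi_I$ given by formulas $\phi_R$ ($R\in\Delta$) and $\phi_{dom}$, and the passage to the substructure induced on~$\phi_{dom}$.

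First I would dispose of the two easy steps. In the definition of homogeneity of a pair of disjoint parts only binary relations between them intervene, so unary relations are irrelevant: a contraction sequence of width~$d$ for~$S$ is also one for~$S^+$, whence $\tww(S^+)=\tww(S)$ uniformly over the finitely many colourings. Likewise, restricting every partition of a contraction sequence to a subset of the domain (and discarding steps that become trivial) produces a contraction sequence of no larger width, so passing to an induced substructure cannot increase twin-width. Thus everything reduces to the claim that a \emph{pure interpretation} increases twin-width by at most a function of its size, i.e.\ $\tww(\Phi_I(S^+)) \le f(\tww(S^+),\Phi_I)$.

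For this I would induct on the total number of quantifiers occurring in the formulas~$\phi_R$. In the quantifier-free base case each new relation of~$\Phi_I(S^+)$ is a Boolean combination of atomic relations of~$S^+$ (and of equality), so two parts homogeneous with respect to all atomic relations of~$S^+$ are homogeneous with respect to every~$\phi_R$; the same contraction sequence works, the only cost being the fixed constant coming from the bounded output signature~$\Delta$. For the inductive step it suffices to eliminate one innermost existential quantifier (writing $\forall y\,\psi$ as $\lnot\exists y\,\lnot\psi$): adjoin to~$S^+$ a fresh relation interpreting the subformula~$\psi(\bar x,y)$ below the quantifier — which carries strictly fewer quantifiers, hence bounded twin-width by the induction hypothesis together with the base case used to glue atomic formulas — and then adjoin the projected relation $R'(\bar x):=\exists y\,R(\bar x,y)$. (Here one may pass temporarily through relations of arity larger than~$2$, for which twin-width and the statements used below are defined analogously.) So the whole statement comes down to a single assertion: if~$S'$ has a contraction sequence of width~$d$ and~$R$ is one of its relations, then the structure obtained by adding $R'(\bar x):=\exists y\,R(\bar x,y)$ has twin-width bounded by a function of~$d$ alone.

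The hard part is precisely this $\exists$-projection, and it is where the bookkeeping must be done carefully. The point is that a width-$d$ contraction sequence $\Pc_n,\dots,\Pc_1$ is extremely rigid: at each step~$\Pc_i$ every part interacts \emph{homogeneously} with all but at most~$d$ of the remaining parts, so whether~$R'$ holds between two parts~$X,Y$ is governed by (i) the finitely many possible homogeneous interaction types of~$X$ and of~$Y$ towards the blocks of~$\Pc_i$, and (ii) the $O(d)$ error-parts of~$X$ and of~$Y$ — altogether a bounded amount of information. Turning this observation into a contraction sequence of width~$g(d)$ for the augmented structure requires reorganising the order in which the contractions of~$S'$ are performed, so that at the moment each contraction happens the relevant witnesses~$y$ for the existential quantifier already lie in ``resolved'' parts, while the $O(d)$ error data is carried along as a bounded number of auxiliary unary colours. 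Iterating through the fixed number of quantifiers of~$\Phi_I$, each pass multiplying the width by a bounded factor, yields the function~$f$. One final remark: although \cref{thm:grid-theorem} characterises twin-width through grids, it does so only relative to a vertex order, and transductions need not preserve any order, so that route is not available in general and the contraction-sequence argument above is the one to carry out.
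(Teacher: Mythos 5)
First, a point of comparison: the paper does not prove \cref{thm:tww-transduction} at all --- it is imported verbatim from Bonnet et al.\ (\emph{Twin-width I}, Theorem~39), so there is no internal proof to measure your sketch against. Judged on its own, your outer decomposition (monadic expansion, pure interpretation, induced substructure) and the two ``easy'' steps are essentially fine for the purpose of \emph{boundedness} (though note that quantifier-free formulas may use atoms $R(x,x)$, $U(x)$, $x=y$, so ``the same contraction sequence works'' is not literally true; one must refine each partition by the boundedly many unary/loop types, which costs a bounded factor in width rather than nothing). Your closing remark that \cref{thm:grid-theorem} is order-relative and hence unavailable here is also correct.

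The genuine gap is in the step you yourself identify as the hard one. Your induction on quantifiers reduces a binary output formula to eliminating an innermost $\exists y$ from a subformula $\psi(\bar x,y)$ with $|\bar x|\ge 2$, i.e.\ it forces you to pass through intermediate relations of arity at least~$3$. Dismissing this with ``twin-width and the statements used below are defined analogously'' is not legitimate: the contraction-sequence definition, the notion of homogeneity between \emph{pairs} of parts, and the error-degree bookkeeping you rely on are specifically binary, and extending them to higher arity is a nontrivial matter --- this is precisely why the published proof does not take the syntactic route of adjoining projected relations one quantifier at a time. Moreover, for the projection step itself your argument is an aspiration rather than a proof: ``reorganising the order in which the contractions of~$S'$ are performed so that the relevant witnesses already lie in resolved parts'' does not describe a construction. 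A contraction sequence for the augmented structure is not in general a permutation of the contractions of the original one, and whether $\exists y\,\psi(\bar x,y)$ holds for a given tuple can hinge on a single witness sitting inside an error part (or inside the tuple's own part), information which varies along the sequence and within parts. The actual proof of Theorem~39 handles this by running the same machinery as the FO model-checking theorem: along the \emph{original} contraction sequence one maintains, for each part, a bounded amount of local rank-$k$ type data (morphism trees), and one shows that the partitions refined by these types form a contraction sequence of the interpreted structure of width bounded in terms of $d$ and $\Phi$. Your sketch gestures at ``bounded information'' of this kind but supplies neither the invariant being maintained nor why it can be updated through a contraction, so the central assertion --- that the $\exists$-projection preserves bounded twin-width --- remains unproved.
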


A class of structures~$\Sc$ is said to be \emph{monadically NIP}
if~$\Sc$ does not transduce the class of all graphs.
Since the class of all graphs does not have bounded twin-width,
\cref{thm:tww-transduction} implies that classes with bounded twin-width are monadically NIP.
The weaker notion of (non-monadically) NIP also exists, with a slightly more technical definition.
Braunfeld and Laskowski recently proved that NIP and monadically NIP
are equivalent notions for hereditary classes of structures~\cite{braunfeld2022existential},
thus the distinction is irrelevant in this work.

The following remarkable conjecture of Gajarsky et al.\ relates NIP classes and model checking.
\begin{conjecture}[{\cite[Conjecture~8.2]{Gaj2020}}]
  \label{conj:nip-model-checking}
  A hereditary class of graphs has an \fpt algorithm for first-order model checking if and only if it is NIP.
\end{conjecture}
The first implication of \cref{conj:nip-model-checking} was recently proved
by Dreier, Mählmann, and Toruńczyk~\cite{dreier2024dependent}, under the assumption $\aw \neq \fpt$:
they show that if~$\Cc$ is hereditary and transduces the class of all graphs,
then it in fact efficiently interprets all graphs,
hence model checking in~$\Cc$ is \aw-complete by \cref{lem:interpretation-reduction}.
The converse implication (finding a model checking algorithm in NIP classes) is open.

Our main result confirms \cref{conj:nip-model-checking} for tournaments.

\subsection{Enumerative combinatorics}
A class~$\Sc$ of graphs (or binary relational structures) is \emph{small}
if there exists~$c$ such that~$\Sc$ contains at most~$c^n \cdot n!$ structures on the vertex set~$[n]$.
For instance, the class of trees is small, and more generally proper minor closed classes of graphs
are small as shown by Norine et al.~\cite{NORINE2006754}.
This was further generalised to classes of bounded twin-width by Bonnet et al.
\begin{theorem}[{\cite[Theorem~2.4]{twin-width2}}]
  \label{thm:small}
  Classes of structures with bounded twin-width are small.
\end{theorem}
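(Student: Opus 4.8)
The plan is to show that any structure $S$ with domain $[n]$ and $\tww(S) \le d$ is completely determined by a rooted binary tree with leaf set $[n]$, together with $O_d(1)$ bits of information attached to each of its $n-1$ internal nodes. Since there are at most $4^n \cdot n!$ such labelled trees and at most $2^{O_d(n)}$ ways to fill in the bits, this bounds the number of structures on the vertex set $[n]$ by $c^n \cdot n!$ for a constant $c = c(d)$, which is exactly the definition of smallness.

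First I would fix a contraction sequence $\Pc_n, \dots, \Pc_1$ of width at most $d$ and read it in reverse. Each step now \emph{splits} one part of $\Pc_i$ into the two parts of $\Pc_{i+1}$, so the sequence organises into a rooted binary tree $\mathcal{Y}$: its root is the unique part of $\Pc_1$, its $n$ leaves are the singletons of $\Pc_n$ (identified with $[n]$), and every internal node is split into its two children. A node of $\mathcal{Y}$ is a part appearing along the sequence, and the width bound says that in every $\Pc_i$ each part is non-homogeneous (``red'') to at most $d$ of the other parts.

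Next comes the encoding, which is the heart of the argument. One traverses $\mathcal{Y}$ from the root downwards, maintaining the current partition and, for every unordered pair of parts, a label which is either ``homogeneous with value $v$'' for some $v \in \{0,1\}^\Sigma$ or ``red''; at $\Pc_1$ there is a single part and nothing to record. When a part $Z$ splits into children $X, Y$: a part homogeneous to $Z$ stays homogeneous to each of $X$ and $Y$ with the same value, so its label is inherited and costs nothing; for each of the at most $d$ parts $W$ red to $Z$ one records, for each of $X$ and $Y$ separately, whether the relation to $W$ has become homogeneous (and with which value) or is still red; and one records the label of the brand-new pair $\{X,Y\}$. This is $O(d \cdot \card{\Sigma})$ bits per internal node. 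The point enabling reconstruction is that a ``homogeneous'' label is preserved under all further refinements and that a pair of singletons is automatically homogeneous, so by the time $\mathcal{Y}$ reaches $\Pc_n$ every pair of vertices has been assigned a definite value in $\{0,1\}^\Sigma$, namely its tuple of relations in $S$. Hence $\mathcal{Y}$ and the recorded bits recover $S$.

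The main obstacle is making this encoding rigorous: one must check that only red pairs ever carry undetermined information, that the pairs whose status changes at a split of $Z$ are exactly those incident to $X$ or $Y$, and that the undetermined ones among these correspond to $Z$'s $\le d$ red partners together with the single new pair $\{X,Y\}$, and finally that the downward reconstruction is well defined. Everything else is bookkeeping: the number of structures on $[n]$ of twin-width at most $d$ is at most the number of rooted binary trees with leaves labelled by $[n]$ times the number of admissible bit-assignments, i.e.\ at most $4^n \cdot n! \cdot 2^{O_d(n)} = c^n \cdot n!$. (Counting trees up to leaf-relabelling instead yields the single-exponential bound $c^n$ on the number of isomorphism types, which is the form invoked elsewhere in the paper.)
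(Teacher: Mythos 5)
This theorem is not proved in the paper at all: it is imported from \cite{twin-width2} (Theorem~2.4) and used as a black box, so your argument has to be judged on its own terms rather than against an internal proof.

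Judged that way, there is a genuine gap at the step ``hence $\mathcal{Y}$ and the recorded bits recover $S$''. The bits you store at an internal node $Z$ are indexed by the parts $W$ that are red to $Z$ \emph{in the partition current at the moment $Z$ splits}. Which tree nodes these are is not determined by the leaf-labelled tree $\mathcal{Y}$: it depends on the schedule, i.e.\ on the linear order in which the internal nodes are processed in the given contraction sequence. Your decoder ``traverses $\mathcal{Y}$ from the root downwards, maintaining the current partition'', but unless it uses the same schedule as the encoder, the partitions it passes through are different, the set of $\le d$ red partners of $Z$ at its split is different, and the recorded chunks cannot be matched to the pairs of sets they describe; the inductive invariant ``the label of the pair of current parts containing $u$ and $v$ is always known'' only holds along the original schedule. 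You cannot repair this by decreeing a canonical linear extension of $\mathcal{Y}$: the width bound is a property of the particular schedule, and another linear extension of the same tree can have error degree $\Theta(n)$ at a split (half-graphs already show this: merge the whole left caterpillar before touching the right side), so the encoder cannot re-order to the decoder's canonical schedule while keeping $O_d(1)$ bits per node. Nor can you simply add the schedule to the encoding: it is an essentially arbitrary linear extension of the $n-1$ internal nodes, costing another factor of order $(n-1)!$ and giving roughly $c^n (n!)^2$, which is no longer small; naming the $\le d$ red partners explicitly instead costs $\Theta(\log n)$ bits each, again $n^{\Theta(dn)}$ overall. So the data (tree, $O_d(1)$ bits per node) does not determine $S$, and the missing piece sits exactly where you labelled the work ``bookkeeping''.

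For comparison, the cited proof avoids schedules entirely: in essence, \cite{twin-width2} combines the characterisation of bounded twin-width by vertex orderings whose adjacency matrices avoid large mixed divisions with a Marcus--Tardos/F\"uredi--Hajnal-style count showing that such $n\times n$ matrices number only $c^n$; the factor $n!$ then comes from the choice of an ordering of $[n]$, not from a contraction tree. To salvage a direct tree-encoding argument you would need an additional idea, e.g.\ that every structure of twin-width $d$ admits a width-$f(d)$ sequence whose schedule is determined by its tree up to $O_d(1)$ bits per node --- a statement that is not obvious and is not established by your sketch.
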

The reverse implication does not hold for graphs:
there are small hereditary classes of graphs with unbounded twin-width~\cite{twin-width7}.
These counterexamples have bounded maximum degree.
One of the main results of~\cite{twin-width4} is that for classes of ordered graphs,
bounded twin-width is equivalent to being small.
We generalise it to tournaments.

One may also count structures up to isomorphism
instead of counting labelled structures as in the definition of small classes.
A strengthening of \cref{thm:small} when counting up to isomorphism was proved in~\cite{tww-perm}.
\begin{theorem}[{\cite[Corollary~7.3]{tww-perm}}]
  \label{thm:exponential}
  For any class~$\Cc$ of structures with bounded twin-width,
  there is a constant~$c$ such that~$\Cc$ has at most~$c^n$ structures on~$n$ vertices counted up to isomorphism.
\end{theorem}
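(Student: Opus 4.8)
The plan is to reduce the count of isomorphism classes in~$\Cc$ to a count of \emph{ordered} structures of bounded twin-width, taking advantage of the fact that for ordered structures the number of labelled copies on~$[n]$ is exactly~$n!$ times the number of isomorphism classes. The only non-routine ingredient will be to attach to each structure of~$\Cc$ a total order at no cost in twin-width; the rest is bookkeeping together with \cref{thm:small}.

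First I would fix a bound~$d$ on the twin-width of the structures in~$\Cc$. Given~$G \in \Cc$ on vertex set~$[n]$, fix a contraction sequence~$\Pc_n,\dots,\Pc_1$ of width at most~$d$. The parts that occur in some~$\Pc_i$ form a laminar family on~$V(G)$ --- each part of~$\Pc_{i+1}$ is either a part of~$\Pc_i$ or lies inside the part produced by the corresponding merge --- hence a rooted binary tree~$\Tc_G$ with leaf set~$V(G)$ and~$n-1$ internal nodes, one per merge. After fixing an order of the two children at each internal node, let~$<_G$ be the resulting left-to-right order on the leaves; this is a strict total order on~$V(G)$.

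The key observation is then that~$\tww(G,<_G) \le d$, where~$(G,<_G)$ denotes~$G$ together with the binary relation~$<_G$: the same sequence~$\Pc_n,\dots,\Pc_1$ is still a contraction sequence of width at most~$d$ for~$(G,<_G)$. Indeed, a pair of distinct parts that is inhomogeneous in~$G$ is a fortiori inhomogeneous in~$(G,<_G)$, so one only needs to check that a pair~$A,B \in \Pc_i$ that is homogeneous in~$G$ stays homogeneous for~$<_G$; but~$A$ and~$B$ are disjoint subtrees of~$\Tc_G$, neither an ancestor of the other, so~$<_G$ places all of~$A$ before all of~$B$ or all of~$B$ before all of~$A$, which is homogeneous. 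Thus no new error edges are created and the width is unchanged. I expect this paragraph --- the laminarity of the occurring parts, together with the fact that the left-to-right order of the contraction tree is ``free'' --- to be the only place where an argument is really needed.

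To finish, I would pick for each isomorphism class of $n$-vertex structures in~$\Cc$ a representative~$G$ on~$[n]$ and form the ordered structure~$(G,<_G)$; the map~$[G] \mapsto [(G,<_G)]$ on isomorphism classes is injective, since~$G$ is recovered by forgetting~$<_G$. The ordered structures obtained in this way all have twin-width at most~$d$ and live over the fixed finite signature of~$\Cc$ augmented by one binary symbol~$<$, so by \cref{thm:small} this class is small and has at most~$c^n \cdot n!$ members on vertex set~$[n]$ for some constant~$c=c(\Cc)$. Because a finite total order has only the identity automorphism, every ordered structure is rigid, so each of its isomorphism classes accounts for exactly~$n!$ of those labelled copies; hence there are at most~$c^n$ isomorphism classes of such ordered structures on~$n$ vertices, and, via the injection, at most~$c^n$ isomorphism classes in~$\Cc$ on~$n$ vertices, as required.
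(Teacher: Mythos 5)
Your argument is correct. Note that the paper does not prove this statement at all: it is imported verbatim as \cite[Corollary~7.3]{tww-perm}, where it is obtained from rather heavier machinery (the correspondence between bounded twin-width and transductions of proper permutation classes). What you give instead is a self-contained reduction to the labelled counting result, \cref{thm:small}, which the paper also imports. Your key step is sound: the parts occurring along a contraction sequence of width~$d$ form a laminar family, every two distinct parts of a given~$\Pc_i$ are incomparable nodes of the contraction tree, hence occupy disjoint intervals of the left-to-right leaf order~$<_G$, so each such pair is automatically homogeneous for~$<_G$ and the same sequence witnesses $\tww(G,<_G)\le d$ --- this is exactly the folklore ``contraction-tree order'' observation underlying the whole ordered-structures approach. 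The endgame is also right: ordered finite structures are rigid, so each isomorphism class contributes exactly~$n!$ labelled copies on~$[n]$, and the bound~$c^n\cdot n!$ from \cref{thm:small} yields at most~$c^n$ isomorphism classes, which pull back injectively to~$\Cc$ since an isomorphism of the ordered structures restricts to one of the reducts. Two cosmetic points only: apply \cref{thm:small} to the isomorphism closure of your family of ordered structures (the paper's notion of class is isomorphism-closed, and twin-width is invariant), and note that the constant~$c$ you obtain depends on the signature of~$\Cc$ enlarged by the order symbol, which is harmless. So your route trades the permutation-class machinery of \cite{tww-perm} for a direct derivation from smallness; the cited result is of course more informative in context (it comes with the structural characterisation), but for this statement alone your argument is shorter and entirely adequate.
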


\section{Forbidden classes of tournaments}\label{sec:obstructions}
This section defines the three minimal classes~$\Fc_=$, $\Fc_\le$, and~$\Fc_\ge$
of obstructions to twin-width in tournaments.
Each of them corresponds to some encoding of the class of all permutations.
For~$R \in \{=,\le,\ge\}$ and any permutation~$\sigma$, we will define a tournament~$\Fc_R(\sigma)$.
The class~$\Fc_R$ is the hereditary closure of all~$\Fc_R(\sigma)$.

Let~$R \in \{=,\le,\ge\}$, and let~$\sigma \in \Perm_n$ be a permutation on~$n$ elements.
The tournament~$\Fc_R(\sigma)$ consists of~$2n$ vertices, called~$x_1,\dots,x_n,y_1,\dots,y_n$.
Let $X = \{x_1,\dots,x_n\}$ and $Y = \{x_1,\dots,y_n\}$.
Each of~$X,Y$ is a chain under the natural order, i.e.\ there is an edge
from~$x_i$ to~$x_j$, resp.\ from~$y_i$ to~$y_j$, if and only if~$i < j$.
The edges between~$X$ and~$Y$ encode~$\sigma$ in a way specified by the relation~$R$:
there is an edge oriented from~$x_i$ to~$y_j$ if and only if $i \, R \, \sigma^{-1}(j)$.
See \cref{fig:obstructions} for an example.

\begin{figure}
  \begin{center}
    \begin{tikzpicture}[>=angle 60]

      \def\n{5}
      \foreach \R/\s/\lbl in {E/0/=,L/4/\le,G/8/\ge}{
        \foreach \i in {1,...,\n}{
          \node[normalnode] (\R{}x\i) at (\s+0,\i) [label=left:$x_\i$] {};
          \node[normalnode] (\R{}y\i) at (\s+1.8,\i) [label=right:$y_\i$] {};
        }
        \node (\R{}label) at (\s+1,0) {$\Fc_\lbl(\sigma)$};
        \draw[thick, ->] (\R{}x1) -- (\R{}x\n);
        \draw[thick, ->] (\R{}y1) -- (\R{}y\n);
      }
      \foreach \i/\si in {1/3,2/1,3/4,4/5,5/2}{
        \foreach \k in {1,...,\n}{
          \ifnum \k=\i
            \draw[directed] (E{}x\k) -- (E{}y\si);
          \fi
          \ifnum \k>\i \else
            \draw[directed] (L{}x\k) -- (L{}y\si);
          \fi
          \ifnum \k<\i \else
            \draw[directed] (G{}x\k) -- (G{}y\si);
          \fi
        }
      }
    \end{tikzpicture}
  \end{center}
  \caption{%
    The three classes of obstructions to twin-width in tournaments.
    For readability, edges oriented from some~$y_j$ to some~$x_i$ have been omitted.
    Each class consists of some encoding of the class of all permutations,
    represented here with the permutation $\sigma = 31452$.
  }
  \label{fig:obstructions}
\end{figure}
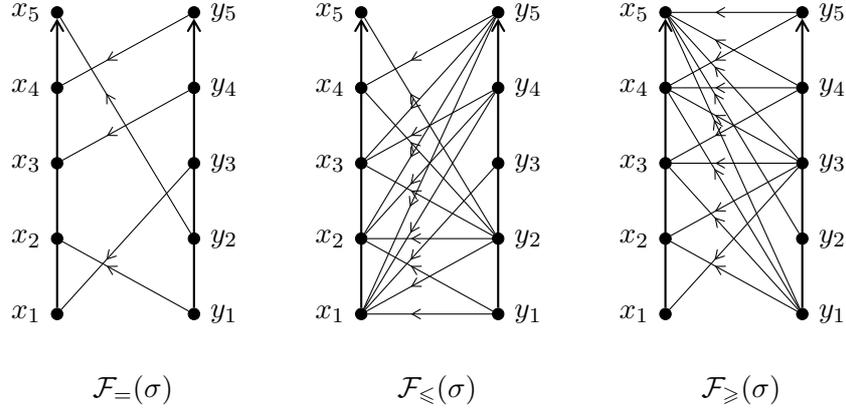

Thus in~$\Fc_=(\sigma)$ the edges oriented from~$X$ to~$Y$ form a matching which encodes~$\sigma$.
In~$\Fc_\le(\sigma)$ and~$\Fc_\ge(\sigma)$, these edges form a half-graph which orders~$X$ and~$Y$
by inclusion of neighbourhoods, so that the order on~$X$ is the natural one, and the order on~$Y$ encodes~$\sigma$.
Precisely, in~$\Fc_\ge(\sigma)$, for any $i,j \in [n]$, we have
\begin{align}
  N^+(x_i) \cap Y \subseteq N^+(x_j) \cap Y & \iff i \le j \nonumber\\
  \text{and}\quad N^+(y_i) \cap X \subseteq N^+(y_j) \cap X
  & \iff \sigma^{-1}(i) \le \sigma^{-1}(j), \label{eq:half-graph-order}
%
%
%
\end{align}
while in~$\Fc_\le(\sigma)$, the direction of inclusions is reversed.

These encodings of permutations are compatible with patterns.
\begin{lemma}\label{lem:forb-pattern}
  If~$\tau$ is a pattern of~$\sigma$, then~$\Fc_R(\tau)$ is a subtournament of~$\Fc_R(\sigma)$ for each of the relations~$R \in \{=,\le,\ge\}$.
\end{lemma}
\begin{proof}
  If~$\sigma \in \Perm_n$ and~$\tau$ is the pattern induced by $X \subset [n]$,
  then~$\Fc_R(\tau)$ is isomorphic to the subtournament of~$\Fc_R(\sigma)$ induced by the vertices~$x_i$ and~$y_{\sigma(i)}$ for~$i \in X$.
\end{proof}

\begin{lemma}
  For each~$R \in \{=,\le,\ge\}$, the class~$\Fc_R$ efficiently interprets
  the class~$\Oc_\Perm$ of permutations represented as bi-orders.
  Precisely, there is an interpretation~$\Phi_R$, and for any permutation~$\sigma \in \Perm_n$, $n \ge 2$,
  there is a~$\sigma' \in \Perm_{n+1}$ computable in polynomial time such that $\Oc_\sigma = \Phi_R(\Fc_R(\sigma'))$.
  \label{lem:forb-interpret-perm}
\end{lemma}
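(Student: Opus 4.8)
The plan is to exhibit, for each $R \in \{=,\le,\ge\}$, an explicit first-order interpretation $\Phi_R$ that recovers a bi-order from the tournament $\Fc_R(\sigma')$, where $\sigma'$ is a small perturbation of $\sigma$ designed to make the two chains $X$ and $Y$ first-order distinguishable inside $\Fc_R(\sigma')$. The core difficulty is that $\Fc_R(\sigma)$ has no built-in way to tell the vertices of $X$ from those of $Y$, nor to read off the natural order on each chain: a priori the structure is just an abstract tournament. So the first step is a \emph{definability} step: find FO formulae over the language of tournaments that isolate $X$, isolate $Y$, and recover the linear orders $<$ on $X$ and $<_\sigma$ on $Y$ (transported to a common domain), and then the interpretation $\Phi_R$ will have domain formula picking out (say) $X$, one order relation interpreted as the edge relation of $T$ restricted to $X$, and the second order relation interpreted via the $X$--$Y$ adjacency pattern composed with the $Y$-order. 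The reason $\sigma'$ is needed rather than $\sigma$ itself is that for a generic $\sigma$ the tournament $\Fc_R(\sigma)$ may have automorphisms or local symmetries (e.g.\ swapping an $x_i$ with a $y_j$ having the same neighbourhood type) that prevent a uniform formula from separating $X$ and $Y$; padding $\sigma$ by one extra element placed at an extreme position breaks these symmetries, and this is exactly why the statement allows $\sigma' \in \Perm_{n+1}$ and only claims it for $n \ge 2$.

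Concretely, for the half-graph cases $R \in \{\le,\ge\}$ the situation is cleanest: by \cref{eq:half-graph-order} the in-neighbourhoods within $Y$ of the $x_i$'s are linearly ordered by inclusion, and symmetrically for the $y_j$'s within $X$, so ``$x$ and $y$ lie in the same chain'' should be expressible by comparing how $N^-(x)$ and $N^-(y)$ interact with the rest, and the linear order on each chain is recovered directly from inclusion of neighbourhoods — itself an FO-definable relation (``every common-chain vertex dominating $x$ also dominates $y$''). Here I would choose $\sigma'$ to be $\sigma$ with a new smallest element $1$ prepended and all values shifted; the new pair $x_0,y_0$ then has a distinguished (empty or full) neighbourhood pattern serving as an anchor that lets the formulae orient everything. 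For $R = {=}$ the $Y$-to-$X$ arcs form a perfect matching, so one must instead recover the matching edge as ``the unique $Y$-vertex $y$ with $y \to x$'' after first defining $X$, $Y$, and their orders; since the matching alone carries little order information, here the orders on $X$ and $Y$ must come from the chain structure (the internal transitive tournament edges), and $X$ versus $Y$ must be separated using the matching-anchor provided by $\sigma'$. In all three cases, once $X$, $Y$, $<|_X$, $<|_Y$ and the $X\leftrightarrow Y$ correspondence are FO-definable, the formula $i <_\sigma j \iff$ (the $Y$-partner of $x_i$) $<|_Y$ (the $Y$-partner of $x_j$) completes the bi-order, so $\Oc_{\sigma} = \Phi_R(\Fc_R(\sigma'))$ after discarding the anchor vertex via the domain formula.

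I would organise the write-up as: (1) fix the perturbation $\sigma \mapsto \sigma'$ explicitly and note it is polynomial-time computable and that $\Fc_R(\sigma')$ is obtained from $\Fc_R(\sigma)$ by adding two anchor vertices; (2) write down the FO formulae $\phi_X(x)$, $\phi_{<_1}(x,y)$, $\phi_{<_2}(x,y)$, $\phi_{dom}(x)$ defining $\Phi_R$, treating the three cases of $R$ — the half-graph cases together and the matching case separately; (3) verify by direct inspection of the adjacency rules defining $\Fc_R$ (the condition $i\,R\,\sigma^{-1}(j)$ for the arc $y_j \to x_i$, plus the chain orders, plus ``all omitted $x_i$-to-$y_j$ arcs point $X\to Y$'' as in \cref{fig:obstructions}) that $\Phi_R(\Fc_R(\sigma')) \cong \Oc_\sigma$, with the anchor vertex removed by $\phi_{dom}$; (4) note efficiency is immediate since $\sigma'$ and $\Fc_R(\sigma')$ are constructed in polynomial time. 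The main obstacle, and the step deserving the most care, is (2)–(3): pinning down formulae that correctly separate $X$ from $Y$ \emph{uniformly in $\sigma$}. This is where the anchor vertices of $\sigma'$ must be used cleverly — the risk is a formula that works for ``most'' $\sigma$ but fails on symmetric permutations like the identity or an involution — and where one must be careful that the half-graph inclusion order is strict enough (distinct $x_i$ genuinely have distinct $Y$-neighbourhoods, which holds precisely because consecutive $i$ differ by the single new arc), so that the recovered order is a genuine total order and not merely a quasi-order. Everything else is routine checking against the definition of $\Fc_R(\sigma)$.
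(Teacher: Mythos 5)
Your overall route is the same as the paper's: add one extra value to $\sigma$ so that the resulting two anchor vertices of $\Fc_R(\sigma')$ are FO-identifiable, use them to define the partition $X,Y$, read one order from the internal chain edges and the other from the matching (case $=$) or from inclusion of neighbourhoods (cases $\le,\ge$, via \cref{eq:half-graph-order}), and delete one side with the domain formula (the paper keeps $Y$, you keep $X$ -- immaterial). The easy half of your plan (recovering the two orders once $X$ is known) is exactly the paper's transduction step. But the decisive step -- exhibiting a concrete $\sigma'$ together with a \emph{uniform} first-order definition of $X$ -- is precisely what you leave open, and the one concrete suggestion you make does not work as claimed. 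Prepending a new smallest fixed point does \emph{not} give the new pair ``a distinguished (empty or full) neighbourhood pattern'' that can serve as an anchor: in $\Fc_\le(\sigma')$ with your $\sigma'$ every vertex of $X$ has the same out-degree $n$, so no degree count isolates the new $x$-vertex, and its in-neighbourhood is exactly $Y$ -- a set you cannot name before having defined $Y$. Worse, in $\Fc_\ge(\sigma')$ your new $y$-vertex beats \emph{every} other vertex and the new $x$-vertex loses only to it; these two are FO-identifiable, but precisely because their neighbourhoods are (near-)full or (near-)empty they carry no information separating $X$ from $Y$, so they are useless as anchors.

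The paper's perturbation is chosen to avoid exactly this: it inserts the \emph{largest} value $n+1$ (at position $n+1$ for $R={=}$, at position $1$ for $R\in\{\le,\ge\}$), which creates a vertex of out-degree (resp.\ in-degree) exactly $1$; one checks by a short degree computation that it is the unique such vertex when $n\ge 2$ (this is where the hypothesis $n\ge 2$ enters, with a small extra case analysis for $R={\ge}$ when $\sigma'(2)=1$), and -- crucially -- its unique neighbour is an $X$-vertex whose neighbourhood is essentially $X$ itself, e.g.\ $X=N^+(x_1)\cup\{x_1\}$ for $R={\le}$ and $X=N^+(y_{n+1})$ for $R={\ge}$. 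So the gap in your proposal is not the strategy but its instantiation: you still need to produce a perturbation and formulas for which the anchor is both uniquely definable uniformly in $\sigma$ \emph{and} yields a direct definition of $X$; your candidate fails on both counts in the half-graph cases, and for $R={=}$ you do not specify the anchor at all.
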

\begin{proof}
  We will first show that~$\Fc_R(\sigma)$ transduces~$\Oc_\sigma$,
  and then how to remove the colouring step of the transduction by slightly extending~$\sigma$.

  Let~$X = \{x_1,\dots,x_n\}$ and~$Y = \{y_1,\dots,y_n\}$ be as in the definition of~$\Fc_R(\sigma)$.
  The transduction uses colouring to guess the set~$X$.
  It then defines two total orders on~$Y$, which together describe~$\sigma$.
  The first ordering is given by the direction of edges inside~$Y$.
  The second depends on~$R$:
  \begin{itemize}
    \item If~$R$ is~$=$, edges oriented from~$X$ to~$Y$ are a perfect matching.
      The direction of edges in~$X$, interpreted through this matching, defines the second order on~$Y$.
    \item If~$R$ is~$\ge$ or~$\le$, the second order is inclusion, respectively inverse inclusion,
      of in-neighbourhoods intersected with~$X$, see~\eqref{eq:half-graph-order}.
  \end{itemize}
  With the knowledge of which subset is~$X$, each of these orders is clearly definable with a first-order formula.
  Finally, the transduction deletes vertices of~$X$, leaving only~$Y$ and the two orders that encode~$\sigma$.

  Let us now show how to deterministically define the partition~$X,Y$,
  at the cost of extending~$\sigma$ with one fixed value.
  Here, we assume $n \ge 2$.
  \begin{itemize}
    \item If~$R$ is~$=$, define~$\sigma' \in \Perm_{n+1}$
      by $\sigma'(n+1) = n+1$ and $\sigma'(i) = \sigma(i)$ for any $i \le n$.
      Then, in~$\Fc_=(\sigma')$, the unique vertex with out-degree~1 is~$x_{n+1}$.
      Its out-neighbour is~$y_{n+1}$, and we have $X = N^+(y_{n+1}) \cup \{x_{n+1}\}$.
    \item If~$R$ is~$\le$, define $\sigma'(n+1) = 1$ and $\sigma'(i) = \sigma(i)+1$.
      Then~$x_{n+1}$ has out-degree~1, and its out-neighbour is~$y_1$.
      The only other vertex that may have out-degree~$1$ is~$y_{n+1}$, and this happens if and only if $\sigma'(n) = n+1$.
      When this is the case, the direction of the edge~$x_{n+1} \from y_{n+1}$ still allows to distinguish~$x_{n+1}$.
      Finally, $y_1$ is the only out-neighbour of~$x_{n+1}$, and satisfies $X = N^-(y_1)$.
    \item If~$R$ is~$\ge$, define once again~$\sigma'(n+1) = 1$ and $\sigma'(i) = \sigma(i)+1$.
      Then~$y_1$ is the unique vertex with in-degree~$1$, and its in-neighbour is~$x_{n+1}$,
      which satisfies $X = N^-(x_{n+1}) \cup \{x_{n+1}\}$.
  \end{itemize}
  In all three cases, $\Fc_R(\sigma')$ contains two extra vertices compared to~$\Fc_R(\sigma)$.
  These extra vertices can be uniquely identified in first-order logic, and can then be used to define~$X$.
  Combined with the previous transduction, this gives an interpretation of~$\Oc_\sigma$ in~$\Fc_R(\sigma')$.
\end{proof}

We can now prove that the classes~$\Fc_R$ are complex.
\obstrhard*
\begin{proof}
  \Cref{item:independent} is straightforward by \cref{lem:forb-interpret-perm,lem:perm-inter-graph},
  since efficient interpretations can be composed.
  By \cref{lem:interpretation-reduction,thm:FO-aw-hard}, this in turn implies \cref{item:aw-hard}.

  \Cref{item:not-small} implies \cref{item:factorial} by a simple counting argument:
  in an isomorphism class, there are at most~$n!$ choices
  of labelling of vertices with~$\{1,\dots,n\}$ (less when the graph has automorphism).
  Furthermore, each of \cref{item:not-small,item:independent} implies \cref{item:unbounded-tww},
  by~\cref{thm:small} and~\cref{thm:tww-transduction} respectively.
  Thus it only remains to prove \cref{item:not-small}.

  By \cref{lem:forb-interpret-perm}, for all~$\sigma \in \Perm_n$ there is some~$\Fc_R(\sigma')$ on~$2n+2$ vertices
  such that~$\Phi_R(\Fc_R(\sigma')) = \sigma$, where~$\Phi_R$ is an interpretation.
  Since interpretations preserve isomorphism, it follows that there are at least~$n!$
  non-isomorphic tournaments on~$2n+2$ vertices in~$\Fc_R$.
  Furthermore, using the same arguments as in \cref{lem:forb-interpret-perm},
  it is easy to show that these~$\Fc_R(\sigma')$ have no non-trivial automorphism.
  Thus, there are exactly~$(2n+2)!$ distinct labellings of~$\Fc_R(\sigma')$ with~$\{1,\dots,2n+2\}$.
  In total, this gives~$(2n+2)! \cdot n!$ distinct graphs on vertices~$\{1,\dots,2n+2\}$ in~$\Fc_R$,
  proving \cref{item:not-small}.
\end{proof}

Thus the classes~$\Fc_=,\Fc_\le,\Fc_\ge$ are obstructions
to fixed parameter tractability of FO model checking and twin-width.
The rest of the paper shows that they are the only obstructions.
One may also verify that all three are minimal, i.e.\ none of them is contained in another.

\section{Binary search tree orders} \label{sec:bst}
This section presents the good order for twin-width in tournaments.
It is based on \emph{binary search trees} (BST),
which we define in a tournament~$T$ as a rooted ordered binary tree~$S$
(meaning that each node has a left and right child, either of which may be missing),
whose nodes are the vertices of~$T$, and such that for any $x \in S$
\begin{itemize}
  \item the left child of~$x$ (if any) and its descendants are in~$N_T^-(x)$, and
  \item the right child of~$x$ (if any) and its descendants are in~$N_T^+(x)$,
\end{itemize}
see \cref{fig:bst}.

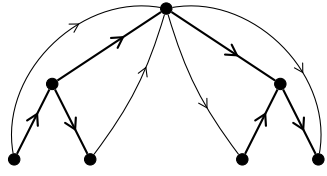
\begin{figure}
  \begin{center}
    \begin{tikzpicture}
      \tikzstyle{every node}=[normalnode]
      \node (r) at (0,0) {};
      \node (0) at (-1.5,-1) {};
      \node (1) at (1.5,-1) {};
      \node (00) at (-2,-2) {};
      \node (01) at (-1,-2) {};
      \node (10) at (1,-2) {};
      \node (11) at (2,-2) {};

      \draw[thick] (00) edge[directed] (0)
                   (0)  edge[directed] (01)
                   (10) edge[directed] (1)
                   (1)  edge[directed] (11)
                   (0)  edge[directed] (r)
                   (r)  edge[directed] (1);

      \draw (01) edge[bend right=10, directed] (r)
            (r)  edge[bend right=10, directed] (10)
            (00) edge[bend left=55, directed] (r)
            (r)  edge[bend left=55, directed] (11);
    \end{tikzpicture}
  \end{center}
  \caption{A binary search tree in a tournament. The direction of omitted edges is not constrained.}
  \label{fig:bst}
\end{figure}

The \emph{order} associated to~$S$, denoted~$<_S$, is the left-to-right order (or in-order traversal) placing a node~$x$ after its left child and its descendants, but before its right child and its descendants.
Such an order is called a \emph{BST order}.

Remark that because~$T$ is only a tournament and not an order as in a standard BST,
there is no restriction on the direction of edges between the left and right subtrees of~$x$.
On the other hand, if~$x$ is an ancestor of~$y$, then
there is an edge oriented from~$x$ to~$y$ if and only if~$x <_S y$.
Thus we have
\begin{remark}
  In a tournament~$T$, any branch~$B$ of a BST~$S$ forms a chain which coincides with~$<_S$.
  That is, for $x,y \in B$, the edge in~$T$ is oriented from~$x$ to~$y$ if and only if $x <_S y$.
  \label{rmk:branch-chain}
\end{remark}

We will now define \emph{chain quasi-orders},
which are FO definable quasi-orders with which we will approximate BST orders.
Let~$C$ be a chain in~$T$.
Its \emph{chain quasi-order}~$\qle_C^+$ is defined as follows.
Enumerate the vertices of~$C$ as~$c_1, \dots, c_k$ so that edges are oriented from~$c_i$ to~$c_j$ when~$i < j$.
Define $A_i = \bigcap_{j \le i} N^+(c_j)$ (where as an edge case~$A_0$ contains all vertices), and $B_i = A_{i-1} \cap N^-(c_i)$.
In other words, $A_{i-1}$ is partitioned as $A_i \uplus \{c_i\} \uplus B_i$,
where~$A_i$ and~$B_i$ contain the out- and in-neighbours of~$c_i$ respectively.
Note that~$B_1,\dots,B_k$ and~$A_k$ are all disjoint from~$C$.

Then each of $B_1,\dots,B_k$, $A_k$, and $\{c_1\},\dots,\{c_k\}$ is an equivalence class of~$\qle_C^+$, and the classes are ordered as
\[ B_1 \qlt_C^+ c_1 \qlt_C^+ B_2 \qlt_C^+ c_2 \qlt_C^+ \dots B_k \qlt_C^+ c_k \qlt_C^+ A_k, \]
see \cref{fig:chain-order}.
This can be seen as the left-to-right order of a partial BST
consisting only of a single branch $c_1,\dots,c_k$, with~$c_1$ as root and~$c_k$ as leaf.
It is also a coarsening of the lexicographic order:
the latter would refine the order inside each class~$B_i$ using~$c_{i+1},\dots,c_k$.

The dual quasi-order~$\qle_C^-$ is defined in the same way, but reversing the direction of all edges.
Thus, we now enumerate~$C$ so that edges are from~$c_i$ to~$c_j$ when~$i > j$,
while~$A_i = \bigcap_{j \le i} N^-(c_i)$ and $B_i = A_{i-1} \cap N^+(c_i)$.
The rest of the definition is the same.

\begin{figure}
  \begin{center}
    \begin{tikzpicture}[>=angle 60]
      \def\n{4}

      \foreach \i in {1,...,\n}{
        \node (c\i) at (1.6*\i,0) [normalnode, label=above:$c_\i$] {};
        \node (B\i) at (1.6*\i-0.7,-1) [inner sep=0pt, minimum size=0pt] {};
        \node (Bl\i) at (1.6*\i-0.7,-1.5) {$B_\i$};

        \draw[directed] (B\i) -- (c\i);
        \draw (B\i) -- +(-0.4,-0.7) -- +(0.4,-0.7) -- +(0,0);
        \foreach \j in {1,...,\n}{
          \ifnum \j<\i
            \draw[directed] (c\j) -- (B\i);
          \fi
        }
      }
      \node (A) at (1.6*\n+1,-1)  [inner sep=0pt, minimum size=0pt] {};
      \node (Al) at (1.6*\n+1,-1.5) {$A_\n$};
      \draw (A) -- +(-0.4,-0.7) -- +(0.4,-0.7) -- +(0,0);
      \foreach \j in {1,...,\n}{
        \draw[directed] (c\j) -- (A);
      }

      \draw[thick,->] (c1) -- (c\n);
    \end{tikzpicture}
  \end{center}
  \caption{%
    Example of construction of the quasi-order~$\qle_C^+$.
    The quasi-order is from left to right, and the triangles are equivalence classes.
    The direction of omitted edges (from~$B_i$ to~$B_j \cup \{c_j\}$ for~$i < j$) is not constrained.
    For~$\qle_C^-$, the direction of all edges would be reversed.
  }
  \label{fig:chain-order}
\end{figure}
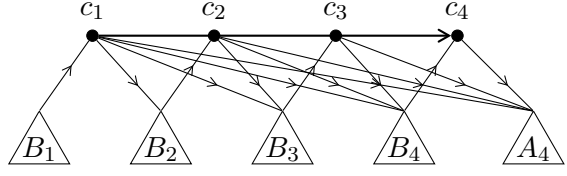

\begin{lemma}
  There is a transduction~$\Phi$ such that for any tournament~$T$ and chain quasi-ordering~$\qle$,
  we have $(T,\qle) \in \Phi(T)$.
  \label{lem:chain-order-transduction}
\end{lemma}
\begin{proof}
  The transduction~$\Phi$ first uses a colouring step to guess a chain~$C$,
  and an orientation~$o \in \{+,-\}$.
  Depending on this non-deterministic choice, it may yield \emph{any} chain quasi-ordering.
  The remainder of~$\Phi$ is a deterministic interpretation.
  For simplicity, we will assume that the orientation~$o$ is~$+$, the case of~$-$ being similar.

  Having guessed~$C$, the ordering~$\qle_C^+$ inside~$C$ is simply given by the orientation of edges.
  Enumerate~$C$ as $\{c_1 \qle_C^+ \dots \qle_C^+ c_k \}$.
  Given~$x \not\in C$, let $i(x) \in \{0,\dots,k\}$ be maximal such that
  there are edges oriented~$c_j \to x$ for all~$1 \le j \le i(x)$.
  Then
  \begin{itemize}[noitemsep]
  \item for any~$x \not\in C$, we have $c_1 \dots c_{i(x)} \qle_C^+ x \qle_C^+ c_{i(x)+1}, \dots, c_k$, and
  \item for~$x,y \not\in C$, we have~$x \qle_C^+ y$ if and only if~$i(x) \le i(y)$.
  \end{itemize}
  This characterisation of~$\qle_C^+$ can be expressed by a first-order formula.
\end{proof}

The main result of this section allows in a sense to approximate BST orderings with chain quasi-orders:
\begin{lemma}
  \label{lem:bst-partition-extraction}
  There is a function~$f(k) = 2^{O(k)}$ such that
  for any tournament~$T$, BST ordering~$<_S$, and family~$\Pc$ of at least~$f(k)$ disjoint intervals of~$<_S$,
  one can find a chain quasi-ordering~$\qle$
  and a subfamily~$\Pc' \subset \Pc$ of at least~$k$ parts
  such that any parts~$X,Y \in \Pc'$ satisfy~$X \qlt Y$ or~$Y \qlt X$.

  Furthermore, $\Pc'$ as well as the chain~$C$ and orientation~$o \in \{+,-\}$ defining~$\qle$ can be computed in linear time.
\end{lemma}
\begin{proof}
  Let~$T$ be a tournament, $S$ a BST of~$T$ and~$<_S$ the corresponding order.
  Consider a family~$\Pc$ of at least~$f(k)$ disjoint intervals of~$<_S$,
  where~$f(k) = 2^{O(k)}$ will be determined later.

  We choose a branch~$B = b_0,\dots,b_p$ of~$S$ by the following process. First~$b_0$ is the root of~$S$.
  For each (yet to be determined)~$b_i$, let~$S_i$ be the subtree of~$S$ rooted in~$b_i$,
  and define the weight~$w_i$ to be the number of intervals of~$\Pc$ intersected by~$S_i$.
  Then~$b_{i+1}$ is chosen to be the child of~$b_i$ that maximizes~$w_{i+1}$.
  This choice ensures that
  \begin{equation}
    \label{eq:branch-weights}
    2w_{i+1} + 1 \ge w_i.
  \end{equation}

  For each~$i < p$, let~$d_i$ be the child of~$b_i$ other than~$b_{i+1}$ (sometimes~$d_i$ does not exist),
  and let~$D_i$ be the subtree of~$S$ rooted at~$d_i$ ($D_i$ is empty if~$d_i$ does not exist).
  Furthermore, let~$L,R$ be the sets of vertices that are before, resp.\ after the leaf~$b_p$ in the order~$<_S$.
  For any $0 \le i \le j \le p$, let
  \[ L_{i,j} \eqdef \bigcup_{\substack{i \le \ell < j \\ b_\ell \in L}} \{b_\ell\} \cup D_\ell,
     \qquad \text{and} \qquad
     R_{i,j} \eqdef \bigcup_{\substack{i \le \ell < j \\ b_\ell \in R}} \{b_\ell\} \cup D_\ell.
  \]
  Roughly speaking, $L_{i,j}$, resp.~$R_{i,j}$ consists of subtrees branching out of~$B$ on the left, resp.\ right, between~$b_i$ and~$b_j$.
  \begin{claim}
    \label{clm:LSR-intervals}
    For any~$i,j$, the subtree~$S_i$ is partitioned into~$L_{i,j} <_S S_j <_S R_{i,j}$.
  \end{claim}
  \begin{proofofclaim}
    Clearly~$L_{i,j},S_j,R_{i,j}$ partition~$S_i$.
    Furthermore, if~$\ell < j$ and~$b_\ell \in L$, then~$b_\ell <_S S_j$, and in turn~$D_\ell <_S b_\ell$.
    This proves~$L_{i,j} <_S S_j$, and symmetrically~$S_j <_S R_{i,j}$.
  \end{proofofclaim}

  \begin{claim}
    For $0 \le i < j \le p$, if $w_i \ge w_j + 3$, then there is a part $P \in \Pc$
    such that $P \subset L_{i,j}$ or $P \subset R_{i,j}$.
    \label{clm:interval-contains-part}
  \end{claim}
  \begin{proofofclaim}
    There are at least three parts of~$\Pc$ that intersect~$S_i$ but not~$S_j$.
    Since these three parts and~$S_i$ are all intervals of~$<_S$, one of these parts, say~$P$, is contained in~$S_i$.
    Thus~$P$ is a subset of~$S_i$ but does not intersect~$S_j$,
    which using \cref{clm:LSR-intervals} implies that $P \subset L_{i,j}$ or $P \subset R_{i,j}$.
  \end{proofofclaim}

  Construct a sequence $i_0 < \dots < i_{2k}$ of indices in~$\{0,\dots,p\}$ inductively by
  taking $i_0 = 0$, and choosing~$i_{\ell+1}$ minimal such that $w_{i_{\ell+1}} \le w_{i_\ell} - 3$.
  Using~\eqref{eq:branch-weights} and the minimality of~$i_{\ell+1}$, we obtain for all~$\ell$ that
  \begin{align}
    2w_{i_{\ell+1}} + 1 & \ge w_{i_{\ell+1} - 1} > w_{i_l} - 3, \\
    \label{eq:subbranch-weights}
    \text{hence} \quad 2w_{i_{\ell+1}} + 3 & \ge w_{i_l}.
  \end{align}
  We can now define~$f$ by~$f(0) = 1$ and~$f(k+1) = 4f(k) + 9$.
  By assumption, $w_0 = \card{\Pc} \ge f(k)$, and it follows from~\eqref{eq:subbranch-weights}
  that the construction of~$i_\ell$ can be carried out up to~$i_{2k}$.

  Define $L'_\ell = L_{i_{\ell-1},i_\ell}$, and similarly $R'_\ell = R_{i_{\ell-1},i_\ell}$, see \cref{fig:bst-branch}.
  By \cref{clm:interval-contains-part}, for any $\ell \in [2k]$,
  either~$L'_\ell$ or~$R'_\ell$ contains a part of~$\Pc$.
  Thus, either there are at least~$k$ distinct~$L'_\ell$ containing a part of~$\Pc$,
  or there are at least~$k$ distinct~$R'_\ell$ containing a part of~$\Pc$.
  Without loss of generality, assume that we are in the former case.
  We will now forget about vertices which are not in~$L$.

  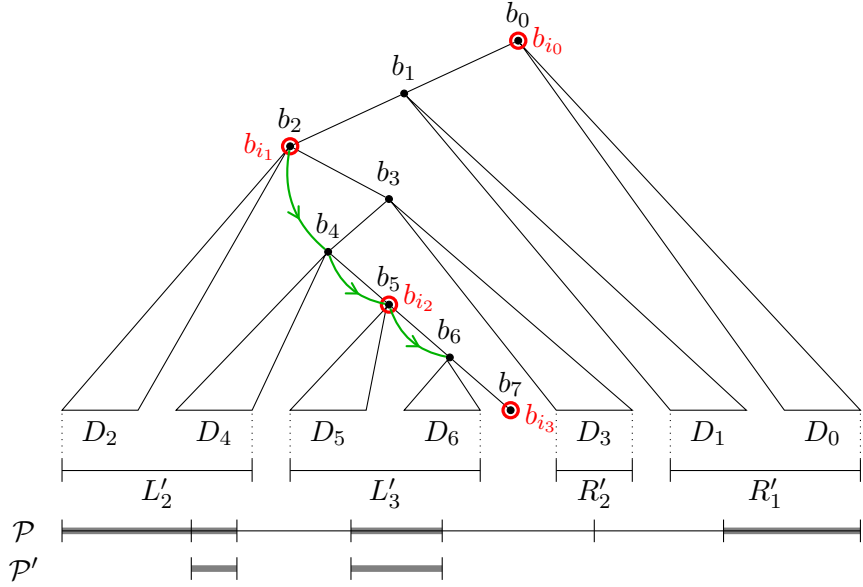
\begin{figure}[tb]
    \begin{center}
      \begin{tikzpicture}
        \def\s{0.7}
        \def\bb{-7*\s}

        \foreach \d/\x in {0/1,1/-0.5,2/-2,3/-0.7,4/-1.5,5/-0.7,6/0.1,7/0.9}{
          \node (b\d) at (\x,-\d*\s) [smallnode, label=above:$b_\d$] {};
        }
        \def\b{\bb}

        \foreach \d/\l/\r in {0/4.5/5.5,1/3/4,3/1.5/2.5, 6/-0.5/0.5,5/-2/-1,4/-3.5/-2.5,2/-5/-4}{
          \draw (b\d) -- (\l,\b) -- (\r,\b) -- cycle;
          \node (D\d) at (\l+0.5,\b-0.3) {$D_\d$};
        }
        \draw (b0) -- (b1) -- (b2) -- (b3) -- (b4) -- (b5) -- (b6) -- (b7);

        \foreach \i in {0,2,5,7}{
          \draw[blue,very thick] (b\i) circle (1mm);
        }
        \draw (b0) +(0.4,0) node[blue] {$b_{i_0}$};
        \draw (b2) +(-0.4,0) node[blue] {$b_{i_1}$};
        \draw (b5) +(0.4,0.1) node[blue] {$b_{i_2}$};
        \draw (b7) +(0.4,-0.1) node[blue] {$b_{i_3}$};

        \foreach \i/\j in {2/4,4/5,5/6}{
          \draw (b\i) edge [bend right=30,green!70!black,thick,directed] (b\j);
        }

        \def\b{\bb - 0.8}
        \foreach \r/\l/\c/\L in {5.5/3/4.25/$R'_1$,2.5/1.5/2/$R'_2$,0.5/-2/-0.75/$L'_3$,-2.5/-5/-3.75/$L'_2$}{
          \draw (\l,\b-0.15) -- +(0,0.3);
          \draw (\r,\b-0.15) -- +(0,0.3);
          \draw[dotted] (\l,\b+0.15) -- (\l,\bb);
          \draw[dotted] (\r,\b+0.15) -- (\r,\bb);
          \draw (\l,\b) -- (\r,\b);
          \draw (\c,\b-0.3) node {\L};
        }

        \def\b{\bb - 1.6}
        \node (P) at (-5.5,\b) {$\Pc$};
        \draw (-5,\b) -- (5.5,\b);
        \foreach \l/\r in {3.7/5.5,-1.2/0,-2.7/-5}{
          \draw[line width=2.5pt, draw opacity=0.5] (\l,\b) -- (\r,\b);
        }
        \foreach \x in {5.5,3.7,2,0,-1.2,-2.7,-3.3,-5}{
          \draw (\x,\b-0.15) -- +(0,0.3);
        }

        \def\b{\bb - 2.1}
        \node (PP) at (-5.5,\b) {$\Pc'$};
        \foreach \r/\l in {-3.3/-2.7,-1.2/0}{
          \draw (\l,\b-0.15) -- +(0,0.3);
          \draw (\r,\b-0.15) -- +(0,0.3);
          \draw[line width=2.5pt, opacity=0.5] (\l,\b) -- (\r,\b);
        }
      \end{tikzpicture}
    \end{center}
    \caption{%
      Sketch of the proof of \cref{lem:bst-partition-extraction}.
      In the upper half, the BST~$T$ with the extracted branch~$B$;
      circled in blue, the extracted subsequence~$b_{i_\ell}$;
      in green arrows, the chain~$C = B \cap L = \{b_2,b_4,b_5,b_6\}$.
      Below the tree, from top to bottom:
      the partition in~$L'_\ell$ and~$R'_\ell$;
      the initial family (here partition)~$\Pc$, with the parts contained in some~$L'_\ell$ or~$R'_\ell$ highlighted;
      the final family~$\Pc'$, obtained by selecting a part of~$\Pc$ inside each possible~$L'_\ell$.
    }
    \label{fig:bst-branch}
  \end{figure}

  Define~$C = L \cap B$. By \cref{rmk:branch-chain}, this is a chain, whose order coincides with~$<_S$.
  Furthermore, at any node~$x$ of~$C$, the branch~$B$ descends on the right side, since~$x <_S b_p$.
  Thus, the order in~$C$ also coincides with the ancestor-descendent order of~$S$.
  (Remark here that if we were in~$R$ instead of~$L$,
  the order of~$C$ would be the inverse of the ancestor-descendant order.)
  Now, if~$C$ is enumerated as~$c_0 <_S \dots <_S c_t$,
  and~$C_i$ is the subtree branching out on the left of~$c_i$, defined similarly to~$D_i$,
  then the chain quasi-order~$\qle_C^+$ restricted to~$L$ is exactly
  \[ C_0 \qlt_C^+ c_0 \qlt_C^+ C_1 \qlt_C^+ c_1 \qlt_C^+ \dots \qlt_C^+ c_t \]
  where each subtree~$C_i$ is an equivalence class.
  (In~$R$, we would instead use~$\qle_C^-$.)
  From this description, we obtain that any~$L_{i,j}$ is an interval of~$\qle_C^+$ restricted to~$L$.

  For each~$L'_\ell$, select a part of~$\Pc$ included in~$L'_\ell$ if any,
  and define~$\Pc'$ as the collection of selected parts.
  Thus~$\Pc' \subset \Pc$, and we know from the choice of the family~$\{L'_\ell\}_{\ell \in [2k]}$
  that $\card{\Pc'} \ge k$.
  Furthermore, if~$X \neq Y$ are parts of~$\Pc'$, there are $s \neq t$
  such that~$X \subseteq L'_s$ and~$Y \subseteq L'_t$.
  Since the~$L'_\ell$ are disjoint intervals of~$(L,\qle_C^+)$,
  we have either~$L'_s \qlt_C^+ L'_t$ or~$L'_t \qlt_C^+ L'_S$,
  hence a fortiori either~$X \qlt_C^+ Y$ or~$Y \qlt_C^+ X$.
  Thus~$\Pc'$ satisfies all desired properties.

  Finally, given the BST~$S$ and the family~$\Pc$,
  it is routine to compute the weights~$w_i$ of all nodes in~$S$ by a bottom-up procedure;
  this only requires to compute the left-most and right-most parts of~$\Pc$ intersecting each subtree.
  From this, it is simple to choose in linear time
  the branch~$B$, the indices~$i_\ell$, the better side~$L$ or~$R$, and finally to compute~$C$ and~$\Pc'$.
\end{proof}

\section{Approximating twin-width}\label{sec:approx}
In this section, we prove \cref{lem:BST-twinwidth}: for any tournament~$T$ and BST ordering~$<$,
if~$T$ has small twin-width, then so does the ordered structure~$(T,<)$.
This implies an \fpt approximation of twin-width in tournaments (\cref{thm:approx-algo}).

To this end, we show using \cref{thm:grid-theorem,lem:bst-partition-extraction} that if~$(T,<)$ has large twin-width,
then~$T$ contains a so-called \emph{chain order representation} of a matrix with high grid rank.
It is relatively simple to show that these chain order representations are obstructions to twin-width.
In \cref{sec:extraction}, we will improve this result
by extracting the canonical obstructions~$\Fc_=,\Fc_\le,\Fc_\ge$ from chain order representations.

\subsection{Chain order representation}\label{sec:chain-representation}
Let~$M$ be a $0,1$-matrix and~$T$ a tournament.
A \emph{chain order representation} of~$M$ in~$T$ consists of
two disjoint subsets~$A,B$ of vertices and two chain quasi-orders~$\qle^A,\qle^B$ such that
\begin{itemize}[noitemsep]
  \item restricted to~$A$ and~$B$ respectively, $\qle^A$ and~$\qle^B$ are strict orderings,
  \item and the adjacency matrix of~$(A,\qle^A)$ versus~$(B,\qle^B)$ is~$M$.
\end{itemize}
See \cref{fig:chain-representation} for an example.
Explicitly, the matrix~$M$ has columns~$(A,\qle^A)$, rows~$(B,\qle^B)$,
and contains a~`1' at position~$(a,b)$ for~$a \in A$ and~$b \in B$ if and only if~$a \to b$ is an edge.
The subsets~$A,B$ should not be confused with the chains~$C_A,C_B$ from which the orders~$\qle^A,\qle^B$ are defined.
Note also that only~$A$ and~$B$ are required to be disjoint; the sets~$A,B,C_A,C_B$ may otherwise intersect.

\begin{figure}
  \begin{center}
    \begin{tikzpicture}
      \begin{scope}[yscale=0.7,xscale=0.9]
      \foreach \i in {1,...,5}{
        \node (c\i) [smallnode] at (0,\i) {};
        \node (a\i) [smallnode] at (2,\i) {};
        \node (b\i) [smallnode] at (4,\i) {};
        \node (d\i) [smallnode] at (6,\i) {};
      }
      \draw[thick,->] (c1) -- (c5);
      \draw[thick,->] (d1) -- (d5);

      \draw (a3.center) ellipse (0.3 and 3);
      \draw (b3.center) ellipse (0.3 and 3);
      \draw (2,-0.5) node {$A$};
      \draw (4,-0.5) node {$B$};
      \draw (0,0.5) node {$C_A$};
      \draw (6,0.5) node {$C_B$};

      \foreach \i/\si in {1/3,2/1,3/4,4/5,5/2}{
        \draw[directed] (a\i) -- (b\si);
      }

      \foreach \i in {1,...,5}{
        \draw[directed] (a\i) -- (c\i);
        \draw[directed] (b\i) -- (d\i);

        \foreach \j in {1,...,5}{
          \ifnum \j<\i
            \draw[directed] (c\j) -- (a\i);
            \draw[directed] (d\j) -- (b\i);
          \fi
        }
      }
      \end{scope}

      \begin{scope}[xshift=7cm,yshift=0.3cm,scale=0.6]
      \draw (0.5,0.5) rectangle (5.5,5.5);

      \foreach \i/\si in {1/3,2/1,3/4,4/5,5/2}{
        \foreach \j in {1,...,5}{
          \ifnum \j=\si
            \node at (\i,\j) {1};
          \else
            \node at (\i,\j) {0};
          \fi
        }
      }

      \node at (3,0) {$M$};
      \end{scope}
    \end{tikzpicture}
  \end{center}
  \caption{%
    Chain representation of the matrix~$M$ of the permutation~$31452$.
    Vertices of~$A$ are ordered bottom to top by a chain order~$\qle^+_{C_A}$, and similarly for~$B$.
    Edges oriented from~$A$ to~$B$ correspond to~`1's in~$M$.
    For readability, edges from~$B$ to~$A$ (corresponding to~`0's) are not drawn.
    The other omitted edges are unconstrained.
  }
  \label{fig:chain-representation}
\end{figure}

\begin{lemma}
  Let~$\Tc$ be a class of tournaments containing
  chain order representations of matrices with rank-$k$ divisions for arbitrarily large~$k$.
  Then~$\Tc$ has unbounded twin-width.
  \label{lem:chain-order-obstruction}
\end{lemma}
\begin{proof}
  Consider a chain representation of~$M$ in~$T$ defined by subsets~$A,B$ and chain orders~$\qle^A,\qle^B$.
  Construct the ordered structure~$(T',<)$
  where~$T' = T[A \cup B]$ is the subtournament induced by~$A$ and~$B$,
  and the ordering~$<$ is obtained by combining~$\qle^A$ and~$\qle^B$:
  $<$ coincides with~$\qle^A$ and~$\qle^B$ inside~$A$ and~$B$ respectively, and~$A < B$.
  Thus~$M$ is contained in the adjacency matrix of~$T'$ ordered by~$<$.
  Call~$\Tc'$ the class of all such ordered structures constructed from~$T \in \Tc$.
  By assumption, structures in~$\Tc'$ have adjacency matrices with arbitrarily high rank divisions,
  hence by \cref{thm:grid-theorem} $\Tc'$ has unbounded twin-width.

  Using \cref{lem:chain-order-transduction}, it is simple to show that
  there is a fixed first-order transduction~$\Phi$ that given a tournament~$T$,
  produces any ordering~$<$ of~$T$ obtained by combining two chain orders as previously.
  It follows that~$\Tc'$ is obtained by first-order transduction from~$\Tc$.
  Since~$\Tc'$ has unbounded twin-width, this implies by \cref{thm:tww-transduction}
  that~$\Tc$ has unbounded twin-width.
\end{proof}

\subsection{Self-extraction in high-rank divisions}
The last tool needed to prove \cref{lem:BST-twinwidth}
is the following self-extraction lemma for high-rank divisions.
\begin{lemma}
  \label{lem:grid-in-grid}
  Let~$M$ be a matrix with a rank-$(k^3)$ division~$(\Rc,\Cc)$.
  Then~$M$ has a submatrix~$M'$ with a rank-$k$ division
  such that~$M'$ contains only one row (resp.\ column) from each part of~$\Rc$ (resp.~$\Cc$).
\end{lemma}
\Cref{lem:grid-in-grid} will be used as follows:
from a high-rank division whose parts are strictly ordered by a chain-quasi order~$\qle$
(while allowing elements of the same part to be equivalent under~$\qle$),
it yields a high-rank division on a subset of elements totally ordered by~$\qle$.

We first extract a high-rank matrix of the desired form from a high-rank division.
\begin{lemma}
  \label{lem:rank-in-grid}
  Let~$M$ be a matrix with a rank-$k$ division~$(\Rc,\Cc)$.
  Then there exists a submatrix~$M'$ with \emph{rank~$k$}
  containing only one row (resp.\ column) from each part of~$\Rc$ (resp.~$\Cc$).
\end{lemma}
\begin{proof}
  Enumerate the blocks $\Rc = \{R_1,\dots,R_k\}$ and $\Cc = \{C_1,\dots,C_k\}$ of the division.
  Assume by induction that we have already extracted one row (resp.\ column)
  from each of~$R_1,\dots,R_{k-1}$ (resp.~$C_1,\dots,C_{k-1}$) to form a submatrix~$M_{k-1}$ of rank~$k-1$.
  Now add all rows of~$R_k$ and all columns of~$C_k$ to~$M_{k-1}$.
  The resulting matrix has rank at least~$k$ since it contains~$R_k \times C_k$.
  Using basis exchange, all but one row of~$R_k$ can be removed while preserving rank~$k$,
  and similarly all but one column of~$C_k$ can be removed, yielding the desired submatrix.
\end{proof}

\begin{proof}[Proof of \cref{lem:grid-in-grid}]
  Let us enumerate the blocks of the division~$(\Rc,\Cc)$ as
  $\Rc = \{R_0 < \dots < R_{k^3-1}\}$ and $\Cc = \{C_0 < \dots < C_{k^3-1}\}$.
  We will use two levels of coarser partitions, regrouping blocks~$k$ by~$k$:
  define the partition~$\Rc'$ with parts $R'_i = R_{ki} \cup \dots \cup R_{k(i+1) - 1}$ for~$i < k^2$,
  as well as~$\Rc''$ with parts $R''_i = R'_{ki} \cup \dots \cup R'_{k(i+1)-1}$ for~$i < k$,
  and similarly~$\Cc'$ and~$\Cc''$.
  Thus~$(\Rc',\Cc')$ is a $k^2$-division of~$M$, while~$(\Rc'',\Cc'')$ is a $k$-division.

  Now for each~$i,j \in [k]$, we consider the submatrix~$R'_{ki + j} \times C'_{kj + i}$.
  Remark that these submatrices all have disjoint sets of rows and columns.
  Further, $R'_{ki + j} \times C'_{kj + i}$ contains~$k$ parts of~$\Rc$ and of~$\Cc$,
  hence \cref{lem:rank-in-grid} can be applied to extract a rank-$k$ submatrix~$M_{i,j}$
  using only one row or column from the relevant blocks of~$\Rc$ and~$\Cc$.
  Define~$M'$ to be the submatrix formed only by the rows and columns used in some~$M_{i,j}$.
  The former construction ensures this is only one row or column from each part of~$\Rc$ and~$\Cc$.
  In~$M'$, $(\Rc'',\Cc'')$ is a $k$-division whose cell~$R''_i \times C''_j$
  contains the rank~$k$ submatrix~$M_{i,j}$.
  Therefore~$M'$ has a rank-$k$ division.
\end{proof}

\subsection{Finding high grid rank representations}
We are now ready to prove that BST orders yield an approximation of twin-width in tournaments.
Given a class~$\Tc$ of tournaments, denote by~$\Tc^{BST}$ the class of ordered tournaments~$(T,<_S)$
with~$T \in \Tc$ and~$<_S$ some BST ordering of~$T$.
\begin{lemma}
  \label{lem:BST-chain-order-repr}
  Let~$\Tc$ be a class of tournaments.  If~$\Tc^{BST}$ has unbounded twin-width,
  then~$\Tc$ contains chain order representations of matrices with rank-$k$ divisions for arbitrarily large~$k$.
\end{lemma}
\begin{proof}
  Assume that~$\Tc^{BST}$ has unbounded twin-width.
  Since it is a class of ordered structures, we can apply the results of~\cite{twin-width4}
  to obtain matrices with large grid rank:
  \begin{claim}
    For any~$k$, there exists~$(T,<) \in \Tc^{BST}$
    and intervals of vertices $A_1 < \dots < A_k < B_1 < \dots < B_k$
    such that for any~$i,j$, the adjacency matrix of~$A_i$ versus~$B_j$ has rank~$k$.
    \label{clm:rank-BST}
  \end{claim}
  \begin{proofofclaim}
    By \cref{thm:grid-theorem}, ordered structures with sufficiently large twin-width
    have rank-$k$ divisions in their adjacency matrix, for any given~$k$.
    For~$\Tc^{BST}$ which has unbounded twin-width,
    this yields~$(T,<) \in \Tc^{BST}$ and a rank-$(2k)$ division
    consisting of two partitions~$A_1 < \dots < A_{2k}$ and $B_1 < \dots < B_{2k}$
    of~$V(T)$ such that the adjacency matrix of~$A_i$ versus~$B_j$ has rank~$k$.

    Let us ensure that the~$A_i$s and~$B_j$s are disjoint.
    If the last element of~$A_k$ is smaller than the first of~$B_{k+1}$,
    then~$A_1 < \dots < A_k < B_{k+1} < \dots < B_{2k}$ satisfy the desired property.
    Otherwise, the last element of~$B_k$ is smaller than the first of~$A_{k+1}$,
    and symmetrically~$B_1 < \dots < B_k < A_{k+1} < \dots < A_{2k}$ are as desired.
  \end{proofofclaim}

  We now wish to replace the BST ordering~$<$ of \cref{clm:rank-BST} with chain quasi-orders.
  \begin{claim}
    For any~$k$, there exists~$T \in \Tc$, two chain quasi-orders~$\qle^A,\qle^B$ in~$T$,
    and subsets $A_1 \qlt^A \dots \qlt^A A_k$ and $B_1 \qlt^B \dots \qlt^B B_k$ of~$V(T)$
    with all~$A_i$s and~$B_j$s pairwise disjoint,
    such that the adjacency matrix of~$A_i$ versus~$B_j$ has rank~$k$.
    \label{clm:rank-chain-order}
  \end{claim}
  \begin{proofofclaim}
    Applying~\cref{clm:rank-BST} with~$f(k)$ where~$f$ is the function of \cref{lem:bst-partition-extraction},
    there exists~$T \in \Tc$ and intervals~$A_1 < \dots < A_{f(k)} < B_1 < \dots < B_{f(k)}$
    of some BST order~$<$ such that the adjacency matrix of~$A_i$ versus~$B_j$ has rank~$k$.
    Applying \cref{lem:bst-partition-extraction} to the~$A_i$s and to the~$B_j$s separately
    yields subfamilies of size at least~$k$, both of which are strictly ordered by some chain quasi-order.
  \end{proofofclaim}

  To obtain the chain order representation from \cref{clm:rank-chain-order},
  we only need to ensure that~$\qle^A,\qle^B$ are also strict orderings inside each part~$A_i$ or~$B_j$.
  This is obtained thanks to \cref{lem:grid-in-grid}.

  Apply \cref{clm:rank-chain-order} with~$k^3$, yielding parts~$A_i,B_j$ ordered by~$\qle^A,\qle^B$.
  Consider the adjacency matrix~$M$ of~$\bigcup_i A_i$ against~$\bigcup_j B_j$
  ordered by~$\qle^A$ and~$\qle^B$ respectively, breaking equivalences in the quasi-orderings arbitrarily.
  Since the matrix of~$A_i$ against~$B_j$ has rank at least~$k^3$, $M$ has a rank-$(k^3)$ division.
  We then apply \cref{lem:grid-in-grid} to extract~$A,B$ with only one element in each~$A_i,B_j$,
  so that the restriction to~$A,B$ has a rank-$k$ division.
  Then the chain quasi-orderings~$\qle_A,\qle_B$ strictly order~$A$ and~$B$.
  Thus~$A,B$ define a chain order representation of a matrix with a rank-$k$ division, as desired.
\end{proof}

From \cref{lem:chain-order-obstruction,lem:BST-chain-order-repr},
we immediately obtain that if~$\Tc^{BST}$ has unbounded twin-width, then so does~$\Tc$, or equivalently:
\bsttww*

Using \cref{lem:BST-twinwidth}, we obtain an \fpt approximation algorithm for twin-width of tournaments.
\apxalgo*
\begin{proof}
  Given a tournament~$T$, construct an arbitrary BST ordering~$<$.
  By \cref{lem:BST-twinwidth}, $\tww(T,<)$ is bounded by some function of~$\tww(T)$.
  We then apply to the ordered structure~$(T,<)$
  the approximation algorithm of ordered structures~\cref{thm:grid-theorem}, which is \fpt.
  It finds a contraction sequence for~$(T,<)$
  of width bounded by a function of~$\tww(T,<)$, hence by a function of~$\tww(T)$.
  The same contraction sequence is also valid for~$T$, and forgetting the ordering~$<$ does not increase its width.
\end{proof}

\section{Extracting forbidden tournaments} \label{sec:extraction}
The goal of this section, the last part of the proof of our main result,
is to show that any class of tournaments with unbounded twin-width
contains one of the classes of obstructions described in \cref{sec:obstructions}.

Recall that~$\Tc^{BST}$ denotes the class of ordered tournaments~$(T,<_S)$
where $T \in \Tc$ and~$<_S$ is some BST ordering of~$T$.
The precise statement proved in this section is the following.
\begin{theorem}[\cref{thm:forbidden-tournaments-intro} restated]
  Let~$\Tc$ be a hereditary class of tournaments.
  If~$\Tc^{BST}$ has unbounded twin-width, then~$\Tc$ contains one of the classes~$\Fc_=,\Fc_\le,\Fc_\ge$.
  \label{thm:forbidden-tournaments}
\end{theorem}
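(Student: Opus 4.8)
The plan is to combine the grid theorem (\cref{thm:grid-theorem}), the BST partition extraction lemma (\cref{lem:bst-partition-extraction}), the structural results on chain orders, and a double application of the product Ramsey theorem (\cref{lem:perm-ramsey}) to pull out one of $\Fc_=,\Fc_\le,\Fc_\ge$. Suppose $\Tc^{BST}$ has unbounded twin-width; fix a large parameter $k$ (to be chosen as a function of the size of permutation obstruction we want, after accounting for all the Ramsey and grid losses). Since twin-width of $\Tc^{BST}$ is unbounded, there is $T\in\Tc$ and a BST $S$ with $\tww(T,<_S)$ huge, so by \cref{thm:grid-theorem} the adjacency matrix $A_{(T,<_S)}$ has a rank-$N$ division for $N$ large. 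A rank-$N$ division gives in particular $N$ row-intervals and $N$ column-intervals all consisting of consecutive vertices of $<_S$; by intersecting, we get a single family $\Pc$ of roughly $N$ disjoint intervals of $<_S$ such that between any two of them the submatrix is $N$-diverse (has many distinct rows and columns).

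Next I would feed $\Pc$ into \cref{lem:bst-partition-extraction}: since $|\Pc|\ge f(k)$, we obtain a chain $C$ in $T$, an orientation $o\in\{+,-\}$, and a subfamily $\Pc'\subseteq\Pc$ with $|\Pc'|\ge k$ whose intervals are non-overlapping for the chain quasi-order $\qle_C^o$. WLOG $o=+$. The point of this step is that $\qle_C^+$ is FO-definable (via a transduction) from $T$ alone, and the cells of the division restricted to $\Pc'$ are still highly diverse when viewed with respect to $\qle_C^+$ — the key being that the $<_S$-order and the $\qle_C^+$-order agree enough on the relevant vertices (they differ only within equivalence classes $B_i$ of $\qle_C^+$, and the non-overlapping condition means each part of $\Pc'$ sits inside a $\qle_C^+$-interval, so the cell structure is preserved). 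So we now have, inside $T$, a FO-definable quasi-order whose associated matrix has a rank-$\Omega(\log N)$-ish division — the logarithmic loss coming from $f(k)=2^{O(k)}$ in \cref{lem:bst-partition-extraction}.

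Then I would pass to the linear-order world: collapsing $\qle_C^+$ to a linear order $<'$ (choosing an arbitrary order inside each equivalence class), we have an ordered tournament $(T',<')$ on a suitable subtournament $T'$ with a large rank division, hence by \cref{thm:grid-theorem} large twin-width, and moreover $<'$ is "almost" FO-definable from $T'$. At this point one is essentially in the situation handled in~\cite{twin-width4}: a large high-rank grid in an ordered binary structure yields, after the product Ramsey argument of \cref{lem:perm-ramsey}, an induced substructure isomorphic to $\Oc_\sigma$ for an arbitrary prescribed permutation $\sigma$, together with colourings (encoding the tournament edges, the chain-quasi-order classes, and the position relative to $C$) that depend only on the two orders. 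Invoking \cref{lem:tournament-order}, these tournament-edge colourings must coincide with one of the orders or its reverse; a careful case analysis on which order wins, and on the orientation $o$ and the left/right side $L$ or $R$ from \cref{lem:bst-partition-extraction}, identifies which of the three encodings $\Fc_=(\sigma),\Fc_\le(\sigma),\Fc_\ge(\sigma)$ has been produced as an induced subtournament of $T$. Since $\sigma$ was arbitrary and $\Tc$ is hereditary, $\Tc$ contains the whole class $\Fc_R$ for the corresponding $R$.

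The main obstacle I expect is the bookkeeping in the step that transfers the rank-$N$ division for $<_S$ into a genuine high-rank division for the FO-definable quasi-order $\qle_C^+$: one must verify that diversity of cells is not destroyed when the row/column intervals of $\Pc'$ are re-read in the $\qle_C^+$ order, handling the equivalence classes $B_i$ (which are "blown up" sets of twins with respect to the chain but not with respect to the ambient tournament) correctly, and controlling how much diversity and how many intervals survive the passage through \cref{lem:bst-partition-extraction} and the two Ramsey extractions. The rest — the final Ramsey extraction and the case analysis matching the outcome to $\Fc_=,\Fc_\le,\Fc_\ge$ — is technical but follows the template of~\cite{twin-width4} fairly closely, with \cref{lem:tournament-order} doing the work of showing no "genuinely new" tournament can appear beyond the three encodings.
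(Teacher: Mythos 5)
Your outline follows the paper's general strategy (grid theorem, BST partition extraction, Ramsey, case analysis), but two of its steps do not work as stated, and the second one is exactly where the paper's key technical idea lives. First, the claim that ``by intersecting'' the row- and column-interval partitions of the rank-$N$ division you get a single family $\Pc$ of intervals such that the submatrix between any two of them is $N$-diverse is false: a part of the common refinement is a \emph{subset} of a row block and of a column block, and diversity of a cell is not inherited by submatrices obtained by restricting both its rows and its columns. The paper avoids this (\cref{clm:rich-BST}) by simply taking the first $k$ row blocks and the last $k$ column blocks (or vice versa), so that the two families are disjoint, and then applies \cref{lem:bst-partition-extraction} \emph{twice}, once to each family, producing two chain quasi-orders $\qle^A,\qle^B$ (\cref{clm:rich-chain-order}).

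Second, and more seriously, your plan to ``collapse $\qle_C^+$ to a linear order $<'$ by choosing an arbitrary order inside each equivalence class'' breaks the argument at the point you yourself flag as the main obstacle. The arbitrary tie-breaking is not definable from the tournament, so the extracted substructure would carry an order that is not realised by any chain inside $T$, and you could not conclude that what you find is an induced copy of $\Fc_=(\sigma)$, $\Fc_\le(\sigma)$ or $\Fc_\ge(\sigma)$ (these are pure tournaments; their two orders must come from edges of $T$ itself). The paper's resolution is the strengthened extraction statement \cref{thm:rich-division-extraction}: the permutation encoding $\Mc_s(\sigma)$ can be extracted picking \emph{at most one row and one column per part of the division}, so the chosen vertices are pairwise non-overlapping, hence pairwise inequivalent under $\qle_C^o$, and the chain quasi-order restricted to them is a genuine total order induced by the chain (\cref{clm:chain-permutation1}). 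Without this, no amount of Ramsey fixes the equivalence classes. Finally, your closing ``careful case analysis'' conceals the hardest case of \cref{clm:chain-permutation2,clm:chain-permutation3}: when \cref{lem:tournament-order} aligns the internal edges of $A$ with the chain order $\qle$ rather than with the matrix order, the permutation ends up encoded between $A$ and the chain $C$ itself, and one needs a second application of \cref{lem:perm-ramsey} to the $a_ic_j$ edges plus the doubling trick ($\sigma_3(2i-1)=\sigma(i)$, $\sigma_3(2i)=n+i$) to eliminate the residual dependence on the permuted order and land in $\Fc_\ge$; this step is not a routine replay of~\cite{twin-width4} and is missing from your sketch.
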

Since the classes~$\Fc_=,\Fc_\le,\Fc_\ge$ have unbounded twin-width,
this strengthens \cref{lem:BST-twinwidth}.
The proof builds upon that of \cref{lem:BST-chain-order-repr} in the previous section.
The latter constructed chain order representations of matrices with high-rank divisions.
From these, we will extract encodings of arbitrary permutations
thanks to a major technical result of \cite{twin-width4},
and further refine these encodings to obtain the classes~$\Fc_=,\Fc_\le,\Fc_\ge$ using Ramsey arguments.

\subsection{Encoding permutations in matrices}\label{sec:perm-matrice-encoding}
\newcommand{\matenc}{\mathcal{E}}

Let us described the encodings of permutations in matrices
used as obstructions to twin-width in~\cite{twin-width4}.
They define six classes of matrices $\Mc_=,\Mc_{\neq},\Mc_{\le_R},\Mc_{\ge_R},\Mc_{\le_C},\Mc_{\ge_C}$.\footnote{%
  The classes are called~$\Fc_=$, etc.\ in~\cite{twin-width4},
  we use~$\Mc_=$ to avoid ambiguity with the tournaments classes defined in \cref{sec:obstructions}.
}
Like the classes of tournaments described in \cref{sec:obstructions},
each class~$\Mc_s$ consists of the hereditary closure
of encodings~$\Mc_s(\sigma)$ of arbitrary permutations~$\sigma$.
For a permutation~$\sigma$ on~$n$ elements, the $n \times n$ matrix
$\Mc_s(\sigma)$ is defined as follows, where~$m_{i,j}$ denotes the entry at position~$(i,j)$:
\begin{align*}
  \Mc_=(\sigma):       && m_{i,j} = 1 & \iff j = \sigma(i) \\
  \Mc_{\neq}(\sigma):  && m_{i,j} = 1 & \iff j \neq \sigma(i) \\
  \Mc_{\le_R}(\sigma): && m_{i,j} = 1 & \iff j \le \sigma(i) \\
  \Mc_{\ge_R}(\sigma): && m_{i,j} = 1 & \iff j \ge \sigma(i) \\
  \Mc_{\le_C}(\sigma): && m_{i,j} = 1 & \iff i \le \sigma^{-1}(j) \\
  \Mc_{\ge_C}(\sigma): && m_{i,j} = 1 & \iff i \ge \sigma^{-1}(j)
\end{align*}
Thus~$\Mc_=(\sigma)$ is the usual permutation matrix, $\Mc_{\neq}(\sigma)$ is its complement,
and $\Mc_{\le_R},\Mc_{\ge_R},\Mc_{\le_C},\Mc_{\ge_C}$ are obtained from the permutation matrix
by propagating the `1's down, up, left, and right respectively, see \cref{fig:matrix-encodings}.
Observe that if~$\tau$ is a pattern of~$\sigma$, then~$\Mc_s(\tau)$ is a submatrix of~$\Mc_s(\sigma)$ for each of these encodings~$s$.
\begin{figure}[htb]
  \begin{center}
    \begin{tikzpicture}[scale=0.2]

      \foreach \tst/\x in {
        {\j == \si}/0,{\j != \si}/1,{\j <= \si}/2,{\j >= \si}/3,{\i <= \sj}/4,{\i >= \sj}/5
      }{
        \begin{scope}[xshift=\x{}0cm]
          \foreach \i in {0,...,8}{
            \foreach \j in {0,...,8}{
              \pgfmathtruncatemacro{\iq}{floor(\i/3)}
              \pgfmathtruncatemacro{\ir}{mod(\i,3)}
              \pgfmathtruncatemacro{\si}{\ir*3+2-\iq}
              \pgfmathtruncatemacro{\jq}{floor(\j/3)}
              \pgfmathtruncatemacro{\jr}{mod(\j,3)}
              \pgfmathtruncatemacro{\sj}{6-\jr*3+\jq}
              \pgfmathsetmacro{\c}{\tst ? "black" : "white"}
              \fill[\c] (\i,\j) rectangle +(1,1);
            }
          }
          \draw[white!50!black] (0,0) grid (9,9);
        \end{scope}
      }

    \end{tikzpicture}
  \end{center}
  \caption{%
    The six encodings $=,\neq,\le_R,\ge_R,\le_C,\ge_C$ (left to right) of the 3-grid permutation $369258147$.
    The~1 and~0 are drawn as black and white squares respectively.
    (Figure from~\cite{twin-width4}.)
  }
  \label{fig:matrix-encodings}
\end{figure}

We denote by $\matenc = \{=,\neq,\le_R,\ge_R,\le_C,\ge_C\}$ the symbols denoting these six encodings.
Thus, for~$s \in \matenc$, $\Mc_s$ is one of the previously described classes of matrices.
The main technical result of~\cite{twin-width4} is the following.
\begin{theorem}[{\cite[Theorem~29]{twin-width4}}]
  \label{thm:rank-division-extraction}
  There is a function~$f$ such that in any matrix~$M$ with a rank-$f(k)$ division,
  there is an encoding $s \in \matenc$ such that
  $\Mc_s(\sigma)$ is a submatrix of~$M$ for any permutation~$\sigma$ on~$k$ elements.
\end{theorem}

Consider now a class~$\Tc$ such that~$\Tc^{BST}$ has unbounded twin-width.
By \cref{lem:BST-chain-order-repr}, $\Tc$ contains chain order representations
of matrices with arbitrarily high rank divisions.
Applying \cref{thm:rank-division-extraction} to these matrices, we obtain the following.
\begin{proposition}
  For any permutation $\sigma$, there exists an encoding $s \in \matenc$
  such that some~$T \in \Tc$ contains a chain order representation of~$\Mc_s(\sigma)$.
  \label{clm:chain-permutation1}
\end{proposition}

\subsection{Permutations between chains}
\Cref{clm:chain-permutation1} proves that there are chain order representations of arbitrary permutations.
Such a representation involves four sets of vertices: the disjoint subsets~$A,B$ inducing~$\Mc_s(\sigma)$ as adjacency matrix,
and the two chains from which the orderings of~$A$ and~$B$ are defined.
The next result shows that only two of those four sets are needed to encode the permutations.

\begin{proposition}
  \label{clm:chain-permutation2}
  For any permutation~$\sigma \in \Perm_n$, there exists in some~$T \in \Tc$ one of the following two structures:
  \begin{enumerate}
    \item \label{item:center-matrix} Either two disjoint ordered subsets
      $A = \{a_1 <_A \dots <_A a_n \}$, and $B = \{b_1 <_B \dots <_B b_n \}$ of vertices such that
      \begin{itemize}[nosep]
        \item $A$ is a chain corresponding to the order~$<_A$ up to reversal,
          meaning that either for all~$i < j$ there is an edge~$a_i \to a_j$,
          or for all~$i < j$ there is an edge~$a_i \from a_j$,
        \item similarly~$B$ is a chain corresponding to~$<_B$ up to reversal, and
        \item the adjacency matrix of~$(A,<_A)$ versus~$(B,<_B)$
          is~$\Mc_s(\sigma)$ for some encoding $s \in \matenc$.
      \end{itemize}
    \item \label{item:side-chain-order} Or a subset $A = \{a_1 <_A \dots <_A a_n\}$ and a chain quasi-order~$\qlt$ such that
      \begin{itemize}[nosep]
        \item $\qlt$ strictly orders $A$,
        \item $A$ is a chain corresponding to~$<_A$ up to reversal, and
        \item the bi-order $(A,\qlt,<_A)$ is isomorphic to either~$\Oc_\sigma$ or~$\Oc_{\sigma^{-1}}$.
      \end{itemize}
  \end{enumerate}
\end{proposition}
\begin{proof}
  Let~$\tau$ be a permutation, which will be chosen as a function of~$\sigma$ so as to apply \cref{lem:perm-ramsey}.
  By \cref{clm:chain-permutation1}, there exists a chain order representation of~$\tau$,
  consisting of disjoint subsets~$A,B$ ordered by~$\qle_1,\qle_2$ respectively,
  such that the adjacency matrix of~$A$ versus~$B$ for these orders
  is~$\Mc_s(\tau)$ for some encoding~$s \in \matenc$.
  Enumerate $A = \{a_1 \qle_1 \dots \qle_1 a_n\}$ and $B = \{b_1 \qle_2 \dots \qle_2 b_n\}$.

  Consider the bijection $\phi : A \to B$ defined as $\phi(a_i) = b_{\tau(i)}$, which is in a sense represented by~$\Mc_s(\tau)$.
  Through this bijection, we can transfer the orderings~$\qle_1$ to~$B$ and~$\qle_2$ to~$A$.
  That is, we define the ordering~$\qle'_2$ of~$A$ and~$\qle'_1$ of~$B$ by
  \begin{align*}
    a_i \qlt'_2 a_j \quad & \iff \quad \phi(a_i) \qle_2 \phi(a_j) \quad \iff \quad \tau(i) < \tau(j) \\
    \text{and} \qquad b_i \qlt'_1 b_j \quad & \iff \quad \phi^{-1}(b_i) \qle_1 \phi^{-1}(b_j) \quad \iff \quad \tau^{-1}(i) < \tau^{-1}(j).
  \end{align*}

  Recall that the biorder~$\Oc_\tau = ([n],<,<_\tau)$ consists of the natural ordering~$<$,
  and the permutated ordering defined by~$i <_\tau j$ if and only if~$\tau(i) < \tau(j)$.
  Observe that~$\Oc_\tau$ is isomorphic to both~$(A,\qle_1,\qle'_2)$ and~$(B,\qle'_1,\qle_2)$,
  through the maps $i \mapsto a_i$ and $i \mapsto b_{\tau(i)}$ respectively.
  On this structure, we define a colouring~$\lambda : [n]^2 \to \{0,1\}^2$
  that combines the adjacency matrices of~$T[A]$ and~$T[B]$, that is
  \[ \lambda(i,j) \eqdef (a,b) \quad \text{where} \quad
    \begin{cases}
      a=1 & \iff \text{ $a_i \to a_j$ is an edge} \\
      b=1 & \iff \text{ $b_{\tau(i)} \to b_{\tau(j)}$ is an edge}
    \end{cases}
  \]

  We now suppose that~$\tau$ was chosen so that \cref{lem:perm-ramsey} can be applied
  to obtain a subset~$Q \subset [n]$ such that~$\Oc_\tau[Q]$ is isomorphic to~$\Oc_\sigma$,
  and the colouring~$\lambda$ restricted to~$Q \times Q$ depends only on the orderings~$<,<_\tau$.

  Consider $A' \eqdef \setst{a_i}{i \in Q}$ and $B' \eqdef \setst{b_{\tau(i)}}{i \in Q}$ be the corresponding subsets of~$A,B$.
  They form a chain representation of~$\Mc_s(\sigma)$.
  Furthermore, the direction of edges in~$T[A']$, respectively~$T[B']$,
  depends only on the orders~$\qle_1,\qle'_2$, respectively~$\qle'_1,\qle_2$.
  By \cref{lem:tournament-order}, this means that~$T[A']$ is a transitive tournament
  whose order corresponds to---in the sense of equal to or inverse of---%
  either~$\qle_1$ or~$\qle'_2$, and similarly in~$B'$.
  We now have to consider two cases.
  \begin{enumerate}
    \item If the edges in~$T[A']$ correspond to the ordering~$\qle_1$,
      and the edges in~$T[B']$ correspond to the order~$\qle_2$, then the chain orders become redundant:
      we have obtained~$\Mc_s(\sigma)$ as adjacency matrix between two chains, up to some reversal of orderings.
      This is case~\labelcref{item:center-matrix} of the statement.
    \item Otherwise, if for instance the edges in~$T[A']$ correspond to the order~$\qle'_2$,
      then~$A'$ is equipped with the orderings~$\qle_1$ (as chain ordering) and~$\qle'_2$ (from the direction of edges),
      which together encode~$\sigma$ as a bi-order.
      This is case~\labelcref{item:side-chain-order} of the statement.

      The situation is similar when the edges of~$T[B']$ correspond to~$\qle'_1$,
      except the roles of the chain ordering and the ordering induced by edges are swapped,
      which corresponds to replacing~$\sigma$ by its inverse. \qedhere
  \end{enumerate}
\end{proof}

\subsection{Merging the two cases}
Next, we reduce case~\ref{item:side-chain-order} of \cref{clm:chain-permutation2} to case~\ref{item:center-matrix} by using \cref{thm:rank-division-extraction} a second time.
To this end, we need to obtain high rank divisions from chain quasi-orderings.

\begin{lemma}\label{lem:rank-minor-chain-order}
  In a tournament~$T$, consider a chain quasi-ordering~$\qlt$ defined by a chain~$C$,
  and an ordered subset of vertices~$(X,<_X)$ disjoint from~$C$
  such that~$\qlt$ is a total ordering on~$X$, and the biorder $(X,\qlt,<_X)$ describes a permutation with a $(k^2+k+1)$-grid.
  Then the adjacency matrix of~$(C,\qlt)$ versus~$(X,<_X)$ has a rank-$k$ division.
\end{lemma}
\begin{proof}
  Consider the chain~$C$ and orientation~$o$ that define the chain ordering~$\qlt$, i.e.~$\qlt$ is~$\qlt_C^o$.
  We assume that the orientation~$o$ is~`$+$', the proof is similar in the case of~`$-$'.
  Call~$\sigma$ the permutation described by the biorder $(X,\qlt,<_X)$, and enumerate~$X$ as $\{x_1 <_X \dots <_X x_n\}$,
  so that $x_{\sigma(i)} \qlt x_{\sigma(j)}$ for all $i<j$.

  Given $x \not\in C$, define~$\phi(x) \in C$ as the smallest~$c \in C$ with regards to~$\qlt$ such that $x \to c$ is an edge, when such a~$c$ exists ($\phi(x)$ remains undefined otherwise).
  By definition of chain quasi-orderings, for~$x,y \not\in C$, we have $x \qlt y$ if and only if either $\phi(x) \qlt \phi(y)$, or~$\phi(x)$ is defined and~$\phi(y)$ is not.
  Thus, denoting $c_i \eqdef \phi(x_{\sigma(i)})$, to ensure $x_{\sigma(i)} \qlt x_{\sigma(j)}$, it must be that either $c_i \qlt c_j$ or~$c_i$ is defined and~$c_j$ is not.
  In particular, $c_i$ is well defined for all~$i < n$ and $c_i \qlt c_j$ for $i < j < n$.
  The choice of~$c_i$ then gives edges $c_i \from x_{\sigma(i)}$ and $c_j \to x_{\sigma(i)}$ for all $i<j<n$.
  Since~$c_n$ is not necessarily defined (in fact $\card{C} = n-1$ is possible), we discard $x_{\sigma(n)}$.

  By \cref{lem:perm-grid-minus-one}, restricting~$\sigma$ to~$[n-1]$ decreases the size of its grid by at most~$1$, leaving a grid of size~$(k^2+k)$.
  It consists of two partitions $A_1 < \dots < A_{k(k+1)}$ and $B_1 < \dots < B_{k(k+1)}$ of~$[n-1]$ into intervals, such that~$\sigma(A_i)$ intersects~$B_j$ for all~$i,j$.
  Define $C_i \eqdef \setst{c_\ell}{\ell \in A_i}$ and $X_i \eqdef \setst{x_\ell}{\ell \in B_i}$ the corresponding subsets of~$C,X$.
  We group them~$k+1$ by~$k+1$ as $C'_s \eqdef \bigcup_{i = (k+1)(s-1)+1}^{(k+1)s} C_i$ and similarly for~$X'_t$.
  To conclude, we only need to prove that the adjacency matrix of~$C'_s$ versus~$X'_t$ has rank at least~$k$ for all $s,t \in [k]$.

  For each~$i \in [k+1]$, there is by assumption some $\ell_i \in A_{(k+1)(s-1)+i}$ such that $\sigma(\ell) \in B_{(k+1)(t-1)+i}$.
  Thus $c_{\ell_i}$ is in~$C'_s$, and $x_{\sigma(\ell_i)}$ is in~$X'_t$ for all~$i \in [k+1]$.
  Also, by choice of~$c_i$, there are edges $c_{\ell_i} \from x_{\sigma(\ell_i)}$,
  and $c_{\ell_j} \to x_{\sigma(\ell_i)}$ whenever $\ell_j < \ell_i$.
  Thus, up to permutation of rows and columns, the matrix of $c_{\ell_1},\dots,c_{\ell_{k+1}}$ versus $x_{\sigma(\ell_1)}, \dots, x_{\sigma(\ell_{k+1})}$
  consists of a diagonal of~0, and a lower triangle filled with~1.
  Since this is a~$k+1$ by~$k+1$ matrix, this ensures that it has rank at least~$k$.
\end{proof}

We can now eliminate the second case of \cref{clm:chain-permutation2}.
\begin{proposition}\label{clm:chain-permutation3}
  For any permutation~$\sigma \in \Perm_n$, there exists some~$T \in \Tc$ with two disjoint ordered sets~$(A,<_A)$ and~$(B,<_B)$,
  such that the direction of edges inside~$A,B$ corresponds to~$<_A,<_B$ possibly up to reversal,
  and the adjacency matrix of~$(A,<_A)$ versus~$(B,<_B)$ is $\Mc_s(\sigma)$ for some encoding $s \in \matenc$.
\end{proposition}
\begin{proof}
  Given the desired permutation~$\sigma$ of size~$k$, consider~$t \eqdef f(k)$ given by \cref{thm:rank-division-extraction} ensuring some $\Mc_s(\sigma)$ inside any rank-$t$ division.
  Define then $g(k) = 2(t^2+t+1)$, and consider any permutation~$\tau$ with a $g(k)$-grid. Note in particular that~$\sigma$ is a pattern of~$\tau$.
  We apply \cref{clm:chain-permutation2} to~$\tau$, yielding one of two possible encoding of~$\tau$.
  In case~\ref{item:center-matrix} of \cref{clm:chain-permutation2}, we obtain exactly the desired structure only for~$\tau$ instead of~$\sigma$.
  Since~$\sigma$ is a pattern of~$\tau$, one can extract from it the same structure for~$\sigma$ instead.
  We may thus assume that case~\ref{item:side-chain-order} of \cref{clm:chain-permutation2} holds.
  That is, there is an ordered set~$(X,<_X)$ and a chain quasi-order~$\qlt$ which totally orders~$X$, and such that~$(X,\qlt,<_X)$ is the biorder representing~$\tau$ or~$\tau^{-1}$.

  Consider the chain~$C$ defining the chain quasi-order~$\qlt$.
  By definition, the order~$\qlt$ inside~$C$ coincides with the direction of edges up to reversal.
  Also, by assumption, the order~$<_X$ inside~$X$ coincides with edges up to reversal.
  Thus in~$C \cap X$, the orderings~$\qlt$ and~$<_X$ are either equal or opposite of each other.
  Defining $X' \eqdef X \setminus C$, \cref{lem:perm-grid-remove-chain} gives that the restricted permutation $(X',\qlt,<_X)$ still contains a grid half as large as that of $(X,\qlt,<_X)$, i.e.\ of size at least~$(t^2+t+1)$.

  We now apply \cref{lem:rank-minor-chain-order} to~$(C,\qlt)$ and $(X', <_X)$ to obtain a rank-$t$ division in their adjacency matrix.
  Then, \cref{thm:rank-division-extraction} yields a copy of~$\Mc_s(\sigma)$ in the same adjacency matrix for some encoding $s \in \matenc$, as desired.
\end{proof}

\subsection{Canonical obstructions}
In this section, we finally extract the obstructions of \cref{sec:obstructions}
from the structure described by \cref{clm:chain-permutation3}.
Here, rather than directly trying to obtain a specific permutation~$\sigma$,
we fix some $k \in \Nn$ and look for any permutation~$\sigma$ containing a $k$-grid.

\begin{proposition}\label{clm:chain-permutation4}
  Consider a tournament~$T$ with disjoint ordered sets of vertices~$(A,<_A)$ and~$(B,<_B)$,
  such that the direction of edges inside~$A,B$ corresponds to~$<_A,<_B$ possibly up to reversal,
  and the adjacency matrix of~$(A,<_A)$ versus~$(B,<_B)$ is~$\Mc_s(\sigma)$ for some encoding $s \in \matenc$ and permutation~$\sigma$ with a $(k+1)$-grid.
  Then~$T$ contains as subtournament $\Fc_{s'}(\sigma')$ for some $s' \in \{=,\le,\ge\}$ and some permutation~$\sigma'$ with a $k$-grid.
\end{proposition}
\begin{proof}
  Firstly, we can assume that the direction of edges inside~$A$ corresponds to the ordering~$<_A$ without reversal.
  Indeed, in the adjacency matrix between~$A$ and~$B$, reversing the ordering~$<_A$ (i.e.\ the ordering of columns) is equivalent to
  (1) pre-composing the permutation~$\sigma$ with $i \mapsto n-i+1$ (preserving the $(k+1)$-grid by \cref{lem:perm-grid-sym}),
  and (2) exchanging encodings~$\le_C$ and~$\ge_C$ (the other encodings are unaffected).
  Thus reversing~$<_A$ preserves the assumptions on the statement.
  A similar reasoning holds for~$<_B$, and we may thus assume that the direction of edges in~$A,B$ exactly matches with~$<_A$ and~$<_B$.

  Next, we can assume that~$s$ is one of~$=,\le_C,\ge_C$.
  If this is not the case, we swap the roles of~$A$ and~$B$.
  This corresponds to transposing the adjacency matrix (since~$A$ represented columns before and rows after), and complementing it, i.e.\ swapping~`0' for~`1' and vice versa (since a~`1' used to represent an edge from~$A$ to~$B$, and now the opposite).
  If the matrix was~$\Mc_{\neq}(\sigma)$ before, this yields~$\Mc_=(\sigma^{-1})$,
  and by \cref{lem:perm-grid-sym}, $\sigma^{-1}$ still has a $(k+1)$-grid.

  If the matrix was~$\Mc_{\le_R}(\sigma)$, the situation is slightly more complex:
  transposing and complementing yields $\Mc_{>_C}(\sigma^{-1})$,
  meaning the matrix defined as $\Mc_{\ge_C}(\sigma^{-1})$
  except for entries~$(i,\sigma^{-1}(i))$ that are~`0' instead of~`1'.
  \begin{claim}\label{clm:flipping-diagonal}
    For any permutation~$\sigma$ of size~$n$, there is a pattern~$\sigma'$ of~$\sigma$ of size~$n-1$
    such that~$\Mc_{\ge_C}(\sigma')$ is a submatrix of~$\Mc_{>_C}(\sigma)$.
  \end{claim}
  \begin{proofofclaim}
    Let~$\sigma'$ be the pattern of~$\sigma$ induced by~$\{1,\dots,n-1\}$.
    Its matrix~$M_{\sigma'}$ is obtained from~$M_\sigma$ by dropping column~$n$ and row~$\sigma(n)$.
    To obtain~$\Mc_{\ge_C}(\sigma')$ from~$\Mc_{>_C}(\sigma)$, it suffice to remove column~1 and row~$\sigma(n)$, both of which contain only~`0's.
    See \cref{fig:flipping-diagonal} for an illustration.
  \end{proofofclaim}
  \begin{figure}[htb]
    \begin{center}
      \begin{tikzpicture}[scale=0.3]
        \foreach \i in {0,...,8}{
          \foreach \j in {0,...,8}{
            \pgfmathtruncatemacro{\jq}{floor(\j/3)}
            \pgfmathtruncatemacro{\jr}{mod(\j,3)}
            \pgfmathtruncatemacro{\sj}{6-\jr*3+\jq}
            \pgfmathsetmacro{\c}{\i >= \sj ? "black" : "white"}
            \fill[\c] (\i,\j) rectangle +(1,1);
          }
        }
        \fill[white!70!red] (0,6) rectangle (9,7);
        \fill[black!50!red] (8,0) rectangle (9,9);
        \draw[white!50!black] (0,0) grid (9,9);

        \begin{scope}[xshift=10cm]
        \foreach \i in {0,...,8}{
          \foreach \j in {0,...,8}{
            \pgfmathtruncatemacro{\jq}{floor(\j/3)}
            \pgfmathtruncatemacro{\jr}{mod(\j,3)}
            \pgfmathtruncatemacro{\sj}{6-\jr*3+\jq}
            \pgfmathsetmacro{\c}{\i > \sj ? "black" : "white"}
            \fill[\c] (\i,\j) rectangle +(1,1);
          }
        }
        \fill[white!70!red] (0,6) rectangle (9,7);
        \fill[white!70!red] (0,0) rectangle (1,9);
        \draw[white!50!black] (0,0) grid (9,9);
        \end{scope}
      \end{tikzpicture}
    \end{center}
    \caption{
      Proof of \cref{clm:flipping-diagonal}.
      Left: the matrix~$\Mc_{\ge_C}(\sigma)$ for $\sigma = 369258147$.
      Removing the 9th column and the row with index $\sigma(9) = 7$, marked in red, yields~$\Mc_{\ge_C}(\sigma')$ where $\sigma' = 36258147$ is a pattern of~$\sigma$.\\
      Right: the matrix~$\Mc_{>_C}(\sigma)$.
      Removing again row~$7$ and this time the 1st column, both of which contain only~`0's, yields exactly the same submatrix~$\Mc_{\ge_C}(\sigma')$.
    }
    \label{fig:flipping-diagonal}
  \end{figure}
  Applying \cref{clm:flipping-diagonal} to the matrix~$\Mc_{>_C}(\sigma^{-1})$
  yields~$\Mc_{\ge_C}(\sigma')$ as submatrix, where~$\sigma'$ is a pattern of~$\sigma^{-1}$ whose size is only one less than that of~$\sigma^{-1}$.
  Then, by \cref{lem:perm-grid-minus-one}, $\sigma'$ contains a grid only one smaller than that of~$\sigma^{-1}$, i.e.~$\sigma'$ has a $k$-grid.
  Thus the case of~$\Mc_{\le_R}(\sigma)$ where~$\sigma$ has a $(k+1)$-grid, reduces to that of~$\Mc_{\ge_C}(\sigma')$ where~$\sigma'$ has a $k$-grid.
  The reduction from~$\ge_R$ to~$\le_C$ is similar.

  Finally, we are in the situation where we have two disjoint chains~$A,B$ ordered by the direction of edges (without reversal),
  and the adjacency matrix between~$A$ and~$B$ is one of~$\Mc_=(\sigma')$, $\Mc_{\le_C}(\sigma')$, or~$\Mc_{\ge_C}(\sigma')$ for some permutation~$\sigma'$ with a $k$-grid.
  This exactly corresponds to the definition of~$\Fc_=(\sigma')$, $\Fc_\le(\sigma')$, and~$\Fc_\ge(\sigma')$ respectively.
\end{proof}

Combining \cref{clm:chain-permutation3,clm:chain-permutation4} gives that for all~$k$, there is a permutation~$\sigma$ with a $k$-grid and an encoding $s \in \{=,\le,\ge\}$ such that $\Fc_s(\sigma)$ is contained in some tournament $T \in \Tc$, and thus $\Fc_s(\sigma) \in \Tc$ too as~$\Tc$ is hereditary.
One of the three encodings~$=,\le,\ge$ must occur for infinitely many values of~$k$,
i.e.\ there is some $s \in \{=,\le,\ge\}$ such that for arbitrary large~$k \in \Nn$, there is a permutation~$\sigma$ with a $k$-grid such that $\Fc_s(\sigma) \in \Tc$.
Any permutation~$\tau$ of size~$k$ is a pattern of such a~$\sigma$, which by \cref{lem:forb-pattern} gives that $\Fc_s(\tau)$ is contained in~$\Fc_s(\sigma)$.
It follows that~$\Fc_s(\tau) \in \Tc$ for all permutation~$\tau$, that is $\Fc_s \subseteq \Tc$, proving \cref{thm:forbidden-tournaments}.

\section{Dense oriented graphs and relational structures}\label{sec:generalisation}
In this final section, we generalise our main result in two directions:
we replace tournaments by oriented graphs with bounded independent sets,
and consider relational structures formed from such oriented graphs augmented by arbitrary binary relations.

\subsection{Generalised binary search trees}
We consider oriented graphs~$D$ (forbidding digons).
The independence number~$\alpha(D)$ is the maximum size of an independent set (subset of vertices with no edges between them) in~$D$.
An oriented graph~$D$ is a tournament if and only if~$\alpha(D) = 1$.
In a sense, oriented graphs whose independence number is bounded by a small constant resemble tournaments;
we will generalise our results to them.

We first generalise the notion of BST to arbitrary oriented graphs.
A BST on an oriented graph~$D$ is a rooted ordered \emph{ternary} tree~$S$
(meaning that each node has a left, centre, and right child, any of which may be missing)
satisfying the following condition.
For any node~$x$ of the tree, let~$L_x,C_x,R_x$ denote the left, centre, and right subtrees of~$x$.
Then
\[ L_x \subseteq N^-(x) \qquad C_x \cap N(x) = \emptyset \qquad L_r \subseteq N^+(x). \]
The left-to-right order~$<_S$ corresponding to the tree is the order satisfying for all node~$x$
\[ L_x <_S x <_S C_x <_S R_x. \]
The choice of placing~$C_x$ after and not before~$x$ here is arbitrary.

Any branch~$B$ of the tree is an acyclic subgraph in which the direction of edges matches~$<_S$.
However, it is not necessarily a chain.
Indeed, if~$x \in B$ is such that the branch~$B$ descends into the central subtree of~$x$,
then~$x$ is non-adjacent to all its descendants in~$B$.
Nonetheless, $B$ can be transformed into a chain at a limited cost when~$\alpha(D)$ is small:
Let~$X \subset B$ be the set of all such centre-branching nodes in~$B$.
Then~$X$ is an independent set and thus~$\card{X} \le \alpha(D)$,
while the rest~$B \setminus X$ of the branch is a chain.

Next, we generalise chain quasi-orders.
For~$C$ a chain in~$D$ enumerated as~$C = \{c_1,\dots,c_k\}$ such that~$c_i \to c_j$ iff~$i < j$,
we define
\[ A_i = \bigcap_{j \le i} N^+(c_i)
  \qquad \text{and} \qquad B_i = A_{i-1} \setminus (N^+(c_i) \cup \{c_i\}). \]
In other words, $A_{i-1}$ is partitioned as $A_i \uplus \{c_i\} \uplus B_i$,
where~$A_i$ contains the out-neighbours of~$c_i$, and~$B_i$ the in-neighbours and non-neighbours.
The chain quasi-order~$\qle_C^+$ is then defined by
\[ B_1 \qlt_C^+ c_1 \qlt_C^+ B_2 \qlt_C^+ c_2 \qlt_C^+ \dots B_k \qlt_C^+ c_k \qlt_C^+ A_k, \]
with $B_1,\dots,B_k$ and~$A_k$ being the equivalence classes.
That is, we order a node~$x$ according to the smallest~$i$ such that~$x \not\in N^+(c_i)$.
The dual quasi-order~$\qle_C^-$ is defined in the same way, but reversing the direction of all edges.

We can now generalise partition extraction (\cref{lem:bst-partition-extraction}) to oriented graphs with bounded~$\alpha$.
\begin{lemma}
  \label{lem:bst-partition-extraction2}
  There is a function~$f(k,\alpha) = 2^{O(k+\alpha)}$ such that
  for any oriented graph~$D$ with independence number~$\alpha$,
  BST ordering~$<_S$ of~$D$, and family~$\Pc$ of at least~$f(k,\alpha)$ disjoint intervals of~$<_S$,
  one can find a chain quasi-ordering~$\qle$ and a subfamily~$\Pc' \subset \Pc$ such that
  $\card{\Pc'} \ge k$ and such that the parts of~$\Pc'$ are non-overlapping for~$\qle$.

  Furthermore, $\Pc'$ as well as the chain~$C$ and orientation~$o \in \{+,-\}$ defining~$\qle$ can be computed in linear time.
\end{lemma}
\begin{proofsketch}
  The proof of \cref{lem:bst-partition-extraction} can be applied to directed graphs with essentially no modification.
  The only issue is that it does not yield a chain quasi-order,
  because the branches of~$S$ are not themselves chains.
  Instead, the quasi-order~$\qle$ compatible with the resulting partition~$\Pc'$
  can be described by the following structure (up to inversion of all edges).
  There is a sequence~$C = \{c_1,\dots,c_k\}$ of vertices.
  Each~$c_i$ has a \emph{type}: either \emph{complete} or \emph{anti-complete}.
  If~$c_i$ is complete, then for all~$j > i$ there is an edge~$c_i \to c_j$.
  If~$c_i$ is anti-complete, then for all~$j > i$ there is no edge between~$c_i$ and~$c_j$.
  We then define subsets~$A_i$ by induction: $A_0$ is $V(T) \setminus C$,
  and~$A_i$ is defined as~$A_{i-1} \cap N^+(c_i)$ when~$c_i$ is complete
  or as~$A_{i-1} \setminus N(c_i)$ when~$c_i$ is anti-complete.
  Then~$B_i$ is defined as~$A_{i-1} \setminus A_i$, and the quasi-order is
  \[ B_1 \qlt_C^+ c_1 \qlt_C^+ B_2 \qlt_C^+ c_2 \qlt_C^+ \dots B_k \qlt_C^+ c_k \qlt_C^+ A_k. \]
  When there are no anti-complete nodes in~$C$, this structure is exactly a chain quasi-order.

  Remark now that the anti-complete nodes in~$C$ induce an independent set.
  It follows that there are at most~$\alpha$ anti-complete nodes.
  Then, we can simply remove any anti-complete node~$c_i$, and remove from~$\Pc'$ any part which intersects~$B_i$,
  at the cost of at most~$\alpha$ parts of~$\Pc'$.
\end{proofsketch}

\subsection{Relational structures over tournaments}
Let us now focus on the second generalisation:
to a tournament~$T = (V,E)$, we add arbitrary binary relations~$R_1,\dots,R_\ell \subseteq V^2$,
yielding a relational structure~$(V,E,R_1,\dots,R_\ell)$.
We will show that the tournament~$(V,E)$ in such structures is enough to obtain our main results,
without any hypothesis on~$R_1,\dots,R_\ell$.
Remark that when the tournament~$(V,E)$ is transitive,
this exactly corresponds to the ordered structures considered in \cite{twin-width4}.
The relations~$R_i$ are required to be binary for twin-width to be well-defined.
Naturally, $(V,E)$ can also be an oriented graph with bounded independence number instead of a tournament.

Formally, we fix~$\Sigma = \{E,R_1,\dots,R_\ell\}$ a binary relational signature with a distinguished relation~$E$,
and consider~$\Sigma$-structures in which~$E$ is interpreted
as an oriented graph with independence number bounded by some constant~$\alpha$.

A chain order representation of the matrix~$M$ in a $\Sigma$-structure~$S$ consists of the following:
\begin{itemize}
  \item two disjoint ordered subsets~$(A,\qle^A)$ and~$(B,\qle^B)$,
    where~$\qle^A,\qle^B$ are two chain quasi-orderings in~$(V,E)$, and
  \item a relation~$R$ among $E,R_1,\dots,R_\ell$ such that
    the adjacency matrix of~$R$ restricted to~$(A,\qle^A)$ versus~$(B,\qle^B)$ is~$M$.
\end{itemize}
If~$\sigma$ is a permutation, we define a chain order representation of~$\sigma$ in~$S$
as being a chain order representation of the matrix~$\Mc_s(\sigma)$ in~$S$
where~$s$ is any of the encodings in $\matenc = \{=,\neq,\le_R,\ge_R,\le_C,\ge_C\}$, see~\cref{sec:perm-matrice-encoding}.

Using \cref{lem:bst-partition-extraction2},
it is straightforward to generalise \cref{thm:forbidden-tournaments-intro}.
\begin{theorem}
  \label{thm:chain-order-extraction-gen}
  Let~$\Sc$ be a hereditary class of $\Sigma$-structures in which~$E$ is interpreted
  as oriented graphs with bounded independence number.
  If there are structures~$S \in \Sc$ with BST orders~$<$ for which~$\tww(S,<)$ is arbitrarily large,
  then~$\Sc$ contains chain order representations of all permutations.
\end{theorem}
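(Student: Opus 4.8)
The plan is to follow the proof of \cref{thm:forbidden-tournaments} in \cref{sec:extraction} almost verbatim, substituting \cref{lem:bst-partition-extraction2} for \cref{lem:bst-partition-extraction}. The only genuinely new ingredient is a preliminary Ramsey reduction from the adjacency matrix over the alphabet $\{0,1\}^\Sigma$ down to a single relation of $\Sigma$, after which the whole pipeline of \cref{sec:extraction} goes through.

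\emph{Reduction to a single relation.} Exactly as in \cref{clm:rich-BST}, since $\tww(S,<)$ is unbounded over BST orders, \cref{thm:grid-theorem} yields, for arbitrarily large $K$, some $(S,<)$ with $S \in \Sc$ and $<$ a BST order whose adjacency matrix $A_{(S,<)}$ has a rank-$K$ division $\Dc$. Each cell of $\Dc$ has at least $K$ distinct rows as vectors over $\{0,1\}^\Sigma$; since $K$ distinct vectors over a product of $\card{\Sigma}$ binary alphabets must take at least $K^{1/\card{\Sigma}}$ distinct values in some coordinate, some relation $R \in \Sigma$ is such that the restriction of that cell to $R$ has at least $K^{1/\card{\Sigma}}$ distinct rows. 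Colour each cell of $\Dc$ by such an $R$. For $K$ large enough, the product Ramsey theorem (already invoked for \cref{lem:perm-ramsey}) produces, inside $\Dc$, a subdivision into intervals of $<$, with arbitrarily many parts, all of whose cells received the same colour $R$. Using the elementary fact that a $0,1$-matrix with $t$ distinct rows has at least $\log_2 t$ distinct columns, every cell of this subdivision is $\log_2(K^{1/\card{\Sigma}})$-diverse in the adjacency matrix of $(S,<)$ for the single relation $R$; merging the discarded parts back into their neighbours, we obtain such a rank-$k$ division of the full $0,1$-matrix, with $k$ and the diversity of its cells as large as we please.

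\emph{Transferring the tournament argument.} Everything in \cref{sec:grids-extraction,sec:chain-representation} now applies with this $0,1$-matrix in place of the adjacency matrix of a tournament, and with the oriented graph $D = (V(S), A^S)$ of bounded independence number $\alpha$ in place of the tournament. The disjointness manipulation of \cref{clm:rich-BST} turns the rank-$k$ division into disjoint intervals $A_1 < \dots < A_k < B_1 < \dots < B_k$ of $<$ with every $A_i$-versus-$B_j$ submatrix $k$-diverse. Applying \cref{lem:bst-partition-extraction2} separately to $(A_i)_i$ and to $(B_j)_j$ — the only point where the bound on $\alpha$ is used — yields, as in \cref{clm:rich-chain-order}, subfamilies of size $k$ that are non-overlapping for chain quasi-orders $\qle^A,\qle^B$ of $D$; these are genuine chain quasi-orders, as \cref{lem:bst-partition-extraction2} has already discarded the anti-complete nodes. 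Finally \cref{clm:chain-permutation1} transfers word for word: \cref{thm:rich-division-extraction} extracts $\Mc_s(\sigma)$ for some $s \in \{=,\neq,\le_R,\ge_R,\le_C,\ge_C\}$ while choosing at most one vertex from each $A_i$ and each $B_j$, and non-overlappingness then forces $\qle^A$ and $\qle^B$ to totally order the chosen sets and to realise exactly the two orders underlying $\Mc_s(\sigma)$. The two chosen sets, the orders $\qle^A,\qle^B$, and the relation $R$ constitute a chain order representation of $\sigma$ in $S$, as required.

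I expect the single-relation reduction to be the only real obstacle, and it is essentially bookkeeping: one must confirm that the two compounded losses — passing from $K$ to $K^{1/\card{\Sigma}}$ when projecting onto a coordinate, and then from $K^{1/\card{\Sigma}}$ to its logarithm when trading row-diversity for column-diversity — still leave the diversity of cells unbounded, which holds since $K$ may be chosen arbitrarily large. Note that, unlike in \cref{thm:forbidden-tournaments}, we do not attempt to extract canonical obstructions analogous to $\Fc_=,\Fc_\le,\Fc_\ge$: in the relational setting these are more numerous and we do not describe them explicitly.
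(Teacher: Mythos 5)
Your proposal is correct and matches the paper's proof, which is precisely to rerun \cref{sec:grids-extraction,sec:chain-representation} verbatim with \cref{lem:bst-partition-extraction2} substituted for \cref{lem:bst-partition-extraction}. Your only addition—the pigeonhole-plus-product-Ramsey reduction from the alphabet $\{0,1\}^\Sigma$ to a single relation, using the observation that a $0,1$-matrix with $t$ distinct rows has at least $\log_2 t$ distinct columns—is a sound filling-in of a bookkeeping step the paper leaves implicit, and does not change the approach.
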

\begin{proofsketch}
  The proof is the same as the first half of \cref{thm:forbidden-tournaments-intro},
  up to \cref{clm:chain-permutation1}:
  Given~$S = (V,E,R_1,\dots,R_\ell)$ and the BST ordering~$<$ such that~$\tww(S,<)$ is large,
  \cref{thm:grid-theorem} (or rather its natural generalisation to ordered binary structures,
  see Theorem~7 and Lemma~24 in~\cite{twin-width4}) yields a high-rank division
  in the adjacency matrix of one of the relations~$E,R_1,\dots,R_\ell$, with rows and columns ordered by~$<$.
  \Cref{lem:bst-partition-extraction2} and the arguments of \cref{sec:approx}
  then allow replacing~$<$ with chain orderings,
  yielding a chain order representation of a matrix with a high-rank division.
  Finally, \cref{thm:rank-division-extraction} extracts an encoding~$\Mc_s(\sigma)$ for any permutation~$\sigma$ of size~$k$,
  with~$k$ an unbounded function of~$\tww(S,<)$.
  Since the latter can be arbitrarily large in~$\Sc$, one obtains chain order representations of all permutations.
\end{proofsketch}
A variant of \cref{lem:chain-order-obstruction} shows that
structures containing chain order representations of all permutations have unbounded twin-width.
Following the arguments of \cref{sec:approx},
this yields an \fpt approximation algorithm for twin-width of such relational structures:
\begin{theorem}
  There are functions~$f,g$ and an algorithm which given a structure $S = (V,E,R_1,\dots,R_\ell)$
  with twin-width~$t$, where~$(V,E)$ is an oriented graph with independence number~$\alpha$,
  computes a contraction sequence of width~$f(t,\alpha,\ell)$ in time~$g(t,\alpha,\ell) \cdot \card{V}^{O(1)}$.
  \label{thm:tww-apx-gen}
\end{theorem}

To obtain the full list of equivalent conditions,
we additionally need to extract canonical obstructions from chain order representations
by generalising the proofs of \cref{sec:extraction}.

Consider a chain order representation of a permutation~$\sigma$ on~$n$ elements.
It consists of four subsets of vertices:
$A,B$ between which the relation~$R$ induces the matrix~$\Mc_s(\sigma)$,
and the chains~$C_A,C_B$ which define the appropriate chain orderings~$\qle^A,\qle^B$.
The proof of \cref{clm:chain-permutation2} using \cref{lem:perm-ramsey} applies with minor modifications to reduce to one of two cases:
\begin{enumerate}
  \item \label{item:center-matrix-gen}
    either the orientation of edges of~$E$ in~$A,B$ coincides with~$\qle^A,\qle^B$ respectively up to reversal,
    and the adjacency matrix of~$(A,\qle^A)$ versus~$(B,\qle^B)$ for the relation~$R$ is~$\Mc_s(\sigma)$,
  \item \label{item:side-chain-order-gen}
    or the orientation of edges of~$E$ in~$A$ defines an ordering~$<_A$ such that the biorder~$(A,\qle_A,<_A)$ represents the permutation~$\sigma$, or similarly with~$B$.
\end{enumerate}
The only difference with the statement of \cref{clm:chain-permutation2} is that in case~\ref{item:center-matrix-gen}, the matrix~$\Mc_s(\sigma)$ can be encoded by any of the relations~$R \in \Sigma$ instead of just~$E$.
A subtlety is that since~$(V,E)$ is now an oriented graph and not a tournament,
applying \cref{lem:perm-ramsey} could also lead to~$A$ or~$B$ being an independent set instead of a chain.
However, since the size of independent sets is restricted, this cannot happen as long as sufficiently large permutations are considered.

Next, the proofs of \cref{lem:rank-minor-chain-order,clm:chain-permutation3} can be adapted with minor changes accounting for non-edges,
to show that case~\ref{item:side-chain-order-gen} above reduces to case~\ref{item:center-matrix-gen}.
Thus, for any permutation~$\sigma$, we can find in some~$S \in \Sc$
two disjoint chains~$(A,<_A)$ and~$(B,<_B)$ where~$<_A,<_B$ coincides with the orientation of edges of~$E$ inside~$A,B$ up to reversal, such that the adjacency matrix of some relation~$R \in \Sigma$ between~$(A,<_A)$ and~$(B,<_B)$ is $\Mc_s(\sigma)$, for some encoding~$s \in \matenc$.
Using the same arguments as \cref{clm:chain-permutation4}, we can assume that the orientation of edges of~$E$ in fact coincides with~$<_A,<_B$ without reversal.

To summarise, for any permutation~$\sigma$, we obtain the following structure in some~$S \in \Sc$ for some encoding~$s \in \matenc$:
disjoint sets $A = \{a_1 <_A \dots <_A a_n\}$ and $B = \{b_1 <_B \dots <_B b_n\}$, and edges
\begin{align}
  & \text{$a_i \to a_j$ and $b_i \to b_j$ in~$E$ for all $i<j$, and} \label{eq:side-orders} \\
  & \text{$a_ib_j \in R$ if and only if there is a~1 at position~$(i,j)$ in~$\Mc_s(\sigma)$.} \label{eq:matrix}
\end{align}
In the case of tournaments, $E = R$ was the only relation in~$\Sigma$ and the above entirely specified the tournament induced by~$A \cup B$, describing the minimal obstructions from \cref{sec:obstructions}.
This is no longer the case: edges of~$E$ between~$A$ and~$B$, of~$R$ inside~$A$ and~$B$, and of all the remaining relations are left unspecified by the previous description.
To obtain canonical obstructions, and particularly to obtain an efficient interpretation from~$\Sc$ to the class of all graphs, we need to eliminate these remaining choices.
To this end, we will once again use \cref{lem:perm-ramsey}.

The setup is similar to the proof of \cref{clm:chain-permutation2}.
Given a structure as above representing~$\tau$,
we consider the biorder $\Oc_\tau = ([n],<,<_\tau)$ and define the colouring $\lambda : [n]^2 \to \Gamma$
where~$\lambda(i,j)$ the substructure induced by $a_i,a_j,b_{\tau(i)},b_{\tau(j)}$.
Here, the set~$\Gamma$ of colours is the set of all $\Sigma$-structures on four vertices.
Given now any desired permutation~$\sigma$, consider~$\tau$ function of~$\sigma$ and~$|\Gamma|$
so that \cref{lem:perm-ramsey} gives a subset $X \subset [n]$ such that
the substructure~$\Oc_\tau[X]$ is isomorphic to~$\Oc_\sigma$,
and~$\lambda$ restricted to~$X$ only depends on the orders~$<$ and~$<_\tau$,
i.e.\ there is a function $\eta : \{-1,0,1\}^2 \to \Gamma$ such that $\lambda(i,j) = \eta(\ot(i,j), \ot_\tau(i,j))$.
Restricting to $A' = \setst{a_i}{i \in X}$ and $B' = \setst{b_{\sigma(i)}}{i \in X}$,
we obtain an encoding of~$\sigma$ in which the substructure induced by $a_i,a_j,b_{\tau(i)},b_{\tau(j)}$ is entirely specified by~$\eta(\ot(i,j), \ot_\tau(i,j))$.

Thus, we now have an encoding of any given permutation~$\sigma$
which in addition to conditions~\cref{eq:side-orders,eq:matrix} above, satisfies
\begin{equation}
  \label{eq:deterministic}
  \text{\parbox{0.85\textwidth}{%
  There is a map $\eta : \{-1,0,1\}^2 \to \Gamma$ such that
    the adjacencies of vertices $a_i,a_j,b_{\sigma(i)},b_{\sigma(j)}$ are as specified by $\eta(\ot(i,j), \ot_\sigma(i,j))$.}}
\end{equation}
The choice of~$\eta$ and~$\sigma$ entirely specifies this structure, which we call~$\Fc_\eta(\sigma)$.
Some choices of~$\eta$ are impossible, as they can either be self contradictory (there is some redundant information in this encoding), or contradict conditions~\cref{eq:side-orders,eq:matrix}.
Denote by $\Hc \subset \{-1,0,1\} \to \Gamma$ the set of valid choices of~$\eta$.
Each $\eta \in \Hc$ yields an encoding~$\Fc_\eta(\sigma)$ of any given permutation~$\sigma$,
and we call~$\Fc_\eta$ the hereditary closure of all~$\Fc_\eta(\sigma)$.

From \cref{thm:chain-order-extraction-gen} and the previous arguments, we obtain the following:
\begin{lemma}
  \label{lem:chain-order-extraction-gen}
  Let~$\Sc$ be a hereditary class of $\Sigma$-structures in which~$E$ is interpreted
  as an oriented graph with independence number bounded by~$\alpha$.
  If~$\Sc$ has unbounded twin-width, then it contains as a subset~$\Fc_\eta$ for some~$\eta \in \Hc$.
\end{lemma}

We finally need to show that~$\Fc_\eta$ is an obstruction to the properties we are interested in.
Efficiently interpreting all graphs is the most technical step.
\begin{lemma}
  \label{lem:forb-gen-interpret-perm}
  Each of the classes of obstructions~$\Fc_\eta$ for $\eta \in \Hc$ efficiently interprets the class~$\Oc_\Perm$ of all biorders.
\end{lemma}
\begin{proofsketch}
  The technique is the same as in the case of tournaments in \cref{lem:forb-interpret-perm}.
  Consider some encoding~$\Fc_\eta(\sigma)$ of an arbitrary permutation~$\sigma$, with vertex set~$A \uplus B$
  as described by conditions~\cref{eq:side-orders,eq:matrix,eq:deterministic}.
  If given a way to distinguish vertices in~$A$ from vertices in~$B$,
  then one can order~$A$ and~$B$ according to the edges of~$E$, and read the matrix~$\Mc_s(\sigma)$ between them in the relation~$R$.
  It is then simple to reconstruct the biorder~$\Oc_\sigma$ with a first-order interpretation with the same method as \cref{lem:forb-interpret-perm} (see \cite[Lemma~40]{twin-width4}).

  The difficulty is thus to distinguish~$A$ from~$B$.
  There are several cases to consider depending on the edges of~$E$ specified by~$\eta$ between~$A$ and~$B$.
  In each case, we construct a slightly larger structure~$S \in \Fc_\eta$, and show that from~$S$,
  one can interpret~$\Fc_\eta(\sigma)$ augmented by unary predicates to distinguish~$A$ and~$B$.

  The rest of this proof focuses on~$E$, and not the other relations $R_1,\dots,R_\ell$.
  In a structure~$\Fc_\eta(\sigma')$, we enumerate vertices as $A = \{a_1 <_A a_2 <_A \dots\}$ and $B = \{b_1 <_B b_2 <_B \dots \}$
  as in the definition of~$\Fc_\eta(\sigma')$ above.

  \subsubsection*{When there are many non-edges}
  In~$\Fc_\eta(\sigma)$, for all choices of~$i,j \in [n]$ satisfying $i < \sigma^{-1}(j)$ and $\sigma(i) < j$,
  the presence of an edge between~$a_i$ and~$b_j$ and its direction depend only on~$\eta$.

  First, let us assume that~$a_ib_j$ is a non-edge for all such~$i,j$.
  Then, we embed~$\sigma$ in a permutation~$\sigma' \in \Perm_{n+2}$ by setting $\sigma'(1) = 1$, $\sigma'(n+2) = n+2$, and $\sigma'(i+1) = \sigma(i)+1$.
  In~$\Fc_\eta(\sigma')$, we have that~$a_1b_j$ is a non-edge for all~$j>1$, and so is~$a_ib_{n+2}$ for all~$i<n+2$.
  Also, the non-neighbourhood of any~$a_i$ (resp.~$b_j$) is contained in~$B$ (resp.~$A$).
  Delete now the vertices~$a_{n+2}$ and~$b_1$. The resulting structure is still in~$\Fc_\eta$, as it is a hereditary class.
  Now $A$ and~$B$ are the non-neighbourhoods of~$b_{n+2}$ and~$a_1$, and are the only pair of non-neighbourhoods which cover the entire vertex set.
  This uniquely defines~$A$ and~$B$, and can be expressed in first-order logic.
  It then suffices to forget~$a_1$ and~$b_{n+2}$ (the minimum of~$A$ and maximum of~$B$) to retrieve~$\Fc_\eta(\sigma)$ while knowing the bipartition~$A,B$.

  Observe that the structure described above is entirely determined by~$\eta$ and~$\sigma$, and is easily computed in polynomial time.
  Thus we obtain an efficient interpretation of~$\Oc_\Perm$ from~$\Fc_\eta$.

  Similar arguments apply when there are non-edges between~$a_i$ and~$b_j$ for the other three combinations of inequalities between~$i,\sigma^{-1}(j)$, and~$\sigma(i),j$: one only needs to swap the roles of~$a_1$ with~$a_{n+2}$, and~$b_1$ with~$b_{n+2}$ as required.
  We can thus assume that there is an edge in~$E$ between~$a_i$ and~$b_j$ for all~$i,j$ except possibly $j = \sigma(i)$.

  \subsubsection*{When there is an anti-matching}
  Assume now that~$a_ib_{\sigma(i)}$ is a non-edge for all~$i$.
  Without loss of generality, assume that whenever~$i,j$ satisfy $i < \sigma^{-1}(j)$ and $j < \sigma(i)$, then $a_i \to b_j$ is an edge.
  (in the opposite case, the roles of~$A$ and~$B$ can be swapped in the following argument).
  Define now~$\sigma' \in \Perm_{n+1}$ by $\sigma'(i) = \sigma(i)+1$ and $\sigma'(n+1) = 1$.
  Thus~$a_i \to b_1 \to b_i$ are edges for all~$i > 1$.
  Delete now~$a_{n+1}$, so that~$b_1$ is the unique vertex without a non-neighbour, and the partition~$A,B$ is given by the in-neighbourhood and out-neighbourhood of~$b_1$.
  This allows to describe~$A,B$ in first-order logic.
  Once again, the previous structure can be constructed in polynomial time from~$\sigma$, hence~$\Fc_\eta$ efficiently interprets~$\Oc_\Perm$.

  These two first cases deal with all possible non-edges in~$E$ between~$A$ and~$B$.
  Thus, we can now assume~$E$ to be a tournament, and not just an oriented graph.

  \subsubsection*{When~$E$ is a total order}
  Consider now the case where~$\eta$ specifies that edges in~$E$ are always oriented from~$A$ to~$B$, or vice versa.
  Then the relation~$E$ is simply a total ordering, making~$\Fc_\eta$ a class of ordered structures with unbounded twin-width.
  In this case, we can invoke the known results on ordered structures to conclude~\cite[Theorem~42]{twin-width4}.

  \subsubsection*{When~$E$ has large twin-width}
  We are only left with the case where~$E$ is a tournament, and the edges between~$A$ and~$B$ are not all oriented in the same direction.
  Consider the permutation~$\tau = 132$.
  One can pick~$i,j \in \{1,2,3\}$ to realise all possible pairs of relative orderings of~$i,\tau^{-1}(j)$ on the one hand, and~$\tau(i),j$ on the other.
  Since the direction of edges between~$A,B$ only depends on these relative orderings, and both directions must be feasible, it follows that~$\Fc_\eta(\tau)$ must contain edges going in both directions between~$A$ and~$B$.

  Now consider a permutation~$\sigma \in \Perm_n$ with a sufficiently large grid,
  and consider the substitution of~$\tau$ in~$\sigma$,
  that is the permutation $\sigma' \in \Perm_{3n}$ defined by
  \[ \sigma'(3(i-1)+j) = 3(\sigma(i)-1) + \tau(j). \]
  In~$\Fc_\eta(\sigma')$, observe that the sets $\{a_{3i+1},a_{3i+2},a_{3i+3}\}$ and $\{b_{3j+1},b_{3j+2},b_{3j+3}\}$ are non-homogeneous if and only if $j = \sigma(i)$.
  This implies that the adjacency matrix of~$(A,<_A)$ versus~$(B,<_B)$ for the relation~$E$ contains a high rank division.
  In other words, the tournament~$(A \cup B, E)$ itself has large twin-width, without having to consider the other relations~$R_1,\dots,R_\ell$.

  Thus we can use the results of \cref{sec:extraction} to reduce to the case where in~$\Fc_\eta(\sigma)$,
  the tournament~$(A \cup B,E)$ is one of the obstruction $\Fc_=(\sigma)$, $\Fc_\le(\sigma)$, $\Fc_\ge(\sigma)$ described in \cref{sec:obstructions}.
  We then conclude using the efficient interpretation from tournament obstructions given by \cref{lem:forb-interpret-perm}.
\end{proofsketch}

The other hardness properties for~$\Fc_\eta$ are easily obtained.
\begin{lemma}
  \label{lem:chain-order-obstruction-gen}
  Each of the classes of obstructions~$\Fc_\eta$, $\eta \in \Hc$
  \begin{enumerate}[noitemsep]
    \item has unbounded twin-width,
    \item contains at least~$\frac{(n/2)!}{c^n}$ structures on~$n$ vertices counted up to isomorphism
      for some constant~$c$,
    \item efficiently interprets the class of all graphs,
    \item and has an \aw-hard FO model checking problem.
  \end{enumerate}
\end{lemma}
\begin{proof}
  For each~$\Fc_\eta$ and any permutation~$\sigma \in \Perm_n$,
  there is a $\Sc \in \Fc_\eta$ consisting of two chains~$A,B$ of length~$n$ each,
  encoding~$\sigma$ as described by conditions~\cref{eq:side-orders,eq:matrix}.
  After guessing the two subsets~$A,B$ (at most~$2^{2n}$ choices), one can deterministically reconstruct~$\sigma$.
  This implies that there are at least~$\frac{n!}{4^n}$ non-isomorphic structures on~$2n$ vertices in~$\Fc_\eta$, proving~(2).
  It follows by \cref{thm:small} that~$\Fc_\eta$ has unbounded twin-width~(1).
  Finally, the efficient interpretation~(3) is provided by \cref{lem:forb-gen-interpret-perm,lem:perm-inter-graph}, 
  and implies model checking hardness~(4) by \cref{lem:interpretation-reduction}.
\end{proof}

The main result follows from \cref{lem:chain-order-extraction-gen,lem:chain-order-obstruction-gen}.
\begin{theorem}
  Let~$\Sc$ be a class of $\Sigma$-structures in which~$A$ is interpreted
  as oriented graphs with independence number at most some constant~$k$.
  Under the assumption $\fpt \neq \aw$, the following are equivalent:
  \begin{enumerate}[noitemsep]
    \item $\Sc$ has bounded twin-width,
    \item FO model checking in~$\Sc$ is \fpt,
    \item FO model checking in~$\Sc$ is not \aw-hard,
    \item $\Sc$ does not interpret the class of all graphs,
    \item $\Sc$ is monadically NIP,
    \item $\Sc$ contains at most~$c^n$ structures on~$n$ vertices for some constant~$c$, counted up to isomorphism.
  \end{enumerate}
\end{theorem}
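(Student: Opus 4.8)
The plan is to assemble the machinery of \cref{sec:bst,sec:extraction}, generalised in \cref{lem:bst-partition-extraction2} and \cref{thm:chain-order-extraction-gen}, in exactly the same pattern used to derive \cref{thm:main-thm} for tournaments. The only new ingredients replacing their tournament analogues are \cref{thm:chain-order-extraction-gen} (in place of \cref{thm:forbidden-tournaments-intro}) and the theorem stated just after it, asserting that a class containing chain order representations of all permutations has unbounded twin-width, super-exponential growth, efficiently interprets all graphs, and has \aw{}-hard FO model checking (in place of \cref{thm:forbidden-tournaments-bad}). Everything else is a consequence of general twin-width results already recalled in \cref{sec:prelim}.

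First I would establish the analogue of \cref{lem:BST-twinwidth}: there is a function $f$, depending only on $\Sigma$ and $k$, with $\tww(S,<) \le f(\tww(S))$ for every $\Sigma$-structure $S$ whose relation $A$ has independence number at most $k$ and every BST order $<$ on $A^S$. Fix $m$ and let $\Sc_m$ be the class of all such $\Sigma$-structures with twin-width at most $m$; this class is hereditary, since passing to an induced substructure can neither increase twin-width nor increase the independence number of $A$. By the theorem following \cref{thm:chain-order-extraction-gen}, $\Sc_m$ cannot contain chain order representations of all permutations, as it has bounded twin-width. Hence by the contrapositive of \cref{thm:chain-order-extraction-gen} there is a uniform bound $f(m)$ on $\tww(S,<)$ over all $S \in \Sc_m$ and all BST orders $<$, which is the desired inequality. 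The approximation algorithm then goes as for \cref{thm:approx-algo}: given $S$, compute a BST order on $A^S$ in polynomial time (greedily pick a root $x$, split the remaining vertices into $N^-(x)$, $V \setminus N(x)$, $N^+(x)$, and recurse), and run the \fpt{} ordered-structure algorithm of \cref{thm:grid-theorem} on $(S,<)$ with parameter $k$. A contraction sequence of width $f(k)$ for $(S,<)$ is a fortiori one for $S$, witnessing $\tww(S)\le f(k)$; a rank-$k$ division of $A_{(S,<)}$ gives $\tww(S,<)\ge k$ by \cref{thm:grid-theorem}, hence $f(\tww(S))\ge k$ by the inequality above, a certified lower bound on $\tww(S)$. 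This proves the \textbf{furthermore} part, and also the implication (bounded twin-width)\,$\Rightarrow$\,(\fpt{} FO model checking): choosing $k$ above the twin-width bound of $\Sc$ always produces a contraction sequence, to which \cref{thm:FO-fpt} applies.

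For the remaining forward implications from bounded twin-width: (\fpt{})\,$\Rightarrow$\,(not \aw{}-hard) is immediate under $\textrm{FPT}\neq\textrm{AW}[*]$; interpretations and transductions preserve bounded twin-width by \cref{thm:tww-transduction}, so a class of bounded twin-width neither interprets nor transduces all graphs, giving (does not interpret all graphs) and (monadically NIP); and the $c^n$ bound on the number of structures up to isomorphism is \cref{thm:exponential}. Conversely, suppose $\Sc$ has unbounded twin-width. Forgetting the order relation turns any contraction sequence for $(S,<)$ into one for $S$ of no greater width, so $\tww(S)\le\tww(S,<)$ for every BST order, and therefore there are $S\in\Sc$ with BST orders of arbitrarily large twin-width. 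By \cref{thm:chain-order-extraction-gen}, $\Sc$ then contains chain order representations of all permutations, and by the theorem following it, $\Sc$ has at least $(\floor{n/c})!$ structures on $n$ vertices up to isomorphism for some constant $c$, efficiently interprets---hence transduces---all graphs, and has \aw{}-hard FO model checking. These contradict conditions (6), (4) and (5), and (3) and (2) respectively, so any of those conditions forces bounded twin-width. Combining the two directions yields the equivalence.

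I expect the only delicate points to be bookkeeping rather than mathematics: verifying that $\Sc_m$ is hereditary so that the contrapositive of \cref{thm:chain-order-extraction-gen} applies to it, keeping track that the bounds depend on $\Sigma$ and $k$ as claimed, and using the inequality $\tww(S,<)\le f(\tww(S))$ in the correct direction when converting a rank-$k$ division into a lower bound on $\tww(S)$. The genuinely new content---extending BST orders, chain quasi-orders, and the partition-extraction lemma to oriented graphs of bounded independence number---was already carried out in \cref{lem:bst-partition-extraction2} and \cref{thm:chain-order-extraction-gen}, so this final theorem is pure assembly of those results with the general twin-width toolkit.
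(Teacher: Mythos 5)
Your proposal is correct and follows essentially the same route as the paper, which derives this theorem by assembling \cref{thm:chain-order-extraction-gen} and the subsequent hardness theorem with the general twin-width toolkit (\cref{thm:FO-fpt,thm:tww-transduction,thm:exponential,thm:grid-theorem}), exactly mirroring the tournament case: the BST-order inequality via the hereditary class of structures of twin-width at most $m$, the approximation algorithm from that inequality, and the converse via extraction of chain order representations. The only nit is that a rank-$k$ division certifies $\tww(S,<)\ge k'$ only for $k\ge g(k')$ in \cref{thm:grid-theorem}, a function-bookkeeping point you already flag and which does not affect the argument.
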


\section*{Acknowledgements}
The authors are extremely grateful to Édouard Bonnet for stimulating discussions on the topics of this work,
and Szymon Toruńczyk for helpful explanation on model theoretic aspects of \cref{sec:approx}.
We would also like to thank the anonymous reviewers for their very helpful remarks and suggestions.

\printbibliography

\end{document}